\newtheorem{assumption}{Assumption}
\newtheorem{corollary}{Corollary}
\newtheorem{lemma}{Lemma}
\newtheorem{theorem}{Theorem}
\newtheorem{definition}{Definition}
\newtheorem{example}{Example}
\newtheorem{remark}{Remark}
\def\bstr{b}
\def\bfstr{bf}
\def\cstr{c}
\def\fstr{f}
\def\strLst{A,B,C,D,d,E,F,G,H,I,J,K,L,M,N,O,P,Q,R,S,T,U,V,W,X,Y,Z}
\newcommand{\MkB}[1]{\expandafter\def\csname\bstr#1\endcsname{\mathbb{#1}}}
\strLst\do{%
    \expandafter\MkB \i     }
\newcommand{\MkBF}[1]{\expandafter\def\csname\bfstr#1\endcsname{\mathbf{#1}}}
\strLst\do{%
    \expandafter\MkBF \i     }
\newcommand{\MkCal}[1]{\expandafter\def\csname\cstr#1\endcsname{\mathcal{#1}}}
\strLst\do{%
    \expandafter\MkCal \i     }
\newcommand{\MkFrak}[1]{\expandafter\def\csname\fstr#1\endcsname{\mathfrak{#1}}}
\strLst\do{%
    \expandafter\MkFrak \i     }
\newcommand{\Lin}[1]{\mathop{\mathsf{Lin}}(#1)}
\newcommand{\LinAc}[1]{\overline{\mathsf{Lin}}(#1)}
\newcommand{\LinEq}[1]{\overline{\mathsf{Lin}}(#1)_{\sim}}
\newcommand{\ac}[1]{\mathsf{#1}} %
\newcommand{\Shift}{\mathsf{Shift}}
\newcommand{\Trans}{\mathsf{Trans}}
\newcommand{\MatchGT}[3]{\mathsf{M}^{{\text{\tiny $#1$}}}_{#2}(#3)}
\newcommand{\RMatchGT}[3]{\mathcal{M}^{{\text{\tiny $#1$}}}_{#2}(#3)}
\newcommand{\obs}[1]{\mathsf{obs}(#1)}
\newcommand{\pB}[1]{\mathsf{PB}(#1)}
\newcommand{\pO}[1]{\mathsf{PO}(#1)}
\newcommand{\mono}[1]{\mathsf{mono}(#1)}
\newcommand{\epi}[1]{\mathsf{epi}(#1)}
\newcommand{\mor}[1]{\mathsf{mor}(#1)}
\newcommand{\iso}[1]{\mathsf{iso}(#1)}
\newcommand{\obj}[1]{\mathsf{obj}(#1)}
\newcommand{\mIO}{\mathop{\varnothing}}
\newcommand{\canRep}[2]{\rho^{#1}_{\bfC}\left(#2\right)}
\newcommand{\bra}[1]{\left\langle #1\right\vert}
\newcommand{\ket}[1]{\left\vert #1\right\rangle}
\newcommand{\braket}[2]{\left\langle \left. #1 \right\vert #2\right\rangle}
\newcommand{\MOD}{\textsc{M\O{}D}}
\newcommand{\KAP}{\textsc{Kappa}}
\newcommand{\CHEM}{\textsc{Chem}}
\newcommand{\compGT}[4]{#2 {}^{#3}\!{\triangleleft}_{#1} #4}
\newcommand{\rap}[3]{#2 \star_{#1}{#3}}
\newcommand{\rrap}[3]{#2 \,\overline{\star}_{#1}\,{#3}}
\renewcommand{\vec}[1]{\underline{#1}}
\newcommand{\cond}[1]{\mathsf{cond}(#1)}
\newcommand{\jcOp}[1]{\hat{\bO}(#1)}
\newcommand{\Prob}[1]{\mathsf{Prob}(#1)}
\def\Ag{\Sigma_{\mathsf{ag}}}
\def\St{\Sigma_{\mathsf{ste}}}
\def\AgSt{\Sigma_{\mathsf{ag-ste}}}
\def\Prp{\Sigma_{\mathsf{prop}}}
\def\mc#1{\mathcal{#1}}
\def\tAg{\mathsf{type}}
\def\set#1{\{#1\}}
\def\pto{\mathrel{\rightharpoonup}}
\def\StSt{\Sigma_{\mathsf{ste}-\mathsf{ste}}}
\def\StP{\Sigma_{\mathsf{ste}-\mathsf{prop}}}
\newcommand{\ti}[1]{%
 \ensuremath{\vcenter{\hbox{\includegraphics{diagrams/#1.pdf}}}}%
}
\newcommand{\mi}[1]{%
 \includegraphics{diagrams/mod-diagrams/#1.pdf}%
}
\newcommand{\miScaled}[2]{%
 \includegraphics[scale=#2]{diagrams/mod-diagrams/#1.pdf}%
}
\newcommand{\aligntemp}{}
\colorlet{h1color}{blue!70!black} %
\colorlet{h2color}{orange!90!black} %
\colorlet{h3color}{blue!40!white} %
\colorlet{h4color}{green!40!black} %
  \def\corref#1{}%
\begin{document}

\title{Rewriting Theory for the Life Sciences:\\A Unifying Theory of CTMC Semantics\\
(Long version)\thanks{This is a long version (including additional results and background materials) of the \href{https://staf2020.hvl.no/events/icgt2020/}{ICGT~2020} conference paper~\cite{BK2020} (cf.\ Appendix~\ref{sec:confToExt} for further details).}}

\author[1, \faEnvelopeO]{Nicolas Behr}

\author[1]{Jean Krivine}

\author[2]{Jakob L.\ Andersen}

\author[2]{Daniel Merkle}

\affil[1]{Université de Paris, CNRS, IRIF, F-75006, Paris, France}
	
\affil[2]{Department of Mathematics and Computer Science, University of Southern Denmark, Odense M DK-5230, Denmark}

\affil[\faEnvelopeO]{Corresponding author; nicolas.behr@irif.fr}

\maketitle

\begin{abstract}
The \KAP{} biochemistry and the M\O{}D organic chemistry frameworks are amongst the most intensely developed applications of rewriting-based methods in the life sciences to date. A typical feature of these types of rewriting theories is the necessity to implement certain structural constraints on the objects to be rewritten (a protein is empirically found to have a certain signature of sites, a carbon atom can form at most four bonds, ...). In this paper, we contribute a number of original developments that permit to implement a universal theory of continuous-time Markov chains (CTMCs) for stochastic rewriting systems. Our core mathematical concepts are a novel rule algebra construction for the relevant setting of rewriting rules with conditions, both in Double- and in Sesqui-Pushout semantics, augmented by a suitable stochastic mechanics formalism extension that permits to derive dynamical evolution equations for pattern-counting statistics. A second main contribution of our paper is a novel framework of restricted rewriting theories, which comprises a rule-algebra calculus under the restriction to so-called constraint-preserving completions of application conditions (for rules considered to act only upon objects of the underlying category satisfying a globally fixed set of structural constraints). This novel framework in turn renders a faithful encoding of bio- and organo-chemical rewriting in the sense of \KAP{} and M\O{}D possible, which allows us to derive a rewriting-based formulation of reaction systems including a full-fledged CTMC semantics as instances of our universal CTMC framework. While offering an interesting new perspective and conceptual simplification of this semantics in the setting of \KAP{}, both the formal encoding and the CTMC semantics of organo-chemical reaction systems as motivated by the M\O{}D framework  are the first such results of their kind.
\end{abstract}

\section{Motivation}

One of the key applications that rewriting theory may be considered for in the life sciences is the theory of continuous-time Markov chains (CTMCs) modeling complex systems. %
In fact, since Delbr\"{u}ck's seminal work on autocatalytic reaction systems in the 1940s~\cite{Delbr_ck_1940}, %
the mathematical theory of chemical reaction systems has effectively been formulated as a rewriting theory in disguise, %
namely via the rule algebra of discrete graph rewriting~\cite{bp2019-ext}. %
In the present paper, we provide the necessary technical constructions in order to consider the CTMCs and analysis methods of relevance for more general types of compositional rewriting theories with conditions, %
with key examples provided in the form of  \emph{biochemical graph rewriting} in the sense of the \KAP{} framework (\url{https://kappalanguage.org})~\cite{Boutillier:2018aa}, %
and \emph{(organo-) chemical graph rewriting} in the sense of the \MOD{} framework (\url{https://cheminf.imada.sdu.dk/mod/})~\cite{Andersen_2016}. %

The present paper aims to serve two main purposes: the first consists in providing an extension of the existing category-theoretical rule-algebra frameworks~\cite{bp2018,bp2019-ext,nbSqPO2019} by the rewriting theoretical design feature of incorporating rules with conditions as well as constraints on objects (Section~\ref{sec:ra}). %
These technical developments then form the basis for a novel stochastic mechanics framework, providing a universal semantics of CTMCs based upon stochastic rewriting theory for rules with conditions (Section~\ref{sec:stochMech}). %

The second main theme of this paper concerns the practical implementation of rewriting-based CTMCs, with a particular focus on the application scenarios of bio- and organo-chemical reaction systems. A crucial prerequisite for efficiently implementing the enormously intricate structural constraints imposed by the chemical theories within rewriting theory, we introduce a novel restricted rule-algebraic rewriting framework (Section~\ref{sec:rrt}; cf.\ Section~\ref{sec:confToExt} for an overview). We then proceed to introduce the first-of-their-kind fully faithful encodings of the semantics of biochemical reaction systems (Section~\ref{sec:bcgr}) and of organo-chemical reaction systems (Section~\ref{sec:ocgr}), both within the rule-algebraic restricted rewriting framework.

\section{High-level overview of key concepts and results}\label{sec:hlo} 
\begin{figure}
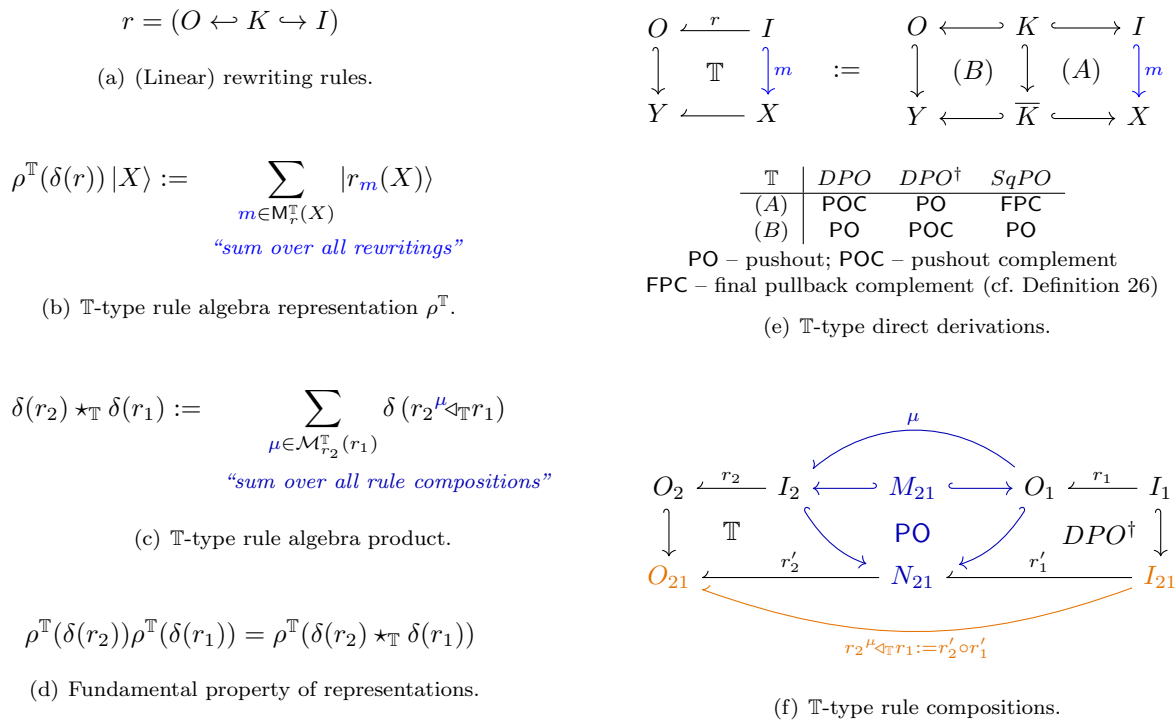

\centering
 \begin{minipage}[t]{.475\linewidth}
 \subfigure[\label{fig:overview-rules}(Linear) rewriting rules.]{%
	$
	\begin{array}{c}
	\hphantom{XXXX}r=(O\hookleftarrow K\hookrightarrow I)\hphantom{XXXX}\\[1em]
	\end{array}
	$
}%
\\[1.5em]
\subfigure[\label{fig:overview-rar}$\bT$-type rule algebra representation $\rho^{\bT}$.]{%
	$\displaystyle{\rho^{\bT}(\delta(r))\ket{X}:=\underset{\begin{array}{c}
	\text{{\footnotesize\emph{{\color{blue}``sum over all rewritings''}}}}\\[1em]\end{array}}{\sum_{{\color{blue}m}\in \MatchGT{\bT}{r}{X}} \ket{r_{\color{blue}m}(X)}}}$
}%
\\[1.5em]
\subfigure[\label{fig:overview-rap}$\bT$-type rule algebra product.]{%
	$\displaystyle{\rap{\bT}{\delta(r_2)}{\delta(r_1)}
		:=\underset{\begin{array}{c}\text{{\footnotesize \emph{{\color{h1color}``sum over all rule compositions''}}}}\\[1em]\end{array}}{\sum_{{\color{h1color}\mu}\in \RMatchGT{\bT}{r_2}{r_1}}
			\delta\left(\compGT{\bT}{r_2}{{\color{h1color}\mu}}{r_1}\right)}}$
}%
\\[1.5em]
\subfigure[\label{fig:overview-rp}Fundamental property of representations.]{%
	$\displaystyle{
	\begin{array}{c}
	\rho^{\bT}(\delta(r_2))\rho^{\bT}(\delta(r_1))
	=\rho^{\bT}(\rap{\bT}{\delta(r_2)}{\delta(r_1)})\\[1em]
	\end{array}}$
}%
 \end{minipage}
  \begin{minipage}[t]{.5\linewidth}
  \subfigure[\label{fig:overview-dd}$\bT$-type direct derivations.]{%
	$\begin{array}{c}
		\ti{DD1intro}\quad :=\quad 
		\ti{DD2intro}
		\\
		\\
		{\footnotesize
		\begin{array}{c|ccc}
		\bT & DPO & DPO^{\dag} & SqPO\\
		\hline
		(A) & \mathsf{POC} & \mathsf{PO} & \mathsf{FPC}\\
		(B) & \mathsf{PO} & \mathsf{POC} & \mathsf{PO}
		\end{array}}\\[0.75em]
		\text{{\footnotesize $\mathsf{PO}$ -- pushout; $\mathsf{POC}$ -- pushout complement}}\\[-0.25em]
		 \text{{\footnotesize $\mathsf{FPC}$ -- final pullback complement (cf.\ Definition~\ref{def:FPC})}}\\[0.5em]
		\end{array}$
}%
\\[1.5em]
\subfigure[\label{fig:overview-rc}$\bT$-type rule compositions.]{%
	$\begin{array}{c}
	\ti{cd-A-intro}\\\vphantom{X}
		\end{array}$
}%
  \end{minipage}
\caption{Overview of the rule algebra framework for the setting of rewriting with ``plain'' rules in Double Pushout (DPO)~\cite{bp2018} and Sesqui-Pushout (SqPO) semantics~\cite{nbSqPO2019}.}
\label{fig:overviewRAtheory}
\end{figure}

The core principle behind the materials presented in this paper is the so-called \emph{rule algebra framework} for categorical rewriting theories in the sense of~\cite{bp2018,nbSqPO2019}. The essential mathematical ideas of the original form of this framework for the setting of rewriting theories without constraints or conditions on rules (i.e., for theories based upon ``plain'' rules) are collected in Figure~\ref{fig:overviewRAtheory}.

\paragraph{Basic categorical rewriting theory} \emph{(Linear) rewriting rules} (Figure~\ref{fig:overview-rules}) are encoded as spans of monomorphisms; a given rewriting rule $r=(O\hookleftarrow K\hookrightarrow I)$ consists of an \emph{input object} $I$, an \emph{output object} $O$, and a \emph{kontext object} $K$, together with embeddings of $K$ into $O$ and $I$. With the precise details depending on the chosen semantics, such a rule loosely speaking encodes rewriting operations where a copy of $I$ is to be picked in a target graph, of which only the image of $K\hookrightarrow I$ is retained, followed by extending $K$ into $O$ as encoded in $K\hookrightarrow O$. The mathematically precise definition of the action of some rule $r$ onto a target object, typically referred to as a \emph{direct derivation}, is provided in Figure~\ref{fig:overview-dd} (for Double Pushout (DPO), its ``reverse'' variant $DPO^{\dag}$, and Sesqui-Pushout (SqPO) semantics). For instance, when considering the rewriting of directed multigraphs, the choice of SqPO- or DPO-semantics controls in effect whether or not vertex deletion operations can implicitly delete incident edges, respectively. Notationally, it is customary to write $r_{{\color{blue}m}}(X):=Y$ for the object that results from a DPO- or SqPO-type direct derivation as in Figure~\ref{fig:overview-dd}.

\paragraph{Non-determinism in rule applications} Since in general an object $X$ may be rewritten via some rewriting rule $r$ in multiple ways (each of which is given via a so-called \emph{admissible match} ${\color{blue}m}\in \MatchGT{\bT}{r}{X}$, for $\bT\in \{DPO,SqPO\}$), this allows for an interesting mathematical operation as depicted in Figure~\ref{fig:overview-rar}: letting $\ket{X}$ denote a basis vector of some $\bR$-vector space ``over'' objects, one may define a linear operator $\rho^{\bT}(\delta(r))$ that maps $\ket{X}$ to the ``sum over all (outcomes of) rewritings''. While there would in principle be many possible choices for the concrete semantics of this intuitive operation feasible, the particular choice employed in the rule algebra framework (cf.\ Section~\ref{sec:ra} for further details) amounts to considering ``states'' $\ket{X}$ as indexed by objects up to isomorphisms, and rule algebra elements $\delta(r)$ as indexed by rules up to isomorphisms. In this sense, the operation depicted in Figure~\ref{fig:overview-rar} amounts to a form of ``book-keeping'' of possible outcomes of applying rule $r$ to object $X$, with outcomes classified by isomorphisms (so that the coefficients when evaluating the sum over outcomes amount to non-negative integers encoding the numbers of ways a given isomorphism class of objects can be obtained via application of $r$ to $X$).

\paragraph{Non-determinism in rule compositions} A fundamentally new aspect of rule algebra theory as compared to conventional categorical rewriting theory is centered upon the intuitively evident observation that rewriting rules may \emph{interact} with each other within \emph{sequences} of direct derivations. Concretely, as depicted in Figure~\ref{fig:overview-rc}, one may in a certain sense ``classify'' the interaction of two consecutively applied rewriting rules via a \emph{partial overlap} of the input object $I_2$ of the second with the output object $O_1$ of the first rule (encoded as a span of monomorphisms ${\color{h1color}\mu}=(I_2{\color{h1color}\hookleftarrow M_{21}\hookrightarrow }O_1)$). Reversing the argument, one may determine certain technical conditions (which depend on the type $\bT$ of the rewriting semantics) under which a given partial overlap ${\color{h1color}\mu}$ is causally possible (then denoted ${\color{h1color}\mu}\in \MatchGT{\bT}{r_2}{r_1}$), so that one may compute the \emph{($\bT$-type) composite rule} (denoted $\compGT{\bT }{r_2}{{\color{h1color}\mu}}{r_1}$) along the overlap. Since for two given rewriting rules there may in general be many choices for admissible partial overlaps ${\color{h1color}\mu}$ possible, one may once again rely upon the idea of ``book-keeping'' these possible choices in the form of a ``sum over all possible compositions''. The concrete choice of semantics for this operation realized in the rule algebra framework is depicted in Figure~\ref{fig:overview-rap}: an $\bR$-vector space with basis vectors $\delta(r)$ indexed by isomorphism classes of rules is introduced, upon which the so-called \emph{($\bT$-type) rule algebra product} $\rap{\bT}{\delta(r_2)}{\delta(r_1)}$ of two basis vectors $\delta(r_2)$ and $\delta(r_1)$ is defined via a sum over all admissible compositions of the two rules (considered up to isomorphisms). Endowing an $\bR$-vector space with a bi-linear binary operation such as $\rap{\bT}{.}{.}$ yields what is referred to in the general mathematics literature as an algebra, hence the moniker \emph{rule algebras}.

\paragraph{The algorithmic essence of rule algebra theory} Under certain technical assumptions upon the base categories over which the rewriting theories are defined (cf.\ Section~\ref{ssc:crtwc} for further details), the aforementioned rule algebra products $\rap{\bT}{}{}$ and the linear operators $\rho^{\bT}(\delta(r))$ as depicted in Figure~\ref{fig:overview-rar} are guaranteed to satisfy certain mathematical properties that are quintessential in view of algorithmic developments. The property most important in view of the formal definition and static analysis of stochastic rewriting systems is depicted in Figure~\ref{fig:overview-rp}: for any computation that requires computing all possible two-step derivation sequences along rule $r_2$ after rule $r_1$, the \emph{representation property} entails that one may instead first compute all possible rule compositions of $r_2$ with $r_1$ (encoded in the rule algebra product $\rap{\bT}{\delta(r_2)}{\delta(r_1)}$), followed by determining the ways in which the constituent composite rules may be applied. The crucial practical value of the latter type of computation consists in the type of static analysis of rewriting-based continuous-time Markov chains as introduced in Theorem~\ref{thm:CTMCmev}, which relies upon so-called \emph{commutators}; letting $[\delta(r_2),\delta(r_1)]_{\rap{\bT}{}{}}:=\rap{\bT}{\delta(r_2)}{\delta(r_1)}-\rap{\bT}{\delta(r_1)}{\delta(r_2)}$, such a commutator computes the \emph{difference} in all possible ways of composing rule $r_2$ with $r_1$ minus the composition of $r_1$ with $r_2$. The true computational gain achieved by the rule algebra framework consists in the fact that in practice often only very few terms remain in computing commutators, encoding essentially those compositions only possible in a given order. We refer the interested readers to Example~\ref{ex:HWex} for a concise illustration of this phenomenon for the case of the rule algebra arising from rewriting systems over vertex-only graphs.\\

\paragraph{Rewriting theory in the life sciences} A major obstacle for applying rewriting-based and in particular rule-algebraic techniques to the modeling of complex bio- and organo-chemical reaction systems is the intricate nature of the encodings of data structures and rewriting semantics in these theories. Concretely, while \emph{molecules} in these settings are formalizable as typed (undirected) graphs, and with reactions thus formulated as rewriting rules of such graphs, the concrete encoding requires certain \emph{structural constraints} on the graphs; the preservation of such constraints in turn entails that rewriting rules must be endowed with so-called \emph{application conditions}. Referring to Section~\ref{sec:cond} for the precise details, a formalism of constraints and conditions had been available in the categorical rewriting literature since the pioneering work of Habel and Pennemann~\cite{habel2009correctness}, yet it was only recently demonstrated in~\cite{behrRaSiR} that (under certain technical conditions) this extended categorical rewriting framework possesses the \emph{compositionality properties} that are necessary in order to formulate suitable rule algebra frameworks. In this paper, we finally assemble a complete rule algebra framework for both DPO- and SqPO-type semantics and for fully general rewriting rules with conditions. Our novel theoretical contributions are organized into two main parts:
\begin{itemize}
\item Sections~\ref{sec:ra} and~\ref{sec:stochMech} contain a complete account of the general rule algebra theory, including a theory of continuous-time Markov chains for stochastic rewriting systems based upon rewriting rules with conditions (and possibly constraints on objects). 
\item Section~\ref{sec:rrt} contains a (technically rather intricate) refinement of the general theory to the special cases of rewriting theories in which the conditions on rules arise from the requirement of preserving (globally fixed) constraints on objects. Such types of \emph{restricted rewriting theories} are demonstrated to admit algorithmic implementations of the rule-algebraic operations that are considerably more tractable than in the general setting, and that are in fact the relevant setting in particular for the aforementioned modeling applications in the life sciences.
\end{itemize}

Finally, the second part of this paper is devoted to applications of our novel rule algebra formalism in its restricted rewriting theory variant to biochemistry (Section~\ref{sec:bcgr}) and to organo-chemistry (Section~\ref{sec:ocgr}). To the best of our knowledge, this is the first-of-its-kind rewriting-theoretical formalization of the syntactic definitions of the relevant \KAP{}~\cite{Boutillier:2018aa} and \MOD{}~\cite{Andersen_2016} frameworks.

\section{Compositional rewriting theories with conditions}\label{ssc:crtwc}

The well-established \emph{Double-Pushout (DPO)}~\cite{ehrig:2006fund} and \emph{Sesqui-Pushout (SqPO)}~\cite{Corradini_2006} frameworks for rewriting systems over categories with suitable adhesivity properties~\cite{lack2005adhesive,ehrig2004adhesive,GABRIEL_2014,ehrig2014mathcal} provide a principled and very general  foundation for rewriting theories. %
However, in practice many applications require the rewriting of objects that may not be interpreted directly as the objects of some adhesive category, but which instead may be obtained from a suitable ``ambient'' category via the notion of \emph{constraints} on objects. %
Together with a corresponding notion of \emph{constraint-preserving application conditions} on rewriting rules, this approach yields a versatile extension of rewriting theory. In the DPO setting, this modification had been well-known~\cite{habel2009correctness,ehrig:2006fund,ehrig2014mathcal,ehrig2012m}, while it has been only very recently introduced for the SqPO setting~\cite{behrRaSiR}. For the \emph{rule algebra} constructions presented in the main part of this contribution, we require in addition a certain \emph{compositionality} property of our rewriting theories (established for the DPO case in~\cite{bp2018,bp2019-ext}, for the SqPO case in~\cite{nbSqPO2019}, and for both settings augmented with conditions in~\cite{behrRaSiR}). 

\subsection{Category-theoretical prerequisites}

We collect in~\ref{sec:MACapp} some of the salient concepts on $\cM$-adhesive categories and the relevant notational conventions. Throughout this paper, we will make the following assumptions:
\begin{assumption}\label{as:main}
    $\bfC\equiv(\bfC,\cM)$ is a finitary $\cM$-adhesive category with $\cM$-initial object, $\cM$-effective unions and epi-$\cM$-factorization. In the setting of \emph{Sesqui-Pushout (SqPO) rewriting}, we assume in addition that all final pullback complements (FPCs) along composable pairs of $\cM$-morphisms exist, and that $\cM$-morphisms are stable under FPCs. For $\bT\in \{DPO,SqPO\}$, we use the convenient \emph{shorthand notation} $\bfC\in\cM-\mathbf{CAT}_{\bT}$ as a shorthand for $\bfC\equiv(\bfC,\cM)$ satisfying the version of the assumption relevant to type-$\bT$ rewriting.
\end{assumption}

Many applications of practical interest, including both of the main application examples presented within this paper, are formulated in terms of (typed variants of) undirected multigraphs, giving rise to one of the principal examples of a category which is $\cM$-adhesive, but \emph{not} adhesive. It is precisely this latter fact which emphasizes the need of the modern standard formulation of rewriting theory in terms of the level of generality offered by the framework of $\cM$-adhesive categories.

\begin{definition}\label{def:ugraph}
    Let $\cP^{(1,2)}:\mathbf{Set}\rightarrow \mathbf{Set}$ be the restricted powerset functor (mapping a set $S$ to the set of its subsets $P\subset S$ with $1\leq |P|\leq 2$). The category $\mathbf{uGraph}$~\cite{bp2019-ext} of \emph{finite undirected multigraphs} is defined as the finitary restriction of the comma category $(ID_{\mathbf{Set}},\cP^{(1,2)})$. Thus an undirected multigraph is specified as a triple of data $G=(E_G,V_G,i_G)$, where $E_G$ and $V_G$ are (finite) sets of edges and vertices, respectively, and where $i_G:E_G\rightarrow \cP^{(1,2)}(V_G)$ is the edge-incidence map.
\end{definition}

\begin{theorem}
    $\mathbf{uGraph}$ satisfies Assumption~\ref{as:main}, both for the DPO- and for the extended SqPO-variant.
\end{theorem}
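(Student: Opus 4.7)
The plan is to verify each clause of Assumption~\ref{as:main} by reducing it to the corresponding well-known property of $\mathbf{Set}$, exploiting the comma-category presentation $\mathbf{uGraph}\hookrightarrow (ID_{\mathbf{Set}},\cP^{(1,2)})$. I would take $\cM$ to be the class of morphisms $f:G\to H$ whose underlying maps on edges and vertices are injective (equivalently, the componentwise monomorphisms). The unifying observation driving the argument is that $\cP^{(1,2)}$ preserves pullbacks along injective maps: the preimage of a set of cardinality $1$ or $2$ under an injection is again of cardinality $1$ or $2$. This ensures that pullbacks and pushouts of $\cM$-morphisms in $\mathbf{uGraph}$ can be computed componentwise in $\mathbf{Set}$, with the induced incidence map uniquely determined.

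First I would establish $\cM$-adhesivity by invoking the comma-category construction principles used for the DPO half of the result in~\cite{bp2019-ext}: adhesivity of $\mathbf{Set}$ together with the above preservation property transports the $\cM$-van Kampen condition into $\mathbf{uGraph}$. Finitarity is immediate from the restriction to finite edge- and vertex-sets. The $\cM$-initial object is the empty graph $(\mIO,\mIO,\mIO)$, whose unique morphism into any graph is componentwise injective. The $\cM$-effective unions property reduces to the corresponding fact in $\mathbf{Set}$ applied separately to edge- and vertex-components, using that pushouts of $\cM$-morphisms are componentwise. For epi-$\cM$-factorizations, I would take the componentwise image factorization of $\mathbf{Set}$ and verify that the incidence map on the image is uniquely and consistently induced by $\cP^{(1,2)}$.

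For the SqPO-specific clauses I would construct final pullback complements explicitly. Given a composable $\cM$-pair $A\hookrightarrow B\hookrightarrow C$, the FPC in $\mathbf{Set}$ exists and is computed componentwise on edges and vertices; the incidence map on the resulting object is determined by functoriality of $\cP^{(1,2)}$ applied to the vertex component, together with the pullback square defining the edges. Universality in $\mathbf{uGraph}$ then follows from universality in $\mathbf{Set}^{\to}$ plus the uniqueness of the lifted incidence structure, and stability of $\cM$ under FPCs is inherited from the corresponding stability in $\mathbf{Set}$ since injectivity is preserved componentwise.

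The main obstacle I anticipate is the bookkeeping for the SqPO clause: one must verify that the candidate FPC object obtained by gluing the componentwise FPCs in $\mathbf{Set}$ carries a well-defined edge-incidence map making the resulting square simultaneously a pullback and \emph{final} among pullback complements within $\mathbf{uGraph}$ (not merely within the underlying pair of $\mathbf{Set}$-slices). Handling the edges whose endpoints become identified, and ensuring no spurious edges or missing incidence data arise, is the technical crux. All remaining clauses dispatch by direct appeal to properties of $\mathbf{Set}$ combined with the pullback-preservation of $\cP^{(1,2)}$ along injections.
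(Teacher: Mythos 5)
Your overall architecture coincides with the paper's proof: $\cM$-adhesivity, finitarity, the $\cM$-initial object and $\cM$-effective unions are imported from~\cite{bp2019-ext}; the epi-$\cM$-factorization is assembled from the componentwise epi-mono-factorizations in $\mathbf{Set}$ using that $\cP^{(1,2)}$ preserves monomorphisms (the paper runs this through a pullback and a uniqueness-of-factorization argument, but your ``incidence map uniquely and consistently induced on the image'' is the same mechanism, and it does go through); and the FPC clause is handled by an explicit construction. Up to the SqPO clause, your sketch is sound and matches the paper.

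The SqPO half, however, contains a genuine error: final pullback complements in $\mathbf{uGraph}$ are \emph{not} computed componentwise, and the componentwise candidate you propose to glue is in general not even an object of $\mathbf{uGraph}$. For a composable pair $A\hookrightarrow B\hookrightarrow D$ of $\cM$-morphisms, the componentwise FPCs in $\mathbf{Set}$ are $V_D\setminus(V_B\setminus V_A)$ on vertices and $E_D\setminus(E_B\setminus E_A)$ on edges, but an edge in the latter set may be incident to a deleted vertex (one in $V_B\setminus V_A$), so the restricted incidence map fails to land in $\cP^{(1,2)}(V_C)$. The paper's construction therefore cuts the edge set down by hand:
\begin{equation*}
V_C = V_D\setminus(V_B\setminus V_A)\,,\qquad
E_C=\{e\in E_D\setminus(E_B\setminus E_A)\mid u_D(e)\in \cP^{(1,2)}(V_C)\}\,,\qquad u_C=u_D\vert_{E_C}\,,
\end{equation*}
which is precisely the dangling-edge deletion distinguishing SqPO from DPO semantics. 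Note that the other reading of your sentence---defining $E_C$ solely by the pullback condition $u_D(e)\in\cP^{(1,2)}(V_C)$---fails as well, since it would retain images of edges of $B\setminus A$ whose endpoints survive (e.g.\ for a pure edge-deletion rule); both restrictions are needed simultaneously. Your anticipated obstacle (``edges whose endpoints become identified'') misdiagnoses the phenomenon: all morphisms in the square are componentwise injective, so no identification can occur; the crux is endpoint \emph{deletion}. For the same reason, finality cannot be inherited from the $\mathbf{Set}$-components as you suggest: since $E_C$ is strictly smaller than the componentwise edge FPC, the universal property must be verified directly in $\mathbf{uGraph}$ (the mediating morphism into $C$ exists because, by the pullback property of a competing complement $C'$, every vertex of $C'$ maps into $V_C$ and hence every edge of $C'$ satisfies the incidence condition defining $E_C$).
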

\begin{proof}
	As demonstrated in~\cite{bp2019-ext}, $\mathbf{uGraph}$ is indeed a finitary $\cM$-adhesive category with $\cM$-initial object and $\cM$-effective unions, for $\cM$ the class of component-wise monic $\mathbf{uGraph}$-morphisms. It thus remains to prove the existence of an epi-$\cM$-factorization as well as the properties related to FPCs. To this end, utilizing the fact that the category $\mathbf{Set}$ upon which the comma category $\mathbf{uGraph}$ is based possesses an epi-mono-factorization, we may construct the following diagram from a $\mathbf{uGraph}$-morphism $\varphi=(\varphi_E,\cP^{(1,2)}(\varphi_V))$ (for component morphisms $\varphi_E:E\rightarrow E'$ and $\varphi_V:V\rightarrow V'$):
\begin{equation}
    \ti{uGraph1}
\end{equation}
The diagram is constructed as follows:
\begin{enumerate}
\item Perform the epi-mono-factorizations $\varphi_E=m_E\circ e_E$ and $\varphi_V=m_V\circ e_V$, and apply the functor $\cP^{(1,2)}$ in order to obtain the morphisms $\cP^{(1,2)}(e_V)$ and $\cP^{(1,2)}(m_V)$; since  the functor $\cP^{(1,2)}$ preserves monomorphisms~\cite{padberg2017towards}, $\cP^{(1,2)}(m_V)\in \mono{\mathbf{Set}}$.
\item Construct the pullback 
\[(E'{\color{h1color}\leftarrow P\rightarrow}\cP^{(1,2)}(\overline{V})):=\pB{E'\rightarrow \cP^{(1,2)}(V')\leftarrow \cP^{(1,2)}(\overline{V})}\,,
\]
Since monomorphisms are stable under pullback in $\mathbf{Set}$, having proved that $\cP^{(1,2)}(m_V)\in \mono{\mathbf{Set}}$ implies ${\color{h1color}(p_E:P\rightarrow E')}\in \mono{\mathbf{Set}}$.
\item By the universal property of pullbacks, there exists a morphism ${\color{h1color}(p:E\rightarrow P)}$. Let $p={\color{h2color}m_P\circ e_P}$ be the epi-mono-factorization of this morphism. 
\item By stability of monomorphisms under composition in $\mathbf{Set}$, we find that ${\color{h1color}p_E}\circ {\color{h2color}m_P}\in \mono{\mathbf{Set}}$, and consequently $\varphi_E=({\color{h1color}p_E}\circ {\color{h2color}m_P})\circ {\color{h2color}e_P}$ yields an alternative epi-mono-factorization of $\varphi_E$. Then by uniqueness of epi-mono-factorizations up to isomorphism, there must exist an isomorphism $(\overline{E}{\color{h2color}\rightarrow \overline{\overline{E}}})\in \iso{\mathbf{Set}}$.
\end{enumerate}
We have thus demonstrated that both $(e_E,\cP^{(1,2)}(e_V))$ and $(m_E,\cP^{(1,2)}(m_V))$ are morphisms in $\mathbf{uGraph}$. Since morphisms in comma categories are mono-, epi- or iso-morphisms if they are so component-wise~\cite{ehrig:2006fund}, we conclude that 
\[
(e_E,\cP^{(1,2)}(e_V))\in\epi{\mathbf{uGraph}}\,,\quad (m_E,\cP^{(1,2)}(m_V))\in\mono{\mathbf{uGraph}}\,,
\]
which finally entails that we have explicitly constructed an epi-mono-factorization of the $\mathbf{uGraph}$-morphism $(\varphi_E,\cP^{(1,2)}(\varphi_V))$. 

In order to demonstrate that FPCs along pairs of composable $\cM$-morphisms $\varphi_A,\varphi_B\in\cM$ in $\mathbf{uGraph}$ exist (for $\cM$ the class of component-wise monomomophic $\mathbf{uGraph}$ morphisms), we provide the following explicit construction:
\begin{equation}
\begin{array}{c|c}
\ti{uGraph2}\hphantom{X} & \hphantom{X}
    \begin{aligned}
        V_C&=V_D\setminus(V_B\setminus V_A)\\
        E_C&=\{e\in E_D\setminus (E_B\setminus E_A)\mid
        u_D(e)\in\cP^{(1,2)}(V_C)\}\\
        u_C&=u_D\vert_{E_C}\\
        \varphi_C&=(E_A\hookrightarrow E_C,\cP^{(1,2)}(V_A\hookrightarrow V_C))\\
        \varphi_D&=(E_C\hookrightarrow E_D,\cP^{(1,2)}(V_C\hookrightarrow V_D))
    \end{aligned}
    \end{array}
\end{equation}
\end{proof}

\subsection{Conditions}\label{sec:cond}

\begin{definition}
    \emph{Conditions}\cite{habel2009correctness,ehrig2014mathcal} in an $\cM$-adhesive category $(\bfC,\cM)$ satisfying Assumption~\ref{as:main} are recursively defined for every object $X\in \obj{\bfC}$ as follows:
    \begin{enumerate}
        \item $\ac{true}_X$ is a condition.
        \item Given $(f:X\hookrightarrow Y)\in \cM$ and a condition $\ac{c}_Y$, $\exists(f,\ac{c}_Y)$ is a condition.
        \item If $\ac{c}_X$ is a condition, so is $\neg \ac{c}_X$.
        \item If $\ac{c}_X^{(1)},\ac{c}_X^{(2)}$ are conditions, so is $\ac{c}_X^{(1)}\land \ac{c}_X^{(2)}$.
    \end{enumerate}
    The \emph{satisfaction} of a condition $\ac{c}_X$ by a $\cM$-morphism $(h:X\hookrightarrow Z)\in \cM$, denoted $h\vDash \ac{c}_X$, is recursively defined (with notations as above) as follows:
    \begin{enumerate}
        \item $h\vDash\ac{true}_X$.
        \item $h\vDash \exists(f,\ac{c}_Y)$ iff there exists an $\cM$-morphism $(g:Y\hookrightarrow Z)\in \cM$ such that $h=g\circ f$ and $g\vDash Y$.
        \item $h\vDash \neg\ac{c}_X$ iff $h\not{\vDash} \ac{c}_X$.
        \item $h\vDash (\ac{c}_X^{(1)}\land \ac{c}_X^{(2)})$ iff $h\vDash \ac{c}_X^{(1)}$ and $h\vDash \ac{c}_X^{(2)}$.
    \end{enumerate}
   \begin{equation}
\ti{existCondSat}
    \end{equation}
    Two conditions $\ac{c}_X$ and $\ac{c}_X'$ are \emph{equivalent}, denoted $\ac{c}_X\equiv \ac{c}_X'$, iff for every $\cM$-morphism $(h:X\hookrightarrow Z)\in \cM$, $h\vDash \ac{c}_X$ if and only if $h\vDash\ac{c}_X'$.
    \medskip

    Finally, a condition $\ac{c}_{\mIO}$ over the $\cM$-initial object $\mIO$ is called a \emph{constraint}, and we define for every object $Z\in \obj{\bfC}$
    \begin{equation}
        Z\vDash \ac{c}_{\mIO} \quad :\Leftrightarrow \quad (\mIO\hookrightarrow Z)\vDash \ac{c}_{\mIO}\,.
    \end{equation} 
\end{definition}

We will utilize as a \textbf{notational convention} the standard shorthand notations
\begin{equation}
    \exists(X\hookrightarrow Y):=\exists(X\hookrightarrow Y,\ac{true}_Y)\,,\quad
    \forall(X\hookrightarrow Y,\ac{c}_Y):=\neg\exists(X\hookrightarrow Y,\neg\ac{c}_Y)\,.
\end{equation}
It is conventional to refer to a condition as a \emph{constraint} if it is formulated over the $\cM$-initial object $\mIO$. As for arbitrary objects $X\in \obj{\bfC}$ by definition of $\cM$-initiality there exists precisely one morphism $(\mIO\hookrightarrow X)\in \cM$ from the $\cM$-initial object $\mIO$, it is customary to employ the additional simplification of notations
\begin{equation}\label{eq:constrNotation}
\exists(X,\ac{c}_X) := \exists(\mIO\hookrightarrow X,\ac{c}_X)\,.
\end{equation}
For example, the constraints
\begin{equation*}
    \ac{c}_{\mIO}^{(1)}=\exists(\ti{c1})\,,\quad \ac{c}_{\mIO}^{(2)}=\not \exists(\ti{c2})\,,\quad
\ac{c}_{\mIO}^{(3)}=\forall(\ti{c3},\exists(
\ti{c4}\hookrightarrow 
\ti{c5}))
\end{equation*}
express for a given object $Z\in\obj{\bfC}$ that $Z$ contains at least two vertices (if $Z\vDash\ac{c}_{\mIO}^{(1)}$), that $Z$ does not contain parallel pairs of directed edges (if $Z\vDash\ac{c}_{\mIO}^{(2)}$), and that for every directed edge in $Z$ there also exists a directed edge between the same endpoints with opposite direction (if $Z\vDash\ac{c}_{\mIO}^{(3)}$), respectively.\\

In practical applications of the calculus of constraints and application conditions, it is of key importance to be able to extend conditions into larger contexts (via the so-called $\Shift$ operation), and to ``transport'' conditions across rules (via the so-called $\Trans$ operation). We refer the interested readers to Definition~\ref{def:Rcomp} for the key application of the two operations (i.e., in defining the operation of sequential composition of rewriting rules with conditions), as well as to Example~\ref{ex:sgraph-cpc} for a detailed illustration of the uses of the two operations in computations of application conditions for rewriting rules.
\begin{definition}[\cite{habel2009correctness,behrRaSiR}]\label{def:shiftTrans}
In an $\cM$-adhesive category satisfying the version of Assumption~\ref{as:main} appropriating for the type $\bT\in\{DPO,SqPO\}$ of rewriting, we define the \emph{shift operation}, denoted $\Shift$, and the \emph{transport operation}, denoted $\Trans$, via their effect on conditions:
    \begin{itemize}
    \item For all conditions $\ac{c}_X$ and for all $\cM$-morphisms $(f:X\hookrightarrow Y)\in \cM$, $(g:Y\hookrightarrow Z)\in \cM$ and $(h:X\hookrightarrow Z)\in \cM$ with $h=g\circ f$,
    \begin{equation}
        h\vDash \ac{c}_X\quad \Leftrightarrow \quad g\vDash \Shift(f,\ac{c}_X)\,.
    \end{equation}
    \item For all linear rules $r=(O\leftharpoonup I)\in \Lin{\bfC}$, for all conditions $\ac{c}_O\in \cond{\bfC}$ and for all $\bT$-admissible matches $(m:X\hookleftarrow I)\in \MatchGT{\bT}{r}{X}$,
    \begin{equation}
    	(r_m(X)\hookleftarrow O)\vDash \ac{c}_O\quad  \Leftrightarrow \quad m\vDash \Trans(r,\ac{c}_O)\,. \label{eq:TransProps}
    \end{equation}
\end{itemize}
\end{definition}

Crucially, there exist algorithmic implementations for both of these constructions in DPO-rewriting~\cite{habel2009correctness} as well as in SqPO-rewriting~\cite{behrRaSiR}:

\begin{theorem}[\cite{habel2009correctness,behrRaSiR}]\label{thm:STdefns}
With notations as in Definition~\ref{def:shiftTrans}, the action of the operation $\Shift$ on an application condition $\ac{c}_X\in \cond{\bfC}$ over some object $X$ along an $\cM$-morphism $(y:X\hookrightarrow Y)$ is implemented concretely via the following recursive algorithm:
\begin{itemize}
\begin{subequations}
\item \emph{Case $\ac{c}_X=\ac{true}$:} 
\begin{equation}\label{eq:ShiftAlgo-triv}
\Shift(f,\ac{true}):=\ac{true}
\end{equation}
\item \emph{Case $\ac{c}_X=\exists(a:X\hookrightarrow A,\ac{c}_A)$ (for $a\in \cM$):} 
\begin{equation}\label{eq:ShiftAlgo-iter}
\vcenter{\hbox{\ti{thmShiftAlgorithm}}}\qquad 
\vcenter{\hbox{$\begin{array}{rl}
\Shift({\color{blue}y}, \exists ({\color{red} a, \ac{c}_A}))
&:= 
\bigwedge\limits_{\substack{{\color{h1color}(a',x',y')}\in \cM^{\times\:3}\\ 
{\color{h1color}a'}\circ {\color{h1color}x'}= {\color{red}a}
\land {\color{h1color}y'}\circ {\color{h1color}x'}= {\color{blue}y}}}
 \exists({\color{h2color}\bar{a}}, 
 \Shift({\color{h3color}\bar{y}},{\color{red}\ac{c}_A}))\\
&\quad\text{with }{\color{h2color}A_{X'}}:=
 \pO{{\color{red}A}{\color{h1color}\xhookleftarrow{a'}X'\xhookrightarrow{y'}}{\color{blue}Y}}
\end{array}$}}
\end{equation}
\item \emph{Case $\ac{c}_X = \ac{c}_X^{(1)}\land \ac{c}_X^{(2)}$:} 
\begin{equation}\label{eq:ShiftAlgo-lor}
\Shift(y,\ac{c}_X^{(1)}\land \ac{c}_X^{(2)}) := \Shift(y,\ac{c}_X^{(1)})\land \Shift(y,\ac{c}_X^{(2)})
\end{equation}
\item \emph{Case $\ac{c}_X = \neg\ac{c}_X'$:} 
\begin{equation}\label{eq:ShiftAlgo-neg}
\Shift(y,\neg\ac{c}_X') := \neg \Shift(y,\ac{c}_X')
\end{equation}
\end{subequations}
\end{itemize}
Moreover, the action of the $\Trans$ operation on a condition $\ac{c}_O\in \cond{\bfC}$ along a linear rule $r=(O\hookleftarrow K\hookrightarrow I)\in \Lin{\bfC}$ is algorithmically implementable as follows:
\begin{itemize}
\begin{subequations}
\item \emph{Case $\ac{c}_O=\ac{true}$:} 
\begin{equation}\label{eq:TransAlgo-triv}
\Trans(r,\ac{true}):=\ac{true}
\end{equation}
\item \emph{Case $\ac{c}_O=\exists(a:O\hookrightarrow A,\ac{c}_A)$ (for $a\in \cM$):} 
\begin{equation}\label{eq:TransAlgo-iter}
\begin{gathered}
\vcenter{\hbox{\ti{thmTransAlgorithm}}}\\ 
\vcenter{\hbox{$
\Trans({\color{blue}r}, \exists ({\color{red} a, \ac{c}_A}))
:= \begin{cases}
\ac{false} \quad &\text{if ${\color{h3color}\mathsf{POC}}$ does not exist}\\
\exists({\color{h4color}b},\Trans({\color{red}A}{\color{h3color}\hookleftarrow K_A}{\color{h4color}\hookrightarrow B},{\color{red}\ac{c}_A})) &\text{otherwise.}
\end{cases}
$}}
\end{gathered}
\end{equation}
\item \emph{Case $\ac{c}_O = \ac{c}_O^{(1)}\land \ac{c}_O^{(2)}$:} 
\begin{equation}\label{eq:TransAlgo-lor}
\Trans(r,\ac{c}_O^{(1)}\land \ac{c}_O^{(2)}) := \Trans(r,\ac{c}_O^{(1)})\land \Trans(r,\ac{c}_O^{(2)})
\end{equation}
\item \emph{Case $\ac{c}_O = \neg\ac{c}_O'$:} 
\begin{equation}\label{eq:TransAlgo-neg}
\Trans(r,\neg\ac{c}_O') := \neg \Trans(r,\ac{c}_O')
\end{equation}
\end{subequations}
\end{itemize}
\end{theorem}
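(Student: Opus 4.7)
The proof proceeds by structural induction on the shape of the condition, exploiting the recursive definitions of satisfaction given in Definition~\ref{def:shiftTrans}. The logical cases $\ac{c}_X^{(1)}\land\ac{c}_X^{(2)}$ and $\neg\ac{c}_X'$ are immediate: they follow by direct substitution from the definition of satisfaction together with the inductive hypothesis, which is why equations~(\ref{eq:ShiftAlgo-lor})--(\ref{eq:ShiftAlgo-neg}) and (\ref{eq:TransAlgo-lor})--(\ref{eq:TransAlgo-neg}) simply distribute $\Shift$ and $\Trans$ over the Boolean connectives. The base case $\ac{true}$ is tautological since every $\cM$-morphism satisfies $\ac{true}$. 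Hence the substantive work lies in the two existential clauses.

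For the $\Shift$ clause, the plan is to establish a bijection between the witnesses of $h\vDash \exists(a,\ac{c}_A)$ and the witnesses of $g\vDash \bigwedge_{(a',x',y')} \exists(\bar{a},\Shift(\bar{y},\ac{c}_A))$. In the forward direction, given an $\cM$-morphism $q\colon A\hookrightarrow Z$ with $q\circ a = h = g\circ f$ and $q\vDash \ac{c}_A$, I would form the pullback of $q$ and $g$ to obtain an object $X'$ together with $\cM$-morphisms $a'\colon X'\hookrightarrow A$ and $y'\colon X'\hookrightarrow Y$ and a mediating $x'\colon X\to X'$; using the epi-$\cM$-factorization provided by Assumption~\ref{as:main} on $x'$ and passing to the mono part, I would pick the canonical representative in the indexing family. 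The pushout of $a'$ and $y'$ yields $A_{X'}$ together with a unique morphism to $Z$ that furnishes the witness for the corresponding disjunct. Conversely, given a triple $(a',x',y')$ in the indexed family together with a witness for $\Shift(\bar{y},\ac{c}_A)$, the inductive hypothesis and the pushout universal property produce the desired $q$. The stability of $\cM$ under pushouts and pullbacks in an $\cM$-adhesive category ensures all constructed morphisms lie in $\cM$, and $\cM$-effective unions guarantee that the pullback-based decomposition exhausts all possible witnesses up to isomorphism.

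For the $\Trans$ clause, the key observation is that providing a witness for $(r_m(X)\hookleftarrow O)\vDash \exists(a,\ac{c}_A)$ amounts to extending the comatch $O\hookrightarrow r_m(X)$ along $a\colon O\hookrightarrow A$. Reading the upper pushout (for DPO) or FPC (for SqPO) square in diagram~(\ref{eq:TransAlgo-iter}) contravariantly, such an extension corresponds to constructing a pushout complement of $K\hookrightarrow O\hookrightarrow A$, yielding the object $K_A$ together with the induced morphism $b\colon I\hookrightarrow B$ on the input side. Here the $\ac{false}$ branch is forced precisely when no pushout complement exists, since in that case no witness can be transported back. Applying the inductive hypothesis to the residual rule $A\hookleftarrow K_A\hookrightarrow B$ and the condition $\ac{c}_A$ then matches the statement of equation~(\ref{eq:TransAlgo-iter}). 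The DPO and SqPO cases diverge in how the lower square is constructed (pushout versus FPC on one side), and the correctness of the SqPO variant relies on stability of $\cM$ under FPCs as well as the uniqueness of FPCs along composable $\cM$-pairs guaranteed by Assumption~\ref{as:main}.

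The main obstacle, and the step requiring the most care, is verifying that the indexed conjunction in~(\ref{eq:ShiftAlgo-iter}) and the case distinction in~(\ref{eq:TransAlgo-iter}) exhaust all relevant witnesses \emph{up to isomorphism} without double-counting. This relies crucially on uniqueness of the epi-$\cM$-factorization and on finitariness of $\bfC$, ensuring that the indexing family is a finite set of isomorphism classes and that the pushout/FPC constructions used to recover witnesses commute with these equivalences; detailed proofs along these lines appear in~\cite{habel2009correctness} for the DPO case and in~\cite{behrRaSiR} for the SqPO case, and the above plan specializes those arguments to the present setting.
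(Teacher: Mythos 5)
The paper offers no proof of Theorem~\ref{thm:STdefns} at all: the result is recalled verbatim from~\cite{habel2009correctness} (for the DPO case) and~\cite{behrRaSiR} (for the SqPO case), so the relevant comparison is with the proofs in those sources. Your structural induction is precisely their strategy: the $\ac{true}$ and Boolean cases follow by unfolding the satisfaction relation, the existential $\Shift$ case transports witnesses through overlap squares built from a pullback followed by a pushout, and the existential $\Trans$ case reduces to pushout-complement existence with a recursive call on the residual rule $(A\hookleftarrow K_A\hookrightarrow B)$, with the $\ac{false}$ branch when the POC fails to exist. As a reconstruction of the cited proofs, the architecture is sound.

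Three points of friction deserve attention. First, and most substantively: your ``bijection between witnesses'' argument---each witness $q\colon A\hookrightarrow Z$ of $h\vDash\exists(a,\ac{c}_A)$ determines, via the pullback of $(q,g)$, exactly one overlap triple whose clause it satisfies---establishes the \emph{disjunctive} formula $\bigvee_{(a',x',y')}\exists(\bar a,\Shift(\bar y,\ac{c}_A))$, which is indeed the formula of the cited sources; but the theorem as printed in~\eqref{eq:ShiftAlgo-iter} has $\bigwedge$, against which your argument does not close, since a single witness cannot satisfy every conjunct (already $\Shift(\mIO\hookrightarrow Y,\exists(\mIO\hookrightarrow A))$ with the conjunction would demand that $Z$ contain \emph{every} gluing $\pO{A\hookleftarrow X'\hookrightarrow Y}$ simultaneously). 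You silently switch to ``the corresponding disjunct'' mid-proof; you should instead have flagged the printed $\bigwedge$ as a typo relative to the sources you invoke. Second, the epi-$\cM$-factorization of the mediating morphism $x'$ is not needed in this formulation: since $a'\circ x'=a\in\cM$ with $a'\in\cM$, the $\cM$-decomposition property (itself a consequence of pullback stability of $\cM$, the relevant square being a pullback along a monomorphism) already yields $x'\in\cM$, so the pullback span is itself a legitimate index; the factorization is what produces the \emph{jointly epimorphic} overlaps in Habel--Pennemann's variant of the formula, not a canonical representative here. Third, $\cM$-effective unions do not ensure ``exhaustion of witnesses''---exhaustiveness is delivered by the pullback construction---but rather that the mediating morphism $A_{X'}\to Z$ out of the pushout of the pullback span lies in $\cM$, without which it would not qualify as a witness for the outer $\exists$ at all. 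Your $\Trans$ sketch has the right shape, but note that the claim that POC non-existence forces $\ac{false}$ (i.e., that then \emph{no} admissible match satisfies the transported condition), together with the verification of the defining property~\eqref{eq:TransProps} through DPO- versus SqPO-type direct derivations, is exactly where the two semantics genuinely diverge; gesturing at FPC stability and deferring to~\cite{behrRaSiR} is acceptable for a sketch, but that is the step carrying the real content.
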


Finally, we will take advantage of the following results as part of the rule-algebraic calculus for rules with conditions (specifically when analyzing the properties of sequential rule compositions):
\begin{theorem}\label{thm:STcompComp}
For an $\cM$-adhesive category satisfying the suitable version of Assumption~\ref{as:main} for $\bT$-type rewriting (with $\bT\in \{DPO,SqPO\}$), the following \emph{compositionality} and \emph{compatibility} properties for $\Shift$ and $\Trans$ hold:
\begin{itemize}
\item \emph{Compositionality of $\Shift$:} $\forall (\beta:C\hookleftarrow B), (\alpha:B\hookleftarrow A)\in \cM, \ac{c}_A\in \cond{\bfC}$, 
\begin{equation}
\Shift(\beta,\Shift(\alpha,\ac{c}_A)) \equiv \Shift(\beta\circ\alpha,\ac{c}_A)\,. \label{eq:compShift}
\end{equation}
\item \emph{Compositionality of $\Trans$:} $\forall r_2=(C\leftharpoonup B), r_1=(B\leftharpoonup A)\in \Lin{\bfC}, \ac{c}_C\in \cond{\bfC}$,
\begin{equation}
\Trans(r_1, \Trans(r_2, \ac{c}_E)) \dot{\equiv} \Trans( r_2\circ r_1, \ac{c}_C)\,.
\label{eq:compTrans}
\end{equation}
\item \emph{Compatibility of $\Shift$ and $\Trans$:} $\forall r=(O\leftharpoonup I)\in \Lin{\bfC}\,,\; X\in \obj{\bfC}, m\in \MatchGT{\bT}{r}{X}, \ac{c}_O\in \cond{\bfC}$,
\begin{equation}
\Shift(m, \Trans(r,\ac{c}_O))\dot{\equiv} \Trans(r_m(X)\leftharpoonup X, \Shift(r_m(X)\hookleftarrow O,\ac{c}_O)\,.
\label{eq:ShiftTransCompatibility}
\end{equation}
\end{itemize}
Here we used the notation $\dot{\equiv}$ for \emph{equivalence of conditions modulo admissibility} (i.e., equivalence is only required to hold for admissible matches of the rule $r$). 
\end{theorem}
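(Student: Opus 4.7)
All three equivalences are to be proved by structural induction on the syntactic form of the condition in question. In each case, the base case $\ac{true}$ and the Boolean cases $\ac{c}^{(1)}\land\ac{c}^{(2)}$ and $\neg\ac{c}'$ are immediate from the recursive definitions in Theorem~\ref{thm:STdefns}: both $\Shift$ and $\Trans$ commute strictly with these Boolean constructors (see equations~\eqref{eq:ShiftAlgo-triv},~\eqref{eq:ShiftAlgo-lor},~\eqref{eq:ShiftAlgo-neg} and their $\Trans$ analogues). Hence the induction reduces at each step to the $\exists$-case, which I plan to argue semantically---via the characterizations of satisfaction given in Definition~\ref{def:shiftTrans}---rather than by direct combinatorial manipulation of the (syntactically rather intricate) pushout/POC enumerations in~\eqref{eq:ShiftAlgo-iter} and~\eqref{eq:TransAlgo-iter}.

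For~(i), the claim reduces via the semantic characterization of $\Shift$ to associativity of composition: for any $\cM$-morphism $h\colon C\hookrightarrow Z$, I have $h\vDash\Shift(\beta,\Shift(\alpha,\ac{c}_A))$ iff $h\circ\beta\vDash\Shift(\alpha,\ac{c}_A)$ iff $(h\circ\beta)\circ\alpha\vDash\ac{c}_A$ iff $h\circ(\beta\circ\alpha)\vDash\ac{c}_A$ iff $h\vDash\Shift(\beta\circ\alpha,\ac{c}_A)$. Since this chain of equivalences holds for every such $h$, the two conditions are equivalent in the sense of Definition~\ref{def:shiftTrans}.

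For~(ii), the semantic characterization of $\Trans$ turns the statement into an assertion about rewriting: for any $\bT$-admissible match $m\colon A\hookrightarrow X$ of $r_2\circ r_1$, applying $r_1$ at $m$ and then $r_2$ at the induced match in $r_{1,m}(X)$ must yield an object isomorphic to $(r_2\circ r_1)_m(X)$, carrying compatible embeddings of the output object $C$. This is exactly the classical \emph{concurrency theorem} for $\bT$-type rewriting (DPO~\cite{ehrig:2006fund}, SqPO~\cite{nbSqPO2019}), and the necessary restriction to admissible matches of the composite rule (so that both one-step applications used in the iterated $\Trans$ are themselves admissible) is precisely what the $\dot{\equiv}$ notation tracks.

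For~(iii), the semantic characterizations translate both sides into satisfaction statements about the object $r_{h\circ m}(Z)$ with its canonical embedding of $O$, for any extension $h\colon X\hookrightarrow Z$ of $m$ to a $\bT$-admissible match of $r$ in $Z$. The right-hand side routes through the ``extension rule'' $(r_m(X)\leftharpoonup X)$ applied at $h$, so the key ingredient I need is a \emph{decomposition lemma} asserting that $(r_m(X)\leftharpoonup X)_h(Z)\cong r_{h\circ m}(Z)$ together with compatible embeddings of $O$ into both sides. This is a pushout (DPO) or FPC (SqPO) pasting/decomposition property, and I expect it to be the main technical obstacle of the theorem: in the SqPO case the $\cM$-stability of FPCs postulated in Assumption~\ref{as:main} is essential, and the bookkeeping of admissibility must be tracked carefully. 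The DPO instance is classical~\cite{ehrig:2006fund,habel2009correctness}, while the SqPO instance is the one established in~\cite{behrRaSiR}, from which the present statement can then be assembled.
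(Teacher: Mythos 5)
The first thing to note is that the paper does not prove Theorem~\ref{thm:STcompComp} at all: it is recalled verbatim from~\cite{habel2009correctness} (for the DPO setting) and~\cite{behrRaSiR} (for SqPO and the $\Trans$-related statements), so your attempt can only be measured against those sources. Your outline is correct, and it takes a genuinely different, more semantic route than the cited proofs, which proceed syntactically by induction on the structure of conditions and verify the claims directly on the explicit constructions of Theorem~\ref{thm:STdefns} (the pushout enumeration in~\eqref{eq:ShiftAlgo-iter} and the POC construction in~\eqref{eq:TransAlgo-iter}) via diagram-pasting lemmas. You instead argue from the defining properties in Definition~\ref{def:shiftTrans}; this is legitimate precisely because Theorem~\ref{thm:STdefns}, which you assume, certifies that the algorithms realize those semantic specifications---and it in fact makes your structural-induction scaffolding redundant, since the semantic argument for~(i) applies uniformly to all conditions with no case split. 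The admissibility bookkeeping also comes out right in your version: $\Shift$'s characterization is unconditional, so~(i) yields genuine $\equiv$, whereas $\Trans$'s characterization in~\eqref{eq:TransProps} only constrains \emph{admissible} matches, which is exactly why~(ii) and~(iii) can only be concluded as $\dot{\equiv}$. Two steps deserve to be made explicit if this is to be a complete proof rather than an outline. First, in~(ii) the appeal to ``the classical concurrency theorem'' requires the observation that the span composite $r_2\circ r_1$ coincides with the $\bT$-type composite $\compGT{\bT}{r_2}{\mu}{r_1}$ along the \emph{full} overlap $\mu=(B\hookleftarrow B\hookrightarrow B)$, so that Theorem~\ref{thm:concur} applies with composite input $I_{21}\cong A$ and with input matches preserved under the bijection; without this identification the reduction does not literally parse (incidentally, the $\ac{c}_E$ in~\eqref{eq:compTrans} is a typo for $\ac{c}_C$, which your reading silently corrects). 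Second, the decomposition lemma you isolate in~(iii)---that a direct derivation of the derived rule $(r_m(X)\leftharpoonup X)$ at $h$ pastes vertically with the derivation of $r$ at $m$ to give the derivation of $r$ at $h\circ m$, and conversely decomposes, using $\cM$-stability of FPCs in the SqPO case---is indeed exactly the technical heart of the corresponding proof in~\cite{behrRaSiR}, so your identification of the main obstacle is accurate; at your level of detail it remains a citation rather than a proof, but that matches the level at which the present paper itself treats the statement.
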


\subsection{Compositional rewriting with conditions}\label{sec:crc}

Throughout this section, we assume that we are given a type $\bT\in\{DPO,SqPO\}$ of rewriting semantics and an $\cM$-adhesive category $\bfC$ satisfying the respective variant of Assumption~\ref{as:main} (i.e., $\bfC\in \cM-\mathbf{CAT}_{\bT}$). In categorical rewriting theories, the universal constructions utilized such as pushouts, pullbacks, pushout complements and final pullback complements are unique only up to universal isomorphisms. This motivates specifying a suitable notion of equivalence classes of rules with conditions:

\begin{definition}[Rules with conditions]\label{def:rwc}
    Let $\LinAc{\bfC}$ denote the class of \emph{(linear) rules with conditions}, defined as\footnote{Note that our ``input-to-output'', i.e., ``right-to-left'' notational convention for rules differs from the traditional ``left-to-right'' convention; this is deeply motivated from the theory of rule algebras, as explained in detail in Remark~\ref{rem:rNC}.}
    \begin{equation}
        \LinAc{\bfC}:=\{(O\xleftarrow{o}K\xrightarrow{i}I,\ac{c}_I)\mid o,i\in\cM,\; \ac{c}_I\in\cond{\bfC}\}\,.
    \end{equation}
{\makeatletter
\let\par\@@par
\par\parshape0
\everypar{}
\begin{wrapfigure}[4]{r}{0.29\linewidth}
\vspace{-1.9em}
  \begin{equation}
       \!\!\!\!\!\!\!\!\!\!\!\!\ti{isoR}
  \end{equation}
\end{wrapfigure}
\noindent We define two rules with conditions $R_j=(r_j,\ac{c}_{I_j})$ ($j=1,2$) \emph{equivalent}, denoted $R_2\sim R_1$, iff $\ac{c}_{I_1}\equiv\ac{c}_{I_2}$ and if there exist isomorphisms $\omega,\kappa,\iota\in\iso{\bfC}$ such that the diagram on the right commutes. We denote by $\LinEq{\bfC}$ the set of equivalence classes under $\sim$ of rules with conditions.\par}
\end{definition}

\begin{definition}[Direct derivations]\label{def:dd}
    Let $r=(O\hookleftarrow K\hookrightarrow I)\in\Lin{\bfC}$ and $\ac{c}_I\in\cond{\bfC}$ be concrete representatives of some equivalence class $R\in\LinEq{\bfC}$, and let $X,Y\in\obj{\bfC}$ be objects. Then a \emph{type $\bT$ direct derivation} is defined as a commutative diagram such as below right, where all morphism are in $\cM$ (and with the left representation a shorthand notation)
\begin{equation}\label{eq:DD}
\ti{DD1}\quad :=\quad 
\ti{DD2}\,.
\end{equation}
with the following pieces of information required relative to the type:
\begin{enumerate}
    \item $\mathbf{\bT=DPO}$: given $(m:I\hookrightarrow X)\in\cM$, $m$ is a \emph{DPO-admissible match of $R$ into $X$}, denoted $m\in\MatchGT{DPO}{R}{X}$, if $m\vDash \ac{c}_I$ and $(A)$ is constructable as a \emph{pushout complement}, in which case $(B)$ is constructed as a \emph{pushout}.
    \item $\mathbf{\bT=SqPO}$: given $(m:I\hookrightarrow X)\in\cM$, $m$ is a \emph{SqPO-admissible match of $R$ into $X$}, denoted $m\in\MatchGT{SqPO}{R}{X}$, if $m\vDash\ac{c}_I$, in which case $(A)$ is constructed as a \emph{final pullback complement} and $(B)$ as a \emph{pushout}.
    \item $\mathbf{\bT=DPO^{\dag}}$: given just the ``plain rule'' $r$ and $(m^{*}:O\hookrightarrow Y)\in\cM$, $m^{*}$ is a \emph{DPO${}^{\dag}$-admissible match of $r$ into $X$}, denoted $m\in\MatchGT{DPO^{\dag}}{r}{Y}$, if $(B)$ is constructable as a \emph{pushout complement}, in which case $(B)$ is constructed as a \emph{pushout}.
\end{enumerate}
For types $\bT\in\{DPO,SqPO\}$, we will sometimes employ the notation $R_m(X)$ for the object $Y$.
\end{definition}
Note that at this point, we have not yet resolved a conceptual issue that arises from the non-uniqueness of a direct derivation given a rule and an admissible match. Concretely, the pushout complement, pushout and FPC constructions are only unique up to isomorphisms. This issue will ultimately be resolved as part of the rule algebraic theory. We next consider a certain \emph{composition operation} on rules with conditions that is quintessential to our main constructions:

\begin{definition}[Rule compositions]\label{def:Rcomp}
 Let $R_1,R_2\in \LinEq{\bfC}$ be two equivalence classes of rules with conditions, and let $(r_j,\ac{c}_{I_j})\in \LinAc{\bfC}$ (for $r_j\in\Lin{\bfC}$ and $\ac{c}_{I_j}\in \cond{\bfC}$) be concrete representatives of $R_j$ (for $j=1,2$). For $\bT\in\{DPO,SqPO\}$, an $\cM$-span $\mu=(I_2\hookleftarrow M_{21}\hookrightarrow O_1)$ (i.e., with $(M_{21}\hookrightarrow O_1),(M_{21}\hookrightarrow I_2)\in\cM$) is a \emph{$\bT$-admissible match of $R_2$ into $R_1$}, denoted $\mu\in\RMatchGT{\bT}{R_2}{R_1}$, if the diagram below is constructable (with $N_{21}$ constructed by taking pushout)
 \begin{equation}\label{eq:defRcomp}
 \ti{cd-A}
\end{equation}
and if $\ac{c}_{I_{21}}\not{\!\!\dot{\equiv}}\,\,\ac{false}$. Here, the condition $\ac{c}_{I_{21}}$ is computed as
\begin{equation}\label{eq:acRcomp}
    \ac{c}_{I_{21}}:=\Shift(I_1\hookrightarrow I_{21},\ac{c}_{I_1})\;\land\; \Trans(N_{21}\leftharpoonup I_{21},\Shift(I_2\hookrightarrow N_{21},\ac{c}_{I_2}))\,.
\end{equation}
In this case, we define the \emph{type $\bT$ composition of $R_2$ with $R_1$ along $\mu$}, denoted $\compGT{\bT}{R_2}{\mu}{R_1}$, as
\begin{equation}
\compGT{\bT}{R_2}{\mu}{R_1}:=[(O_{21}\leftharpoonup I_{21},\ac{c}_{I_{21}})]_{\sim}\,,
\end{equation}
where $(O_{21}\leftharpoonup I_{21}):=(O_{21}\leftharpoonup N_{21})\circ(N_{21}\leftharpoonup I_{21})$ (with $\circ$ the \emph{span composition} operation).
\end{definition}

We recall in~\ref{app:ACthms} two important technical results on the notions of direct derivations and rule compositions that have been derived in~\cite{behrRaSiR} (where however the DPO-type concurrency theorem is of course classical, cf.\ e.g.\ \cite{ehrig:2006fund}).

\section{Rule algebras for compositional rewriting with conditions}\label{sec:ra}

The associativity property of rule compositions in both DPO- and SqPO-type semantics for rewriting with conditions as proved in~\cite{behrRaSiR} may be fruitfully exploited within rule algebra theory. One possibility to encode the non-determinism in sequential applications of rules to objects is given by lifting each possible configuration $X\in\obj{\bfC}_{\cong}$ (i.e., isomorphism class of \emph{objects}) to a basis vector $\ket{X}$ of a vector space $\hat{\bfC}$; then a rule $r$ is lifted to a linear operator acting on $\hat{\bfC}$, with the idea that this operator acting upon a basis vector $\ket{X}$ should evaluate to the ``sum over all possibilities to act with $r$ on $X$''. We will extend here the general rule algebra framework~\cite{bdg2016,bp2018,nbSqPO2019} to the present setting of rewriting rules with conditions.\\

We will first lift the notion of rule composition into the setting of a composition operation on a certain abstract vector space over rules, thus realizing the heuristic concept of ``summing over all possibilities to compose rules''.
\begin{definition}\label{def:RA}
    Let $\bT\in\{DPO,SqPO\}$ be the rewriting type, and let $\bfC$ be a category satisfying the relevant variant of Assumption~\ref{as:main}. %
For a field $\bK$ of characteristic $0$ (such as $\bK=\bR$ or $\bK=\bC$), let $\cR_{\bfC}$ be a $\bK$-vector space, defined via a bijection $\delta:\LinEq{\bfC}\xrightarrow{\cong}\mathsf{basis}(\cR_{\bfC})$ from the set of equivalence classes of linear rules with conditions to the set of basis vectors of $\cR_{\bfC}$. Let $\rap{\bT}{}{}$ denote the \emph{type $\bT$ rule algebra product}, a binary operation defined  via its action on basis elements $\delta(R_1),\delta(R_1)\in \cR_{\bfC}$ (for $R_1,R_2\in \LinEq{\bfC}$) as
    \begin{equation}
        \rap{\bT}{\delta(R_2)}{\delta(R_1)}:=\sum_{\mu\in \RMatchGT{\bT}{R_2}{R_1}}\delta\left(\compGT{\bT}{R_2}{\mu}{R_1}\right)\,.
    \end{equation}
    We refer to $\cR_{\bfC}^{\bT}:=(\cR_{\bfC},\rap{\bT}{}{})$ as the \emph{$\bT$-type rule algebra over $\bfC$}.
\end{definition}
\begin{theorem}\label{thm:RAmain}
    For type $\bT\in\{DPO,SqPO\}$ rewriting over a category $\bfC$ satisfying Assumption~\ref{as:main}, the rule algebra $\cR_{\bfC}^{\bT}$ is an \emph{associative unital algebra}, with unit element $\delta(R_{\mIO})$, where $R_{\mIO}:=(\mIO\hookleftarrow\mIO\hookrightarrow \mIO;\ac{true})$.
\end{theorem}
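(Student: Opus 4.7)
The plan is to reduce the two required properties, associativity and unitality of the bilinear product $\rap{\bT}{}{}$, to statements about basis elements $\delta(R)$ with $R\in \LinEq{\bfC}$, since $\rap{\bT}{}{}$ is defined on basis vectors and extended bilinearly. Under this reduction, associativity on basis vectors becomes a combinatorial identity between two double sums over pairs of admissible matches, and unitality becomes a direct computation using $\cM$-initiality of $\mIO$.

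For associativity, the approach is to unfold both bracketings
\begin{equation*}
\rap{\bT}{\delta(R_3)}{(\rap{\bT}{\delta(R_2)}{\delta(R_1)})}
\quad\text{and}\quad
\rap{\bT}{(\rap{\bT}{\delta(R_3)}{\delta(R_2)})}{\delta(R_1)}
\end{equation*}
via Definition~\ref{def:RA}, producing double sums indexed by pairs of $\bT$-admissible matches $(\mu_{21},\mu_{(32)1})$ respectively $(\mu_{32},\mu_{3(21)})$. The concurrency/associativity theorem for sequential compositions of rules with conditions imported in Appendix~\ref{app:ACthms} from~\cite{behrRaSiR} supplies a bijection between these two index sets such that the corresponding iterated rule compositions coincide as elements of $\LinEq{\bfC}$. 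Since $\delta$ is defined on equivalence classes, the bijection transfers term by term into an equality of the two formal sums in $\cR_{\bfC}^{\bT}$.

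For unitality, I would show that $\rap{\bT}{\delta(R_{\mIO})}{\delta(R)} = \delta(R) = \rap{\bT}{\delta(R)}{\delta(R_{\mIO})}$ for every $R=(O\hookleftarrow K\hookrightarrow I,\ac{c}_I)\in \LinEq{\bfC}$. By $\cM$-initiality of $\mIO$, each of the sets $\RMatchGT{\bT}{R_{\mIO}}{R}$ and $\RMatchGT{\bT}{R}{R_{\mIO}}$ contains exactly one element, the trivial overlap span with apex $\mIO$. For this unique match, the pushout step in~(\ref{eq:defRcomp}) yields $N_{21}=O$ (or $I$), the pushout-complement/FPC step in Definition~\ref{def:dd} yields an identity span on $O$ (resp.\ $I$), and the auxiliary condition in~(\ref{eq:acRcomp}) evaluates to $\ac{c}_I$ by the recursive clauses of Theorem~\ref{thm:STdefns} applied along identities and with $\ac{true}$ as input to $\Trans$. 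Hence the sole composite rule is $\sim$-equivalent to $R$, giving the unit identity.

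The main obstacle is the associativity step: while the combinatorial/geometric part of the argument is subsumed by the concurrency theorem, one must verify that the attached application conditions agree on the two sides, since they are built up iteratively by nested $\Shift$ and $\Trans$ operations according to~(\ref{eq:acRcomp}). This is where the compositionality of $\Shift$ and $\Trans$ and in particular the $\Shift$/$\Trans$-compatibility of Theorem~\ref{thm:STcompComp} become essential; together they ensure that the two ways of iteratively transporting and shifting $\ac{c}_{I_1},\ac{c}_{I_2},\ac{c}_{I_3}$ along the two bracketings coincide under the equivalence $\dot{\equiv}$ on admissible matches, which is precisely what is needed to promote the bijection of matches into an equality in $\cR_{\bfC}^{\bT}$.
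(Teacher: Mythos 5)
Your proposal is correct and takes essentially the same route as the paper, which likewise delegates associativity to the imported associativity theorem for rule compositions with conditions (Theorem~\ref{thm:assocR}) and settles unitality by an explicit computation of the compositions with $R_{\mIO}$ along the (unique, by $\cM$-initiality) trivial overlap, where the condition~\eqref{eq:acRcomp} reduces to $\ac{c}_I$ exactly as you state. Your closing remarks on the $\Shift$/$\Trans$ compositionality and compatibility merely unpack machinery already internal to the cited theorem from~\cite{behrRaSiR}, so nothing is missing.
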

\begin{proof}
    \emph{Associativity} follows from Theorem~\ref{thm:assocR}, while \emph{unitality}, i.e., that 
    \[
      \forall R\in \LinEq{\bfC}:\quad  \rap{\bT}{\delta(R_{\mIO})}{\delta(R)}
      =\rap{\bT}{\delta(R)}{\delta(R_{\mIO})}=\delta(R)
    \]
    follows directly from an explicit computation of the relevant rule compositions.
\end{proof}

As alluded to in the introduction, the prototypical example of rule algebras are those of DPO- or (in this case by coincidence equivalently) of SqPO-type over discrete graphs, giving rise as a special case to the famous Heisenberg-Weyl algebra of key importance in mathematical chemistry, combinatorics and quantum physics (see~\cite{bp2019-ext} for further details). We will now illustrate the rule algebra concept in an example involving a more general base category.

\begin{example}
    For the category $\mathbf{uGraph}$ and DPO-type rewriting semantics, consider as an example the following two rules with conditions:
    \begin{equation}
        R_C:=\left(\ti{Rce1}\hookleftarrow \ti{Rce2}\hookrightarrow \ti{Rce3}\,,\neg \exists \left(\ti{Rce4}\hookrightarrow \ti{Rce5}\right)\right)\,,\quad R_V:=(\ti{RdvObs1}\hookleftarrow\mIO\hookrightarrow\ti{RdvObs2}\,,\ac{true})\,.
    \end{equation}
The first rule is a typical example of a rule with application conditions, i.e., here stating that the rule may only link two vertices if they were previously not already linked to each other. The second rule, owing to DPO semantics, can in effect only be applied to vertices without any incident edges. The utility of the rule-algebraic composition operation then consists in reasoning about sequential compositions of these rules, for example (letting $*:=\rap{DPO}{}{}$):
\begin{equation}
\begin{aligned}
    \delta(R_C)*\delta(R_V)&=\delta(R_C\uplus R_V)+2\delta(R_C')\,,\; 
    R_C':=\left(\ti{RceC1}\hookleftarrow 
		 \ti{RceC2}\hookrightarrow 
		 \ti{RceC1-B}\,,\ac{true}\right)\\
    \delta(R_V)*\delta(R_C)&=\delta(R_C\uplus R_V)\,.
\end{aligned}
\end{equation}
To provide some intuition: the first computation encodes the causal information that the two rules may either be composed along a trivial overlap, or rule $R_C$ may overlap on one of the vertices in the output of $R_V$; in the latter case, any vertex to which first $R_V$ and then $R_C$ applies must not have had any incident edges, i.e., in particular no edge violating the constraint of $R_C$, which is why the composite rule $R_C'$ does not feature any non-trivial constraint. In the other order of composition, the two vertices in the output of $R_C$ are linked by an edge, so $R_V$ cannot be applied to any of these two vertices (leaving just the trivial overlap contribution).
\end{example}

Just as the rule algebra construction encodes the compositional associativity property of rule compositions, the following \emph{representation} construction encodes in a certain sense the properties described by the concurrency theorem:
\begin{definition}\label{def:RAR}
    Let $\bfC\in \cM-\mathbf{CAT}_{\bT}$ be an $\cM$-adhesive category suitable for type $\bT\in \{DPO,SqPO\}$ rewriting, %
    and denote by $\cR^{\bT}_{\bfC}$ the $\bT$-type rule algebra defined over a field $\bK$ of characteristic $0$. %
    Let $\hat{\bfC}$ be defined as the \emph{$\bK$-vector space} whose set of basis vectors is isomorphic to the set\footnote{We assume here that the isomorphism classes of objects of $\bfC$ form a \emph{set} (i.e., not a proper class).} of isomorphism classes of objects of $\bfC$ %
    via a bijection $\ket{.}:\obj{\bfC}_{\cong}\rightarrow \mathsf{basis}(\hat{\bfC})$. %
    Then the \emph{canonical representation of the $\bT$-type rule algebra $\cR_{\bfC}^{\bT}$ over $\bfC$}, denoted $\rho^{\bT}_{\bfC}$, %
    is defined as the morphism $\rho^{\bT}_{\bfC}:\cR_{\bfC}^{\bT}\rightarrow End_{\bK}(\hat{\bfC})$ specified via 
    \begin{equation}\label{eq:defRRA}
    \forall R\in\LinEq{\bfC},X\in\obj{\bfC}_{\cong}:
    \quad \canRep{\bT}{\delta(R)}\ket{X}:=\sum_{m\in\MatchGT{\bT}{R}{X}}\ket{R_m(X)}\,.
    \end{equation}
\end{definition}

\begin{theorem}\label{thm:canrep}
    $\rho^{\bT}_{\bfC}$ as defined above is an \emph{algebra homomorphism} (and thus in particular a well defined representation).
\end{theorem}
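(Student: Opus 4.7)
The plan is to verify the two defining properties of an algebra homomorphism: preservation of the unit and multiplicativity. Since $\rho^{\bT}_{\bfC}$ is defined on basis vectors $\delta(R)$ and extended $\bK$-linearly, and similarly $\hat{\bfC}$ has a basis indexed by $\obj{\bfC}_{\cong}$, it suffices in each case to check the identity on basis vectors.

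First I would dispatch unitality. Applying the definition in~\eqref{eq:defRRA} to $R_{\mIO}=(\mIO\hookleftarrow\mIO\hookrightarrow\mIO,\ac{true})$ on an arbitrary basis vector $\ket{X}$, the set $\MatchGT{\bT}{R_{\mIO}}{X}$ has exactly one element (the unique $\cM$-morphism $\mIO\hookrightarrow X$, which trivially satisfies $\ac{true}$), and both the pushout complement (for $\bT=DPO$) respectively the final pullback complement (for $\bT=SqPO$) along $\mIO\hookrightarrow \mIO \hookrightarrow X$ together with the subsequent pushout return $X$ up to isomorphism. Hence $\canRep{\bT}{\delta(R_{\mIO})}\ket{X}=\ket{X}$, so $\canRep{\bT}{\delta(R_{\mIO})}=\mathrm{id}_{\hat{\bfC}}$.

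Next I would establish multiplicativity: for all $R_1,R_2\in\LinEq{\bfC}$ and all $X\in\obj{\bfC}_{\cong}$,
\begin{equation}
\canRep{\bT}{\delta(R_2)}\canRep{\bT}{\delta(R_1)}\ket{X}
\;=\;\canRep{\bT}{\rap{\bT}{\delta(R_2)}{\delta(R_1)}}\ket{X}\,.
\end{equation}
Expanding the left-hand side via~\eqref{eq:defRRA} twice yields a double sum
$\sum_{m_1\in\MatchGT{\bT}{R_1}{X}}\sum_{m_2\in\MatchGT{\bT}{R_2}{(R_1)_{m_1}(X)}}\ket{(R_2)_{m_2}((R_1)_{m_1}(X))}$,
while expanding the right-hand side via Definition~\ref{def:RA} and~\eqref{eq:defRRA} yields
$\sum_{\mu\in\RMatchGT{\bT}{R_2}{R_1}}\sum_{m\in\MatchGT{\bT}{\compGT{\bT}{R_2}{\mu}{R_1}}{X}}\ket{(\compGT{\bT}{R_2}{\mu}{R_1})_m(X)}$.
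The equality of these two expressions is precisely the content of the \emph{concurrency theorem} recalled in~\ref{app:ACthms}: it provides a bijection between two-step derivation data $(m_1,m_2)$ and synthesis data $(\mu,m)$ consisting of an admissible composition together with an admissible match of the composite rule, under which the resulting objects agree up to canonical isomorphism (so the same basis vectors of $\hat{\bfC}$ are produced on both sides, with matching multiplicities).

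The main obstacle — and the reason this statement is more than a triviality — is the careful bookkeeping needed to show that the concurrency theorem for rules \emph{with conditions} really does produce a bijection on the nose up to isomorphism, rather than merely a correspondence of outcomes. In particular one must check: (i) that the admissibility condition $\ac{c}_{I_{21}}\not{\!\!\dot{\equiv}}\,\,\ac{false}$ built into Definition~\ref{def:Rcomp} together with the $\Shift$/$\Trans$ compatibility~\eqref{eq:ShiftTransCompatibility} exactly matches, under the bijection, the requirement that $m_1\vDash \ac{c}_{I_1}$ and $m_2\vDash\ac{c}_{I_2}$ in the two-step view; (ii) that passing to $\sim$-equivalence classes of rules in Definition~\ref{def:rwc} and to isomorphism classes of objects makes the assignment $\delta(R_2)\mapsto \canRep{\bT}{\delta(R_2)}$ well defined, which follows from the independence (up to isomorphism) of pushout, pushout complement and FPC constructions combined with the invariance of $\vDash$ under equivalence of conditions. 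Once these two points are verified — both of which are direct consequences of the results quoted in Theorem~\ref{thm:STcompComp} and the appendix — the bijection from the concurrency theorem transports term-by-term between the two double sums, completing the proof.
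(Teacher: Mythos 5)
Your proposal is correct and follows essentially the same route as the paper's own proof: unitality is checked by direct evaluation of $\rho^{\bT}_{\bfC}(\delta(R_{\mIO}))$ on basis vectors, and multiplicativity is obtained by expanding both sides as double sums over admissible matches and invoking the concurrency theorem for rules with conditions (Theorem~\ref{thm:concur}) to transport terms bijectively. Your additional remarks on condition-admissibility bookkeeping and well-definedness under $\sim$-equivalence make explicit details the paper leaves implicit, but do not constitute a different argument.
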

\begin{proof}
The proof is closely analogous to the one for the case without application conditions~\cite{bp2018,nbSqPO2019} (cf.\ \ref{app:proofCanrep}).
\end{proof}

\begin{remark}\label{rem:rNC}
It is worthwhile emphasizing that while our notational convention for linear rules (i.e., ``input-to-output'' as opposed to the traditional ``left-to-right'' convention) is non-standard, it is well-adapted to our setting of rule-algebraic computations. Concretely, by standard mathematical convention, \emph{matrix multiplication} and more generally the \emph{composition or linear operators} is performed as \emph{left composition}, hence in rule algebra theory, considering formulas such as the \emph{representation property} (cf.~\ref{app:proofCanrep})
\begin{equation}\label{eq:RRAproperty}
\rho^{\bT}_{\bfC}(\delta(R_2))\rho^{\bT}_{\bfC}(\delta(R_1))\ket{X}
=\rho^{\bT}_{\bfC}\left(\rap{\bT}{\delta(R_2)}{\delta(R_1)}\right)\ket{X}\,,
\end{equation}
it is natural to make explicit the ``direction'' or ordering of the rule applications (i.e., $R_2$ after $R_1$). As an additional benefit, the contributions to $\rap{\bT}{\delta(R_2)}{\delta(R_1)}$ arise from sequential rule composition diagrams as in~\eqref{eq:defRcomp}, which again due to its ``right-to-left'' notational convention is directly compatible with the encoded application order of the rules. In particular, this permits to directly interpret~\eqref{eq:RRAproperty} as an instantiation of the concurrency theorem (Theorem~\ref{thm:concur}), in the sense that the left hand side of~\eqref{eq:defRcomp} is evaluated as a sum over all two-step  derivation sequences starting from an object $X$, and with $R_2$ applied after $R_1$, while the right hand side of~\eqref{eq:defRcomp} evaluates to a sum over all one-step (i.e., direct) derivations along all possible sequential compositions of $R_2$ with $R_1$, with each composite rule applied in all possible ways to $X$.
\end{remark}

\begin{example}\label{ex:HWex}
In order to provide some high-level intuitions for the concept of representations for rule algebras, let us consider the particularly simple, yet in a certain sense paradigmatic example of a rule algebra and its representation, namely the famous \emph{Heisenberg-Weyl algebra} and its representation in the ``number vector'' basis (following~\cite[Sec.~3.1]{bp2019-ext}). In rewriting-theoretic terms, this amounts to considering rewriting systems over vertex-only graphs, and considering the linear rules of \emph{vertex creation} $r_{+}:=(\bullet \leftharpoonup \mIO)\equiv (\bullet\hookleftarrow \mIO\hookrightarrow \mIO)$ and of \emph{vertex deletion} $r_{-}:=(\mIO \leftharpoonup \bullet)\equiv (\mIO\hookleftarrow \mIO\hookrightarrow \bullet)$. %
Letting $\ket{n}:=\ket{\bullet^{\uplus\:n}}$ denote the basis vector encoding the isomorphism class of an $n$-vertex graph (for $n\geq0$), one may specialize the definition of the rule algebra representation as given in~\eqref{eq:defRRA} to the following formulas, where $R_{\pm}:=(r_{\pm},\ac{true})$, $\rho^{\bT}:=\rho^{\bT}_{\mathbf{uGraph}}$ for concreteness, and with $\bT\in \{DPO,SqPO\}$:
\begin{equation}
\rho^{\bT}(\delta(R_{+}))\ket{n}=\ket{n+1}\,,\qquad \rho^{\bT}(\delta(R_{-}))\ket{n} = \begin{cases}
0\quad &\text{if $n=0$}\\
n\ket{n-1} & \text{otherwise}
\end{cases}
\end{equation}
These equations thus encode the combinatorial facts that there is one way up to isomorphisms to add a vertex to a graph with $n$ vertices (obtaining $n+1$ vertices in total), while there is no way possible to delete a vertex from an empty graph, and precisely $n$ ways to delete a vertex from a graph with $n\geq 1$ vertices (each resulting in a graph with $n-1$ vertices). More importantly, the linear operators $\rho^{\bT}(\delta(R_{+}))$ and $\rho^{\bT}(\delta(R_{-}))$ in the basis of states $\ket{n}$ for $n\geq 0$ have precisely the same ``matrix elements'' as the linear operators $\tfrac{d}{dx}$ and $\hat{x}$ in the basis of monomials $x^n$ for $n\geq 0$; here, $x$ is a formal variable, $\tfrac{d}{dx}$ is the derivative operator, and $\hat{x}$ is the operator of multiplication with the formal variable (i.e., $\hat{x}x^n:= x^{n+1}$). In this sense, one may ``emulate'' or ``explain'' the calculus of operations on formal power series from within rewriting theory, with the following example of a \emph{commutation relation} amongst the key technical concepts:
\begin{equation}\label{eq:HWAccr}
[\rho^{\bT}(\delta(R_{-})),\rho^{\bT}(\delta(R_{+}))]:=\rho^{\bT}(\delta(R_{-}))\rho^{\bT}(\delta(R_{+})) - \rho^{\bT}(\delta(R_{+}))\rho^{\bT}(\delta(R_{-}))
\end{equation}
In order to evaluate the above equation, one could try to establish a formula for the ``matrix elements'' (i.e., by acting with the commutator on an arbitrary state $\ket{n}$), but one may also directly compute this expression by taking advantage of the \emph{representation property} enjoyed by $\rho^{\bT}$ (cf.\ Theorem~\ref{thm:canrep} and \ref{app:proofCanrep}):
\begin{equation}\label{eq:exHWA}
\begin{aligned}
&\rho^{\bT}(\delta(R_{-}))\rho^{\bT}(\delta(R_{+})) - \rho^{\bT}(\delta(R_{+}))\rho^{\bT}(\delta(R_{-})) = \rho^{\bT}\left(
\rap{\bT}{\delta(R_{-})}{\delta(R_{+})} - \rap{\bT}{\delta(R_{+})}{\delta(R_{-})}
\right)\\
&\qquad\qquad= \rho^{\bT}\left(
\delta(\bullet \hookleftarrow \mIO\hookrightarrow \bullet, \ac{true})
+ {\color{blue}\delta(\mIO \hookleftarrow \mIO\hookrightarrow \mIO, \ac{true})}
- \delta(\bullet \hookleftarrow \mIO\hookrightarrow \bullet, \ac{true})
\right)= \rho^{\bT}({\color{blue}\delta(R_{\mIO})})\,.
\end{aligned}
\end{equation}
Here, we have highlighted in {\color{blue}blue} the only contribution that ``survives'' the subtraction, which in this case amounts to $\rho^{\bT}({\color{blue}\delta(R_{\mIO})})$. This equation has a very intuitive rewriting-theoretic explanation: when first creating a vertex and then deleting a vertex, one may either delete a vertex that is different from the previously created vertex (yielding a composite rule $\bullet \hookleftarrow \mIO\hookrightarrow \bullet$), or instead one might delete precisely the vertex that was created in the first step (yielding in effect ${\color{blue}\mIO \hookleftarrow \mIO\hookrightarrow \mIO}$, i.e., the ``trivial'' rule). In the other order of operations, first deleting and then creating a vertex permits no causal interaction, i.e., the only possible composite rule in this order is $\bullet \hookleftarrow \mIO\hookrightarrow \bullet$. In this sense, \eqref{eq:exHWA} exhibits precisely the contribution only possible in one of the orders of applying the two rules, which is a typical feature of such \emph{commutation relations}, and which is at the heart of \emph{static analysis techniques} for rewriting-based continuous-time Markov chains according to Theorem~\ref{thm:CTMCmev}.\\

To provide a further illustration of the utility of rule-algebraic commutators, let us recall that we had defined $\rho^{\bT}:=\rho^{\bT}_{\mathbf{uGraph}}$, so that we may in particular consider acting with $\rho^{\bT}(\delta(R_{\pm}))$ on graph states other than those of the form $\ket{n}$. Note first that the DPO- and SqPO-semantics yield drastically different actions of $\rho^{\bT}(\delta(R_{-}))$ on generic graph states $\ket{G}$, i.e., $\rho^{DPO}(\delta(R_{-}))\ket{G}=0$ for any $G$ containing edges, while $\rho^{DPO}(\delta(R_{-}))\ket{G}$ evaluates to a linear combination of graph states obtained from $G$ by removing a vertex and all incident edges in all possible ways. However, the result presented in~\eqref{eq:exHWA} was in fact obtained \emph{independently} of concrete applications of the underlying rules to basis states, which entails that~\eqref{eq:exHWA} encodes a form of \emph{invariant} property of the rewriting systems arising from $R_{+}$ and $R_{-}$. This example is indicative of the general empirical observation that the combination of rule-algebraic computations with the representation property~\eqref{eq:RRAproperty} offers a fundamentally new approach to the study of categorical rewriting theories in many practical applications.
\end{example}

\section{Stochastic mechanics formalism}\label{sec:stochMech}

Referring to~\cite{bdg2016,bp2019-ext,bdg2019} for further details and derivations, suffice it here to highlight the key role played by the algebraic concept of \emph{commutators} in stochastic mechanics. Let us first provide the constructions of continuous-time Markov chains (CTMCs) and observables in stochastic rewriting systems.
\begin{remark}
Throughout this section, we fix the base field $\bK$ in all constructions to $\bK=\bR$.
\end{remark}

\begin{definition}
    Let $\bra{}:\hat{\bfC}\rightarrow \bR$ (referred to as \emph{dual projection vector}) be defined via its action on basis vectors of $\hat{\bfC}$ as $\braket{}{X}:=1_{\bR}$.
\end{definition}

\begin{theorem}\label{thm:CTMCs}
    Let $\bfC\in \cM-\mathbf{CAT}_{\bT}$ be an $\cM$-category suitable for type $\bT\in \{DPO,SqPO\}$ rewriting, and let $\cR_{\bfC}^{\bT}$ be the $\bT$-type rule algebra of linear rules with conditions over $\bfC$. Let $\rho\equiv \rho^{\bT}_{\bfC}$ denote the $\bT$-type canonical representation of $\cR_{\bfC}^{\bT}$. Then the following results hold:
\begin{enumerate}
    \item The basis elements of the space $\obs{\bfC}_{\bT}$ of \textbf{$\bT$-type observables}, i.e., the diagonal linear operators that arise as (linear combinations of) $\bT$-type canonical representations of rewriting rules with conditions, have the following structure ($ \hat{\cO}_{P,q}^{\ac{c}_P}$ in the DPO case, $\hat{\cO}_P^{\ac{c}_P}$ in the SqPO case):
\begin{equation}\label{eq:defObs}
\begin{aligned}
    \hat{\cO}_{P,q}^{\ac{c}_P}&:=\rho(\delta(P\xleftarrow{q}Q\xrightarrow{q}P,\ac{c}_P))
    \quad (P\in \obj{\bfC}_{\cong},q\in\cM,\ac{c}_P\in \cond{\bfC}_{\sim})\\
    \hat{\cO}_P^{\ac{c}_P}&:=\rho(\delta(P\xleftarrow{id}P\xrightarrow{id}P,\ac{c}_P))\quad (P\in \obj{\bfC}_{\cong},\ac{c}_P\in \cond{\bfC}_{\sim})\,.
\end{aligned}
\end{equation}
\item \textbf{DPO-type jump closure property:} for every linear rule with condition $R\equiv(O\hookleftarrow K\hookrightarrow I,\ac{c}_{I})\in \LinAc{\bfC}$, we find that
    \begin{equation}
        \bra{}\rho(\delta(R))=\bra{}\jcOp{\delta(R)}\,,
    \end{equation}
    where $\hat{\bO}:\cR^{\text{DPO}}_{\bfC}\rightarrow End_{\bR}(\hat{\bfC})$ is the homomorphism defined via its action on basis elements $\delta(R)$ for $R=(O\hookleftarrow K\hookrightarrow I, \ac{c}_{I})\in\LinEq{\bfC}$ as
    \begin{equation}
        \jcOp{\delta(R)}:=\rho(\delta(I\hookleftarrow K\hookrightarrow I,\ac{c}_{I}))\in \obs{\bfC}\,.
    \end{equation}
    \item \textbf{SqPO-type jump closure property:} for every linear rule with condition $R\equiv(O\hookleftarrow K\hookrightarrow I,\ac{c}_{I})\in \LinAc{\bfC}$, we find that
    \begin{equation}
        \bra{}\rho(\delta(R))=\bra{}\jcOp{\delta(R)}\,,
    \end{equation}
    where\footnote{Since in applications we will always fix the type of rewriting to either DPO or SqPO, we will use the same symbol for the jump-closure operator in both cases.} $\hat{\bO}:\cR^{\text{SqPO}}_{\bfC}\rightarrow End_{\bR}(\hat{\bfC})$ is the homomorphism defined via
    \begin{equation}
        \jcOp{\delta(R)}:=\rho(\delta(I\xleftarrow{id}I\xrightarrow{id}I,\ac{c}_{I}))\in \obs{\bfC}\,.
    \end{equation}
    \item \textbf{CTMCs via stochastic rewriting systems:} Let $\Prob{\bfC}$ be the space of \emph{(sub-)probability distributions over $\hat{\bfC}$} (i.e., $\ket{\Psi}=\sum_{X\in\obj{\bfC}_{\cong}}\psi_X\ket{X}$). Let $\cT$ be a collection of $N$ pairs of positive real-valued parameters $\kappa_j$ (referred to as \emph{base rates}) and linear rules $R_j$ with application conditions,
\begin{equation}
    \cT:=\{(\kappa_j,R_j)\}_{1\leq j\leq N}\qquad (\kappa_j\in \bR_{\geq 0}\,,\;R_j\equiv(r_j,\ac{c}_{I_j})\in \LinAc{\bfC})\,.
\end{equation} 
Then given an \emph{initial state} $\ket{\Psi_0}\in \Prob{\bfC}$, the $\bT$-type stochastic rewriting system based upon the transitions $\cT$ gives rise to the CTMC $(\cH,\ket{\Psi(0)})$ with time-dependent state $\ket{\Psi(t)}\in \Prob{\bfC}$ (for $t\geq0$) and evolution equation
\begin{equation}    
 \forall t\geq 0:\quad   \tfrac{d}{dt}\ket{\Psi(t)}=\cH\ket{\Psi(t)}\,,\quad \ket{\Psi(0)}=\ket{\Psi_0}\,.
\end{equation}
Here, the \emph{infinitesimal generator} $\cH$ of the CTMC is given by
\begin{equation}\label{eq:H}
    \cH=\hat{H}-\jcOp{\hat{H}}\,,\quad \hat{H}=\sum_{j=1}^N \kappa_j\,\rho(\delta(R_j))\,.
\end{equation}
\end{enumerate}
\end{theorem}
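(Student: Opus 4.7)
The four parts of the theorem form a natural progression, so I would prove them in order, with parts (2)--(4) building on the diagonalization result of part (1).

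For part (1), the core claim is that ``balanced-span'' rules act diagonally on basis vectors. I would unpack an arbitrary direct derivation of $R_P := (P \xleftarrow{q} Q \xrightarrow{q} P, \ac{c}_P)$ along an admissible match $m: P \hookrightarrow X$. In the DPO case, forming the pushout complement of $Q \xrightarrow{q} P \xrightarrow{m} X$ and then the pushout of $P \xleftarrow{q} Q \hookrightarrow K'$ reproduces an object isomorphic to $X$, since the symmetry of the balanced span forces the second pushout square to coincide with the POC square on the opposite side. In the SqPO case, the identity span $(P \xleftarrow{id} P \xrightarrow{id} P)$ returns $X$ directly: the FPC along an identity is trivial, and the subsequent pushout is again $X$. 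Hence
\[
\rho(\delta(R_P))\ket{X} = |\MatchGT{\bT}{R_P}{X}|\,\ket{X}\,,
\]
which is diagonal, and varying $(P,q,\ac{c}_P)$ (resp.\ $(P,\ac{c}_P)$) over equivalence classes these operators span the diagonal part of the image of $\rho$, identifying them as a basis of $\obs{\bfC}_{\bT}$.

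For parts (2) and (3), the jump closure identities, I would test both sides against an arbitrary basis vector $\ket{X}$. Using $\braket{}{Y} = 1_{\bR}$ for every basis vector, the left-hand side evaluates to
\[
\bra{}\rho(\delta(R))\ket{X} = \sum_{m \in \MatchGT{\bT}{R}{X}} \braket{}{R_m(X)} = |\MatchGT{\bT}{R}{X}|\,.
\]
For DPO-type rewriting with $R = (O \hookleftarrow K \hookrightarrow I, \ac{c}_I)$, the admissibility of a match $m: I \hookrightarrow X$ depends only on (i) the existence of a pushout complement along the left leg $K \hookrightarrow I$ and (ii) the condition $m \vDash \ac{c}_I$, both of which are shared by the surrogate rule $(I \hookleftarrow K \hookrightarrow I, \ac{c}_I)$ defining $\jcOp{\delta(R)}$. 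Applying part (1) to this surrogate and then projecting via $\bra{}$ yields the same cardinality. The SqPO case is analogous and simpler, as SqPO-admissibility requires only $m \vDash \ac{c}_I$, which the surrogate $(I \xleftarrow{id} I \xrightarrow{id} I, \ac{c}_I)$ likewise imposes.

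For part (4), the CTMC construction, the essential property to verify is that $\cH$ is an infinitesimal generator of a (sub-)stochastic semigroup. Probability conservation $\bra{}\cH = 0$ follows directly by linearity and the jump closure identities:
\[
\bra{}\cH = \sum_{j=1}^N \kappa_j \left(\bra{}\rho(\delta(R_j)) - \bra{}\jcOp{\delta(R_j)}\right) = 0\,.
\]
Combined with the non-negativity of off-diagonal matrix elements of $\hat{H}$ (since each $\rho(\delta(R_j))$ has non-negative integer entries in the basis $\{\ket{X}\}$ by~\eqref{eq:defRRA}, weighted by $\kappa_j \geq 0$) and the non-positivity of the diagonal of $\cH$ (which, by part (1), arises entirely from $-\jcOp{\hat{H}}$), standard arguments on $\ell^1$-semigroups yield well-posedness of the master equation on $\Prob{\bfC}$ for any initial sub-probability distribution $\ket{\Psi_0}$. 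The principal obstacle is part (1): rigorously showing that a direct derivation of a balanced span reproduces the target object up to isomorphism, while carefully tracking the subtleties arising from the non-uniqueness of pushout, pushout complement and FPC constructions (cf.\ the remark following Definition~\ref{def:dd}). Once this diagonalization lemma is secured, parts (2)--(4) reduce to essentially bookkeeping with the $\bra{}$ functional, the jump-closure identity, and the non-negativity of the representation's matrix elements.
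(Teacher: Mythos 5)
Your proposal is correct and follows essentially the same route as the paper's proof: verify diagonality of the balanced-span (DPO) and identity-span (SqPO) rules directly from the definitions of direct derivations, reduce the jump-closure identities to counting admissible matches via $\bra{}\rho(\delta(R))\ket{X}=\vert\MatchGT{\bT}{R}{X}\vert$ (the counts for $R$ and its surrogate coincide because DPO-admissibility depends only on the shared input leg $K\hookrightarrow I$ together with $\ac{c}_I$, and SqPO-admissibility only on $\ac{c}_I$), and then confirm the $Q$-matrix properties of $\cH$ exactly as in standard CTMC theory. The only ingredient the paper makes explicit that you leave implicit is the appeal to the finitarity of $\bfC$ (Assumption~\ref{as:main}), which guarantees that any object admits only finitely many direct derivations along the transition set and hence that $\cH$ is a \emph{stable} (as well as conservative) $Q$-matrix.
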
 
\begin{proof}
	See~\ref{sec:CTMCproofsApp}.
\end{proof}

\begin{remark}
    The operation $\hat{\bO}$ featuring in the DPO- and SqPO-type jump-closure properties has a very intuitive interpretation: given a linear rule with condition $R\equiv(r,\ac{c}_{I})\in \LinAc{\bfC}$, the linear operator $\jcOp{\delta(R)}$ is an observable that evaluates on a basis vector $\ket{X}\in \hat{\bfC}$ as $\jcOp{\delta(R)}\ket{X}=(\text{\# of ways to apply $R$ to $X$})\cdot\ket{X}$.
\end{remark}

As for the concrete computational techniques offered by the stochastic mechanics formalism, one of the key advantages of this rule-algebraic framework is the possibility to reason about \emph{expectation values} (and higher moments) of pattern-count observables in a principled and universal manner. The precise formulation is given by the following generalization of results from~\cite{bdg2019} to the setting of DPO- and SqPO-type rewriting for rules with conditions:
\begin{theorem}\label{thm:CTMCmev}
    Given a CTMC $(\ket{\Psi_0},\cH)$ with time-dependent state $\ket{\Psi(t)}$ (for $t\geq 0$), a set of observables $O_1,\dotsc O_n\in\obs{\bfC}$ and $n$ \emph{formal variables} $\lambda_1,\dotsc,\lambda_n$, define the \emph{exponential moment-generating function (EMGF)} $M(t;\vec{\lambda})$ as
    \begin{equation}
        M(t;\vec{\lambda}):=\bra{}e^{\vec{\lambda}\cdot\vec{O}}\ket{\Psi(t)}\,,\quad  \vec{\lambda}\cdot\vec{O}:=\sum_{j=1}^n \lambda_jO_j\,.
    \end{equation}
    Then $M(t;\vec{\lambda})$ satisfies the following \emph{formal evolution equation} (for $t\geq0$):
    \begin{equation}\label{eq:MEGFevo}
    \begin{aligned}
	\tfrac{d}{dt}M(t;\vec{\lambda})&=\sum_{q\geq 1}\tfrac{1}{q!}\bra{}\left(ad_{\vec{\lambda}\cdot\vec{O}}^{\circ q}(\hat{H})\right)e^{\vec{\lambda}\cdot\vec{O}}\ket{\Psi(t)}\,,\quad M(0;\vec{\lambda})=\bra{}e^{\vec{\lambda}\cdot\vec{O}}\ket{\Psi_0}\,.
    \end{aligned}
    \end{equation}
\end{theorem}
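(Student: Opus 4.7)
The plan is to differentiate $M(t;\vec\lambda)$ in time, substitute the CTMC master equation, and then use the Hadamard (nested commutator) expansion together with the jump-closure property to arrive at the claimed identity. The non-trivial ingredient is that, because all $O_j\in\obs{\bfC}$ are diagonal operators, they mutually commute and the operator $e^{\vec\lambda\cdot\vec O}$ is itself diagonal; in particular it commutes with every observable, and as a formal power series in $\vec\lambda$ it acts on each basis vector $\ket{X}$ by a scalar. This is what makes \eqref{eq:MEGFevo} well-posed as a formal (coefficient-wise) evolution equation rather than an analytic one.

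Concretely, I would first compute
\begin{equation*}
\tfrac{d}{dt}M(t;\vec\lambda)
=\bra{}e^{\vec\lambda\cdot\vec O}\tfrac{d}{dt}\ket{\Psi(t)}
=\bra{}e^{\vec\lambda\cdot\vec O}\bigl(\hat H-\jcOp{\hat H}\bigr)\ket{\Psi(t)}
\end{equation*}
using the evolution equation and linearity of $\bra{}$, and then, since $\jcOp{\hat H}\in\obs{\bfC}$ commutes with $e^{\vec\lambda\cdot\vec O}$, rewrite the second term as $\bra{}\jcOp{\hat H}\,e^{\vec\lambda\cdot\vec O}\ket{\Psi(t)}$.

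Second, I would apply the Hadamard lemma in its formal-power-series form,
\begin{equation*}
e^{A}\,B\;=\;\sum_{q\geq 0}\tfrac{1}{q!}\bigl(ad_{A}^{\circ q}(B)\bigr)\,e^{A}\,,\qquad ad_A(B):=[A,B]=AB-BA\,,
\end{equation*}
with $A=\vec\lambda\cdot\vec O$ and $B=\hat H$. Expanding the first term of the right-hand side gives
\begin{equation*}
\bra{}e^{\vec\lambda\cdot\vec O}\hat H\ket{\Psi(t)}
=\bra{}\hat H\,e^{\vec\lambda\cdot\vec O}\ket{\Psi(t)}
+\sum_{q\geq 1}\tfrac{1}{q!}\bra{}\bigl(ad_{\vec\lambda\cdot\vec O}^{\circ q}(\hat H)\bigr)e^{\vec\lambda\cdot\vec O}\ket{\Psi(t)}\,.
\end{equation*}

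Third, I would invoke the jump-closure property from Theorem~\ref{thm:CTMCs}, which by linearity yields $\bra{}\hat H=\bra{}\jcOp{\hat H}$, so the $q=0$ contribution cancels against the term $-\bra{}\jcOp{\hat H}e^{\vec\lambda\cdot\vec O}\ket{\Psi(t)}$ produced in the first step, leaving exactly \eqref{eq:MEGFevo}. The main obstacle I foresee is a careful justification that all manipulations are legitimate as identities of formal power series in $\vec\lambda$ with coefficients in the space of linear operators on $\hat{\bfC}$: one needs that $e^{\vec\lambda\cdot\vec O}\ket{\Psi(t)}$ and each nested commutator $ad_{\vec\lambda\cdot\vec O}^{\circ q}(\hat H)$ are well-defined order by order in $\vec\lambda$, and that the swap between $e^{\vec\lambda\cdot\vec O}$ and $\jcOp{\hat H}$ is legitimate (which follows from diagonality of observables). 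Once these formal-series prerequisites are recorded, the derivation reduces to the three substitutions sketched above.
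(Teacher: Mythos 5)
Your proposal is correct and follows essentially the same route as the paper's proof, which likewise differentiates the EMGF, inserts the master equation, and applies the BCH/Hadamard expansion $e^{\lambda A}Be^{-\lambda A}=e^{ad_{\lambda A}}(B)$ in formal-power-series form. The only cosmetic difference is that you split $\cH=\hat{H}-\jcOp{\hat{H}}$ first and cancel the $q=0$ term via jump-closure ($\bra{}\hat{H}=\bra{}\jcOp{\hat{H}}$), whereas the paper invokes the equivalent fact $\bra{}\cH=0$ directly (itself established from jump-closure in Theorem~\ref{thm:CTMCs}), the two cancellations being one and the same computation.
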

\begin{proof}
	In full analogy to the case of rules without conditions~\cite{bdg2019}, the proof  follows from the BCH formula $e^{\lambda A}Be^{-\lambda A}=e^{ ad_{\lambda A}}(B)$ (for $A,B\in End_{\bR}(\hat{\bfC})$). Here, $ad_A^{\circ 0}(B):=B$, $ad_A(B):=AB-BA$  (also referred to as the \emph{commutator} $[A,B]$ of $A$ and $B$), and $ad_A^{\circ(q+1)}(B):=ad_A(ad_A^{\circ q}(B))$ for $q\geq 1$. Finally, the $q=0$ term in the above expression evaluates identically to $0$ due to $\bra{}\cH=0$.
\end{proof}

Combining this theorem with the notion of $\bT$-type jump-closure, one can in favorable cases express the EMGF evolution equation as a PDE on formal power series in $\lambda_1,\dotsc,\lambda_n$ and with $t$-dependent real-valued coefficients. Referring the interested readers to~\cite{bdg2019} for further details on this technique, let us provide here a simple non-trivial example of such a calculation.

\begin{example}\label{ex:ugModel}
    Let us consider a stochastic rewriting system over the category $\bfC=\mathbf{uGraph}$ of finite undirected multigraphs, with objects constrained by the structure constraint $\ac{c}^S_{\mIO}:=\neg \exists(\mIO\hookrightarrow \ti{ac1})\in\cond{\mathbf{uGraph}}$ that prohibits multiedges. %
    Let us consider for type $\bT=SqPO$ the four rule algebra elements based upon rules with conditions which implement edge-creation/-deletion and vertex creation/deletion, respectively, defined as 
\begin{equation*}
\begin{aligned}
        E_{+}&:=\tfrac{1}{2}\delta\left(\ti{REp1}\hookleftarrow \ti{REp2}\hookrightarrow \ti{REp3}\,,\neg \exists \left(\ti{REp4}\hookrightarrow \ti{REp5}\right)\right)\,, &
V_{+}&:=\delta(\ti{RVp1}\hookleftarrow\mIO\hookrightarrow\mIO;\ac{true})\\
E_{-}&:=\tfrac{1}{2}\delta\left(\ti{REm1}\hookleftarrow \ti{REm2}\hookrightarrow \ti{REm3};\ac{true}\right)\,,\quad &
V_{-}&:=\delta(\mIO\hookleftarrow\mIO\hookrightarrow
\ti{RVm1};\ac{true})\,.
\end{aligned}
    \end{equation*}
Here, the prefactors $\tfrac{1}{2}$ in the definition of $E_{\pm}$ are chosen purely for convenience. Note that the rule underlying $E_{+}$ is the only rule requiring a non-trivial application condition, since linking two vertices with an edge might create a multiedge (precisely when the two vertices were already linked). Introducing base rates $\nu_{\pm},\varepsilon_{\pm}\in \bR_{>0}$ and letting $\hat{X}:=\rho(X)$ (for $\rho:=\rho_{\mathbf{uGraph}}^{SqPO}$), we may assemble the infinitesimal generator $\cH$ of a CTMC as
\begin{equation}
\cH=\hat{H}-\jcOp{\hat{H}}\,,\quad
\hat{H}:=\nu_{+}\hat{V}_{+}+\nu_{-}\hat{V}_{-}+\varepsilon_{+}\hat{E}_{+}+\varepsilon_{-}\hat{E}_{-}\,.
\end{equation}
One might now ask whether there is any interesting dynamical structure e.g.\ in the evolution of the moments of the observables that count the number of times each of the transitions of this system is applicable,
\begin{equation}
    O_{\bullet\vert \bullet}:=\jcOp{E_{+}}\,,\;
    O_{\bullet\!-\! \bullet}:=\jcOp{E_{-}}\,,\; O_{\bullet}:=\jcOp{V_{-}}\,.
\end{equation}
The algebraic data necessary in order to formulate EMGF evolution equations are all 
\textbf{\emph{commutators}} of the observables with the contributions $\hat{X}:=\rho(X)$  to the ``off-diagonal part'' $\hat{H}$ of the infinitesimal generator $\cH$. We will present here for brevity just those commutators necessary in order to compute the evolution equations for the averages of the three observables:
\begin{equation}
\begin{aligned}
	[O_{\bullet},\hat{V}_{\pm}]&=\pm \hat{V}_{\pm}\,,\; & 	
	[O_{\bullet},\hat{E}_{\pm}]&=0 &&\\
	[O_{\bullet\vert \bullet},\hat{V}_{+}]&= \hat{A}\,, &
	[O_{\bullet\vert \bullet},\hat{V}_{-}]&= -\hat{B}\,,\; &
	[O_{\bullet\vert \bullet},\hat{E}_{\pm}]&= \mp \hat{E}_{\pm}\\
	[O_{\bullet\!-\! \bullet},\hat{V}_{+}]&=0\,, &
	[O_{\bullet\!-\! \bullet},\hat{V}_{-}]&= -\hat{C}\,,\; &
	[O_{\bullet\!-\! \bullet},\hat{E}_{\pm}]&= \pm \hat{E}_{\pm}
\end{aligned}
\end{equation}
As typical in these types of commutator computations, we find a number of contributions (here $\hat{A}$, $\hat{B}$ and $\hat{C}$) that were neither observables nor contributions to the off-diagonal part of the infinitesimal generator $\cH$:
\begin{equation*}
\begin{aligned}
\hat{A}&:=\rho\left(\delta\left(\ti{RA1}\hookleftarrow \ti{RA2}\hookrightarrow \ti{RA3}\,,\ac{true}\right)\right),\;
\hat{B}:=\rho\left(\delta\left(\ti{RB1}\hookleftarrow \ti{RB2}\hookrightarrow \ti{RB1-2}\,,\neg \exists \left(\ti{RB4}\hookrightarrow \ti{RB5}\right)\right)\right)\\
\hat{C}&:=\rho\left(\delta\left(\ti{RC1}\hookleftarrow \ti{RC2}\hookrightarrow \ti{RC3}\,,\ac{true}\right)\right)\,.
\end{aligned}
\end{equation*}
For the computation of the moment evolution equations, applying the SqPO-type jump-closure operator to these additional contributions,
\begin{equation}
\jcOp{\hat{A}}=O_{\bullet}\,,\quad
\jcOp{\hat{B}}=2O_{\bullet\vert\bullet}\,,\quad
\jcOp{\hat{C}}=2O_{\bullet\!-\!\bullet}\,,
\end{equation}
we discover that all the resulting diagonal linear operators are linear combinations of the observables already encountered in the diagonal part of the infinitesimal generator $\cH$. Picking for simplicity as an initial state $\ket{\Psi(0)}=\ket{\mIO}$ just the empty graph, and invoking the SqPO-type jump-closure property (cf.\ Theorem~\ref{thm:CTMCs}) repeatedly in order to evaluate $\langle [O_P,\hat{H}]\rangle(t)=\langle \jcOp{[O_P,\hat{H}]}\rangle(t)$, the moment EGF evolution equation~\eqref{eq:MEGFevo} specializes to the following ``Ehrenfest-like''~\cite{bdg2019} ODE system:
\begin{equation*}
\begin{aligned}
\tfrac{d}{dt}\langle O_{\bullet}\rangle(t)&=\langle [O_{\bullet},H]\rangle(t)=\nu_{+}-\nu_{-}\langle O_{\bullet}\rangle(t)\\
\tfrac{d}{dt}\langle O_{\bullet\vert\bullet}\rangle(t)&=\langle [O_{\bullet\vert\bullet},H]\rangle(t)
=\nu_{+}\langle O_{\bullet}\rangle(t)
-(2\nu_{-}+\varepsilon_{+})\langle O_{\bullet\vert\bullet}\rangle(t)
+\varepsilon_{-}\langle O_{\bullet\!-\!\bullet}\rangle(t)\\
\tfrac{d}{dt}\langle O_{\bullet\!-\!\bullet}\rangle(t)&=\langle [\langle O_{\bullet\!-\!\bullet}\rangle(t),H]\rangle(t)
=\varepsilon_{+}\langle O_{\bullet\vert\bullet}\rangle(t)
-(2\nu_{-}+\varepsilon_{-})\langle O_{\bullet\!-\!\bullet}\rangle(t)\\
\langle O_{\bullet}\rangle(0)&=\langle O_{\bullet\vert\bullet}\rangle(t)=\langle O_{\bullet\!-\!\bullet}\rangle(t)=0\,.
\end{aligned}
\end{equation*}
This ODE system may be solved exactly (see~\ref{app:se}). We depict in Figure~\ref{fig:evoEx} two exemplary evolutions of the three average pattern counts for different choices of parameters. Since due to SqPO-semantics the vertex deletion and creation transitions are entirely independent of the edge creation and deletion transitions, the vertex counts stabilize on a Poisson distribution of parameter $\nu_{+}/\nu_{-}$ (where we only present the average vertex count value here). As for the non-linked vertex pair and edge patter counts, the precise average values are sensitive to the parameter choices (i.e., whether or not vertices tend to be linked by an edge or not may be freely tuned in this model via adjusting the parameters).
\end{example}

\begin{figure}[t]\label{fig:evoEx}
 \centering
    \subfigure[\label{fig:a}Vertices tend to be linked.]{\includegraphics[width=0.45\textwidth]{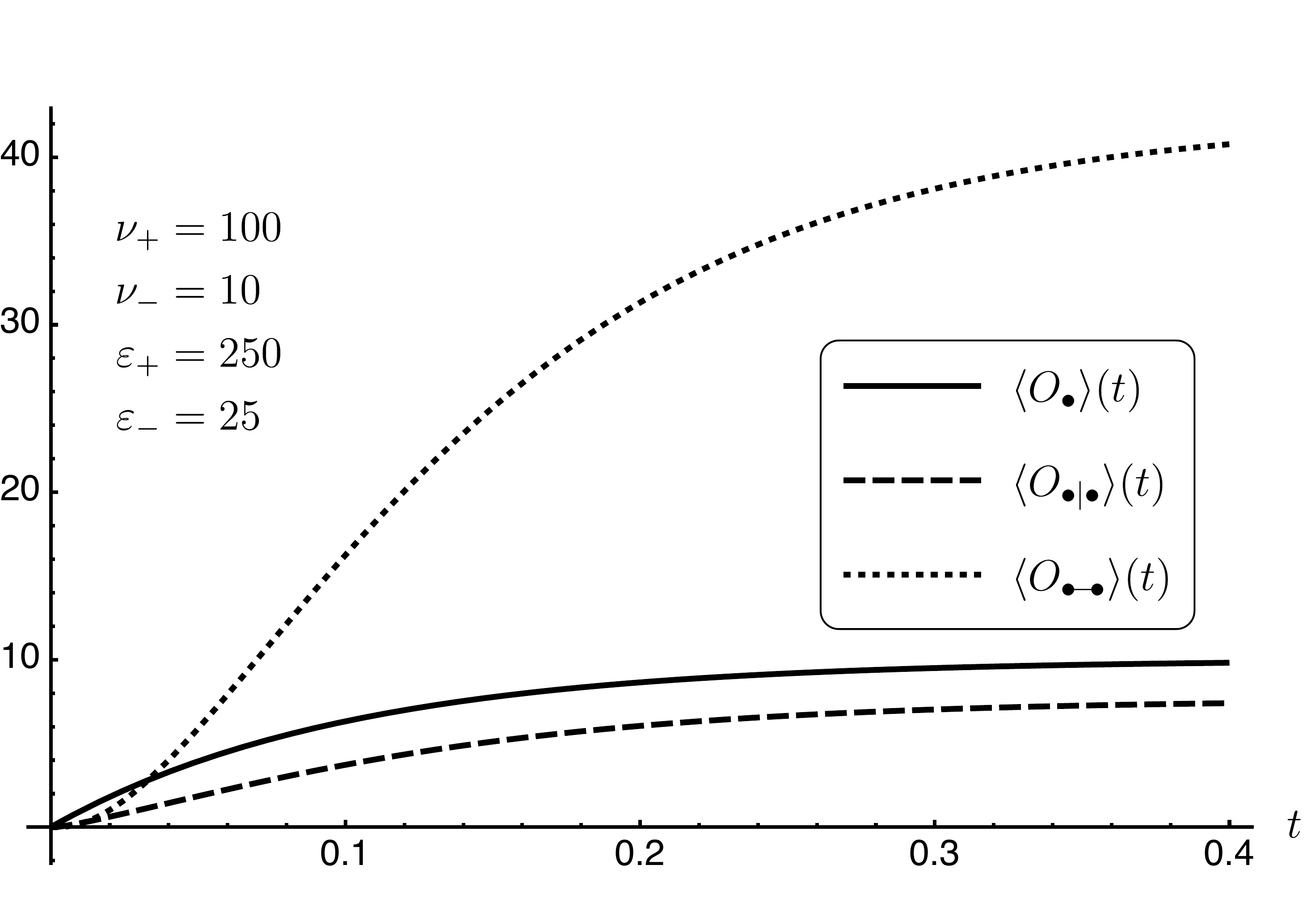}}
     \subfigure[\label{fig:b}Vertices tend to be unlinked.]{\includegraphics[width=0.45\textwidth]{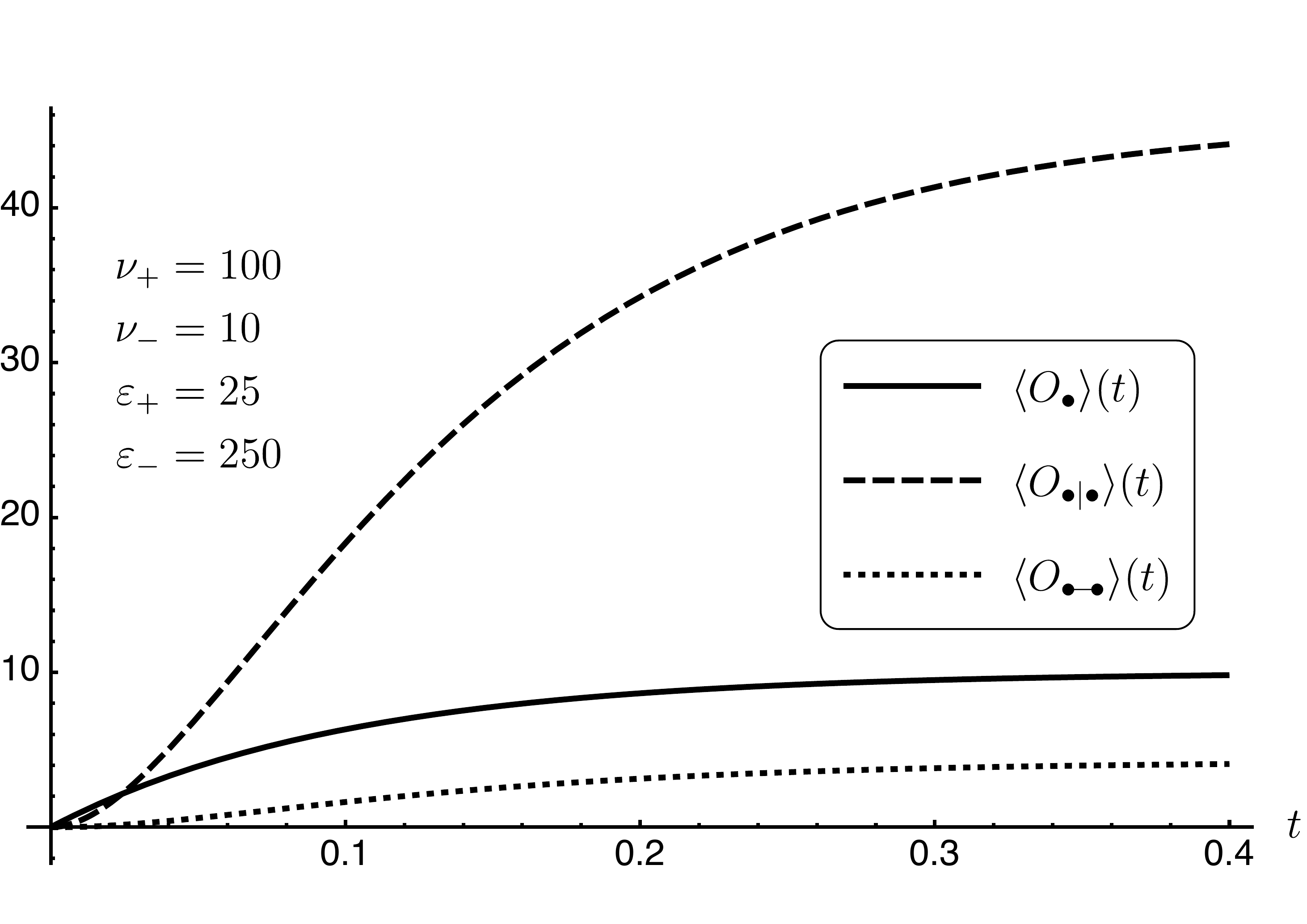}}
\caption{Time-evolutions of pattern count observables for different parameter choices.}
\end{figure}

While the example presented was chosen mainly to illustrate the computational techniques, it highlights the typical feature of the emergence of contributions in the relevant (nested) commutator calculations that may in general not necessarily have been included in the non-diagonal part $\hat{H}$ of the infinitesimal generator of the CTMC (leading to the phenomenon of so-called moment-ODE ``non-closure'', cf.\ e.g.\ Example~\ref{ex:KappaWorkedExample} in Section~\ref{sec:bcgr}). We refer the interested readers to~\cite{bdg2019} for an extended discussion of this phenomenon, and for computation strategies for higher-order moment evolution equations.

\section{Restricted rewriting theories}\label{sec:rrt}

One of the quintessential and well-known problems in working in the setting of rules with conditions is the increase in complexity each time a $\Shift$ construction is employed in rule composition operations. Concretely, even for moderately complex original application conditions, performing a $\Shift$ operation in general leads to a shifted condition that consists of a large number of atomic subformulae (cf.\ Example~\ref{ex:sgraph-cpc}). This is particularly problematic for the implementation of rule-algebraic computations, since here it is necessary to perform comparison operations on the outcomes of rule composition operations (i.e., isomorphism checks on the ``plain'' composite rules, and equivalence checks on the application conditions of the composite rules). In this section, we present a number of technical developments that aim to render rule-algebraic computations more efficient. We first demonstrate in Section~\ref{sec:crrt} that the concurrency theorems in both DPO- and SqPO-type rewriting may be refined into variants that are applicable in the setting of \emph{restricted rewriting}, wherein the rewriting of objects satisfying global structure constraints is considered. We then present in Section~\ref{sec:aps} a paradigm inspired by the \KAP{} framework (and in close analogy to~\cite{Danos2014}; see also Section~\ref{sec:bcgr}), namely a class of restricted rewriting theories with a certain characteristic structure on global constraints that permits to further simplify the implementations.

\subsection{Concurrency theorems for restricted rewriting}\label{sec:crrt}

An important concept in the work of Habel and Pennemann~\cite{habel2009correctness,Pennemann:aa} in their framework for rewriting with conditions has been a characterization of application conditions for rules by their capacity to preserve or to guarantee constraints on objects, which we slightly reformulate here as follows (compare~\cite[Defn.~9 and Cor.~5]{habel2009correctness}):

\begin{lemma}[Constraint-preserving and -guaranteeing completions]\label{lem:cgcp}
Let $\bfC\in \cM-\mathbf{CAT}_{\bT}$ be an $\cM$-adhesive category suitable for rewriting of type $\bT\in\{DPO,SqPO\}$, and let $R\equiv[(r,\ac{c}_I)]_{\sim}\in \LinAc{\bfC}$ be a linear rule with application condition. Given a \emph{global constraint} $\ac{c}_{\mIO}$ (considered to be kept fixed throughout all computations), let $\widetilde{\ac{c}_I}$ denote the \emph{constraint-guaranteeing completion} and $\overline{\ac{c}_I}$ the \emph{constraint-preserving completion} of $\ac{c}_I$ w.r.t.\ to $\ac{c}_{\mIO}$, with the defining properties
\begin{subequations}
\begin{align}
&\forall X\in \obj{\bfC}\,,\; m\in \MatchGT{\bT}{\widetilde{R}}{X}: R_m(X) \vDash \ac{c}_{\mIO} \label{def:cgua}\\
&\forall X\in \obj{\bfC}\,,\; m\in \MatchGT{\bT}{\overline{R}}{X}:\; X\vDash\ac{c}_{\mIO} \Rightarrow R_m(X) \vDash \ac{c}_{\mIO} \label{def:cpres}\,,
\end{align}
\end{subequations}
where we used the notations $\widetilde{R}\equiv[(r,\widetilde{\ac{c}_I})]_{\sim}$ and $\overline{R}\equiv[(r,\overline{\ac{c}_I})]_{\sim}$. Then the two variants of completions are computed from $\ac{c}_I$ as follows:
\begin{subequations}\label{eq:lemcgcp}
\begin{align}
\widetilde{\ac{c}_{I}}&= \ac{c}_I \land \Trans(r, \Shift(O\hookleftarrow \mIO, \ac{c}_{\mIO})) \land \Shift(I\hookleftarrow \mIO,\ac{c}_{\mIO})\\
\overline{\ac{c}_I}& = \Shift(I\hookleftarrow \mIO,\ac{c}_{\mIO})\Rightarrow \widetilde{\ac{c}_I}\,. 
\end{align}
\end{subequations}
Here, for two conditions $\ac{c},\ac{c}'\in \cond{\bfC}$ (with both conditions defined over the same object), we define the \emph{implication operation} as $\ac{c}'\Rightarrow \ac{c} := \ac{c} \lor (\neg \ac{c}')$. Finally, the equations in~\eqref{eq:lemcgcp} encode the additional useful relation
\begin{equation}\label{eq:cGuaV2}
 \widetilde{\ac{c}_I} =\overline{\ac{c}_I} \land \Shift(I\hookleftarrow \mIO,\ac{c}_{\mIO})\,. 
\end{equation}
\end{lemma}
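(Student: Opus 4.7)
The plan is to verify the two defining properties \eqref{def:cgua} and \eqref{def:cpres} directly from the semantic characterisations of $\Shift$ and $\Trans$ recorded in Definition~\ref{def:shiftTrans}, and then to derive the auxiliary identity~\eqref{eq:cGuaV2} by purely propositional manipulation.

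First I would treat the constraint-guaranteeing case. Given an admissible match $m\in \MatchGT{\bT}{\widetilde{R}}{X}$, the morphism $(\mIO\hookrightarrow R_m(X))$ factors uniquely through $O$ via the right-hand square of the direct derivation, so
\begin{equation*}
R_m(X)\vDash \ac{c}_{\mIO}
\;\Leftrightarrow\;
(O\hookrightarrow R_m(X))\vDash \Shift(O\hookleftarrow\mIO,\ac{c}_{\mIO})
\;\Leftrightarrow\;
m\vDash \Trans\!\bigl(r,\Shift(O\hookleftarrow\mIO,\ac{c}_{\mIO})\bigr),
\end{equation*}
where the first equivalence uses the defining property of $\Shift$ and the second uses that of $\Trans$. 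Since the conjunct $\Trans(r,\Shift(O\hookleftarrow\mIO,\ac{c}_{\mIO}))$ appears in $\widetilde{\ac{c}_I}$ by construction, any match satisfying $\widetilde{\ac{c}_I}$ automatically yields $R_m(X)\vDash\ac{c}_{\mIO}$, which is~\eqref{def:cgua}. The remaining conjuncts $\ac{c}_I$ and $\Shift(I\hookleftarrow\mIO,\ac{c}_{\mIO})$ simply enforce the original applicability constraints and the (desired) compatibility of the input with the global constraint, respectively.

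Next I would handle the constraint-preserving case. Suppose $X\vDash\ac{c}_{\mIO}$ and $m\vDash\overline{\ac{c}_I}$. Factoring $(\mIO\hookrightarrow X)$ through $I$ via $m$ and invoking the $\Shift$ property gives $m\vDash\Shift(I\hookleftarrow\mIO,\ac{c}_{\mIO})$. Unfolding $\ac{c}'\Rightarrow\ac{c}:=\ac{c}\lor\neg\ac{c}'$, the assumption $m\vDash\overline{\ac{c}_I}$ then forces $m\vDash\widetilde{\ac{c}_I}$, and the first part of the proof yields $R_m(X)\vDash\ac{c}_{\mIO}$, which is~\eqref{def:cpres}. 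Conversely, if $X\not\vDash\ac{c}_{\mIO}$ then the implication is vacuous, confirming that $\overline{\ac{c}_I}$ is the weakest condition guaranteeing constraint preservation given the constraint-guaranteeing completion.

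Finally the identity~\eqref{eq:cGuaV2} follows by elementary Boolean reasoning: writing $S:=\Shift(I\hookleftarrow\mIO,\ac{c}_{\mIO})$, one has
\begin{equation*}
\overline{\ac{c}_I}\land S \;=\; (\widetilde{\ac{c}_I}\lor\neg S)\land S \;=\; \widetilde{\ac{c}_I}\land S \;=\; \widetilde{\ac{c}_I},
\end{equation*}
the last step using that $S$ is itself a conjunct of $\widetilde{\ac{c}_I}$. The only genuinely subtle point in the argument is the implicit use of the existence of the direct derivation (so that $R_m(X)$ and the cospan $O\hookrightarrow R_m(X)\hookleftarrow X$ are actually defined): in the DPO case this requires the pushout complement to exist, but in both DPO and SqPO semantics the remaining squares are pushouts, so the two universal factorisations of $(\mIO\hookrightarrow R_m(X))$ through $O$ and through $X$ exist and the $\Shift$/$\Trans$ equivalences apply without further hypotheses.
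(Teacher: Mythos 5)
Your proposal is correct and follows essentially the same route as the paper's own proof: both arguments verify \eqref{def:cgua} and \eqref{def:cpres} directly from the semantic characterisations of $\Shift$ and $\Trans$ in Definition~\ref{def:shiftTrans} (via the comatch factorisation of $(\mIO\hookrightarrow R_m(X))$ and the factorisation of $(\mIO\hookrightarrow X)$ through the match), and both obtain \eqref{eq:cGuaV2} from the tautology $(\ac{c}'\Rightarrow\ac{c})\land\ac{c}'=\ac{c}\land\ac{c}'$ together with the fact that $\Shift(I\hookleftarrow\mIO,\ac{c}_{\mIO})$ is a conjunct of $\widetilde{\ac{c}_I}$. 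The only differences are organisational—you derive the preserving property by unfolding the implication directly rather than first invoking \eqref{eq:cGuaV2}—and your closing remark about the existence of the direct derivation makes explicit an admissibility point the paper leaves implicit.
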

\begin{proof}
The proof for the claim regarding $\widetilde{\ac{c}_I}$ follows form the defining properties of $\Trans$ and $\Shift$ (Theorem~\ref{thm:STdefns}) and from the compositionality property of $\Shift$ (cf.\ Theorem~\ref{thm:STcompComp}), where $(m^{*}: \tilde{R}_m(X)\hookleftarrow O)\in \cM$ denotes the comatch of the match $(m:X\hookleftarrow I)$:
\begin{align*}
m&\vDash\widetilde{\ac{c}_I}\;\Leftrightarrow \; m\vDash \Trans(r,\Shift(O\hookleftarrow \mIO,\ac{c}_{\mIO}))\;\land\;
m\vDash \Shift(I\hookleftarrow \mIO,\ac{c}_{\mIO})\\
\Leftrightarrow\quad m^{*}&\vDash \Shift(O\hookleftarrow \mIO,\ac{c}_{\mIO})\;\land\; m\vDash \Shift(I\hookleftarrow \mIO,\ac{c}_{\mIO})
\qquad\text{(by definition of $\Trans$)}\\
\Leftrightarrow\quad m^{*}\circ (O\hookleftarrow \mIO) 
&= (\tilde{R}_m(X)\hookleftarrow \mIO)\vDash \ac{c}_{\mIO}\;\land\;
m\circ(I\hookleftarrow \mIO)=(X\hookleftarrow \mIO) \vDash \ac{c}_{\mIO} \qquad\text{(by compositionality of $\Shift$)}
\end{align*}%
We have thus proved that $(m:X\hookleftarrow I)\vDash \widetilde{\ac{c}_I}$ guarantees that both $X\vDash \ac{c}_{\mIO}$ and $\tilde{R}_m(X)\vDash\ac{c}_{\mIO}$.

To prove the statement regarding the constraint-preserving completion $\overline{\ac{c}_I}$, we utilize the following logical tautology (for $Z\in \obj{\bfC}$ and $\ac{c}_Z,\ac{c}'_Z\in \cond{\bfC}$):
\begin{equation} \label{eq:usefulTautology}
(\ac{c}_Z'\Rightarrow \ac{c}_Z)\land \ac{c}_Z' = (\ac{c}_Z\lor (\neg\ac{c}_Z'))\land \ac{c}_Z' 
=(\ac{c}_Z\land \ac{c}_Z') \lor ((\neg\ac{c}_Z')\land \ac{c}_Z') =\ac{c}_Z\land \ac{c}_Z'\,.
\end{equation}
Applying this tautology to the case at hand, we may derive the statement of~\eqref{eq:cGuaV2}:
\begin{equation}
\overline{\ac{c}_I}
\land 
\Shift(I\hookleftarrow \mIO,\ac{c}_{\mIO})
=(\Shift(I\hookleftarrow \mIO,\ac{c}_{\mIO})
\Rightarrow \widetilde{\ac{c}_I})\land \Shift(I\hookleftarrow \mIO,\ac{c}_{\mIO}) =\widetilde{\ac{c}_{I}}\,.
\end{equation}
Finally, for some admissible match $(m:X\hookleftarrow I)$ of $\overline{R}$ into $X$, by definition of $\Shift$, $X\vDash \ac{c}_{\mIO}$ together with $(X\hookleftarrow \mIO)=m\circ (I\hookleftarrow \mIO)$ implies that $m\vDash \Shift(I\hookleftarrow \mIO, \ac{c}_{\mIO})$, so that
\begin{equation}
X\vDash \ac{c}_{\mIO}\land m\vDash \bar{\ac{c}}_{I}\; \Leftrightarrow \;
m\vDash \Shift(I\hookleftarrow \mIO, \ac{c}_{\mIO})\land m\vDash \bar{\ac{c}}_{I}
\land m \vDash (\Shift(I\hookleftarrow \mIO, \ac{c}_{\mIO})\land \bar{\ac{c}}_{I})
\;\Leftrightarrow\; m\vDash \tilde{\ac{c}}_I\,,
\end{equation}
so that in particular $\bar{R}_m(X)\vDash \ac{c}_{\mIO}$.
\end{proof}

\begin{remark} It is worthwhile emphasizing that completing the application condition of an arbitrary linear rule (which possibly already carries some non-trivial application condition prior to completion) into a constraint-\emph{guaranteeing} completion yields a rule that does not have any admissible matches into objects that do not themselves satisfy the global constraint; in contrast, the constraint-\emph{preserving} completion only guarantees that the global constraint is preserved upon application along admissible matches into objects that themselves satisfy the global constraint. Notably, objects that do \emph{not} satisfy the global constraint might still possess admissible matches for the latter setting, so that it is a priori not evident that one may build a consistent rule-composition operation with just the weaker information of constraint-preserving completions available. Finally, it is possible in general that the constraint-guaranteeing or -preserving completion of a rule might yield an application condition that is equivalent to $\ac{false}$, which signals that the particular rule is intrinsically incompatible with the chosen global constraint $\ac{c}_{\mIO}$. 
\end{remark}

\begin{definition}[Constraint completions for ``plain'' rules]\label{def:constrcpr}
In a slight extension of the terminology introduced in Lemma~\ref{lem:cgcp}, given a ``plain'' rule $r\in \Lin{\bfC}$ (i.e., a rule without application condition), we will define the \emph{constraint-guaranteeing completion} $\tilde{r}$ and the \emph{constraint-preserving completion} $\bar{r}$ of the ``plain'' rule $r$ as 
\begin{equation}
\tilde{r}:=[\widetilde{(r,\ac{true})}]_{\sim}\,,\quad \bar{r}:=[\overline{(r,\ac{true})}]_{\sim}\,.
\end{equation}
In other words, $\tilde{r},\bar{r}\in \LinAc{\bfC}$ are defined\footnote{This definition is consistent since it is clear from the definition of DPO- and SqPO-type semantics that the rule $R=(r,\ac{true})$ emulates precisely the semantics of the rule $r$ in $\bT$-type rewriting (for $\bT\in \{DPO,SqPO\}$) for rules without conditions: any $\cM$-morphism $(m:I\hookrightarrow X)$ satisfies the trivial condition $\ac{true}$, and thus is a $\bT$-admissible match of $r$ (Definition~\ref{def:dd}) if it satisfies the respective admissibility condition for ``plain'' rules; moreover, for two ``plain'' rules $r_1,r_2\in \Lin{\bfC}$, letting $R_j=(r_j,\ac{true})$ ($j=1,2$), an $\cM$-span $(I_2\hookleftarrow M\hookrightarrow O_1)$ qualifies as a $\bT$ admissible match of $R_2$ into $R_1$ iff it is a $\bT$-admissible match of $r_1$ into $r_1$ in the $\bT$-semantics without conditions, and composite rules of $R_2$ with $R_1$ all have trivial conditions $\ac{true}$.} as the constraint-guaranteeing and -preserving completions of the rule $R=(r,\ac{true})$, respectively. We will furthermore employ the notational conventions $\tilde{r}=[(r,\widetilde{\ac{c}_I})]_{\sim}$ and $\bar{r}=[(r,\overline{\ac{c}_I})]_{\sim}$ (typically in order to explicitly refer to the application conditions $\widetilde{\ac{c}_I}$ or $\overline{\ac{c}_I}$, respectively) .
\end{definition}

\begin{example}\label{ex:sgraph-cpc}
To illustrate the utility of the notion of constraint-\emph{preserving} completions as opposed to constraint-\emph{guaranteeing} completions of application conditions in practical computations, consider the setting of rewriting in $\mathbf{uGraph}$ and with global constraint $\ac{c}_{\mIO}$ chosen such as to prohibit multi-edges (in the notational convention of~\eqref{eq:constrNotation}):
\begin{equation}\label{eq:nmec}
	\ac{c}_{\mIO} :=  {\color{red} \neg \exists\left(\ti{condUSG} \right)}
\end{equation}
Let $r_{E_{\pm}}\in \Lin{\mathbf{uGraph}}$ denote the linear rules that link/unlink two vertices with an edge:
\begin{equation}
r_{E_{+}}:=\left(\ti{REplus1}\hookleftarrow \ti{REplus2}\hookrightarrow \ti{REplus3}\right)\,, \quad
r_{E_{-}}:=\left(\ti{REminus1}\hookleftarrow \ti{REminus2}\hookrightarrow \ti{REminus3}\!\!\!\right)
\end{equation}
In order to compute the constraint-guaranteeing and -preserving completions for $R_{e_{\pm}}:=[(r_{E_{\pm}},\ac{true})]_{\sim}$, we first need to compute the $\Shift$ and $\Trans$ operations to extend $\ac{c}_{\mIO}$ to the input interfaces of the two rules:
{\gdef\tpScale{0.43}
\begin{equation}
\begin{aligned}
\Shift\left(
\mIO\hookrightarrow \ti{shiftA1},
{\color{red}\neg \exists\left(\ti{shiftA2}
\right)}
\right)&=
\bigwedge_{{\color{red}N'\in \cN'}}\neg \exists\left(
     \ti{shiftA3}\hookrightarrow {\color{red}N'}\right)\,,	&
{\color{red}\cN'}&=\left\{
\ti{shiftNp1},\;
\ti{shiftNp2},\;
\ti{shiftNp3},\;
\ti{shiftNp4}
\right\}
\\
\Trans\left(r_{E_{-}},\Shift\left(
\mIO\hookrightarrow \ti{shiftA1},
{\color{red}\neg \exists\left(\ti{shiftA2}
\right)}
\right)\right)&=
\bigwedge_{{\color{red}N''\in \cN''}}\neg \exists\left(
     \ti{shiftB3}\hookrightarrow {\color{red}N''}\right)\,,		&
{\color{red}\cN''}&=\left\{
\ti{shiftNpp1},\;
\ti{shiftNpp2},\;
\ti{shiftNpp3},\;
\ti{shiftNpp4}
\right\}
\\
\Shift\left(
\mIO\hookrightarrow \ti{shiftC1},
{\color{red}\neg \exists\left(\ti{shiftC2}
\right)}
\right)&=
\bigwedge_{{\color{red}N'''\in \cN'''}}\neg \exists\left(
     \ti{shiftC3}\hookrightarrow {\color{red}N'''}\right)\,,	&
{\color{red}\cN'''}&=\left\{
\ti{shiftNppp1},\;
\ti{shiftNppp2},\;
\ti{shiftNppp3},\;
\ti{shiftNppp4},\;
\ti{shiftNppp5}
\right\}
\\
\Trans\left(r_{E_{+}},\Shift\left(
\mIO\hookrightarrow \ti{shiftD1},
{\color{red}\neg \exists\left(\ti{shiftD2}
\right)}
\right)\right)&=
\bigwedge_{{\color{red}N''''\in \cN''''}}\neg \exists\left(
     \ti{shiftD3}\hookrightarrow {\color{red}N''''}\right)\,,	&
{\color{red}\cN''''}&=\left\{
\ti{shiftNpppp1},\;
\ti{shiftNpppp2},\;
\ti{shiftNpppp3},\;
\ti{shiftNpppp4},\;
\ti{shiftNpppp5}
\right\}
\end{aligned}
\end{equation}
\gdef\tpScale{0.7}}
We may then compute the two sets of completions according to Lemma~\ref{lem:cgcp} as follows:
\begin{equation}
\begin{aligned}
\widetilde{\ac{c}_{I_{E_{+}}}}&= \bigwedge_{N\in \cN'\cup\cN'''} \neg \exists\left(\ti{ncContr1}
\hookrightarrow N \right)\,,\quad &
\widetilde{\ac{c}_{I_{E_{-}}}}&= \bigwedge_{N\in \cN''\cup\cN''''} \neg \exists\left(\ti{ncContr2}\hookrightarrow N \right)\\
\overline{\ac{c}_{I_{E_{+}}}}&= \neg \exists\left(\ti{ncContr3}\hookrightarrow \ti{ncContr4} \right)\,, &
\overline{\ac{c}_{I_{E_{-}}}}&= \ac{true}\,.
\end{aligned}
\end{equation}
\end{example}
Consequently, even for this particularly simple example of rewriting rules and choice of global constraint, it is evident that being able to work with constraint-\emph{preserving} rather than constraint-\emph{guaranteeing} completions of application conditions offers a significant decrease in complexity. In light of the bio- and organo-chemical graph rewriting in the form introduced in Sections~\ref{sec:bcgr} and~\ref{sec:ocgr}, Example~\ref{ex:sgraph-cpc} moreover harbors yet another important consequence, whose statement requires the following technical lemma:
\begin{lemma}[Bridges and the $\Shift$ construction]\label{lem:Bridges}
Let $A\in \obj{\mathbf{uGraph}}$ be an undirected graph with the special property that there exists an edge $e\in E_A$ such that $A\setminus e = A_1 + A_2$ (i.e., removing $e$ renders two disconnected subgraphs $A_1$ and $A_2$; $e$ is referred to as a \emph{bridge} in graph theory). Then for any two $\cM$-morphisms $(\beta_j:A_j\hookrightarrow B_j)\in \cM$ ($j=1,2$), letting $B:=\pO{A\hookleftarrow A_1+A_2\hookrightarrow B_1+B_2}$, one finds
\begin{equation}
\Shift(A_1+A_1 \xhookrightarrow{[\beta_1,\beta_2]} B_1+B_2, \neg \exists(A_1+A_2\hookrightarrow A)) =
\neg \exists(B_1+B_2\hookrightarrow B)\,.
\end{equation}
\end{lemma}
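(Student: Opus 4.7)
The plan is to unfold the $\Shift$ algorithm of Theorem~\ref{thm:STdefns} directly, recognizing that the bridge hypothesis collapses the usual conjunction of contributions down to a single clause. First, the outer negation peels off by rule~\eqref{eq:ShiftAlgo-neg}, so it suffices to establish
\[
\Shift([\beta_1,\beta_2],\,\exists(A_1+A_2 \hookrightarrow A,\ac{true})) \equiv \exists(B_1+B_2 \hookrightarrow B,\ac{true}),
\]
after which the inner $\ac{true}$ is handled trivially by~\eqref{eq:ShiftAlgo-triv}.

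Next, the iterative rule~\eqref{eq:ShiftAlgo-iter} expresses the left-hand side as a conjunction over iso-classes of triples $(a',x',y')\in\cM^{\times 3}$ satisfying $a'\circ x' = (A_1+A_2 \hookrightarrow A)$ and $y'\circ x' = [\beta_1,\beta_2]$, each contributing $\exists(\bar a, \ac{true})$ with $\bar a : B_1+B_2 \hookrightarrow A_{X'}$ and $A_{X'} := \pO{A \xhookleftarrow{a'} X' \xhookrightarrow{y'} B_1+B_2}$. The \emph{trivial} triple $X' = A_1+A_2$ with $x' = \mathrm{id}$, $a' = (A_1+A_2\hookrightarrow A)$, $y' = [\beta_1,\beta_2]$ produces exactly $A_{X'} = B$ by the defining pushout of $B$, so this contribution is $\exists(B_1+B_2 \hookrightarrow B,\ac{true})$.

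The crux of the argument is then to rule out every other triple. If $x':A_1+A_2 \hookrightarrow X'$ is a proper extension and $a':X'\hookrightarrow A$ factors the given inclusion, then since $A$ is obtained from $A_1+A_2$ by adjoining the single bridge edge $e$, the image of $a'$ must contain $e$, forcing $X' \cong A$ and $a'$ an iso. But then $y':A\hookrightarrow B_1+B_2$ with $y'\circ x' = [\beta_1,\beta_2]$ would have to send $e$—whose endpoints lie in $A_1$ and $A_2$ and are mapped via $[\beta_1,\beta_2]$ into the separate components $B_1$ and $B_2$—to an edge in the coproduct $B_1+B_2$ joining those two components. No such edge exists, since the incidence map $i_{B_1+B_2}$ is the disjoint union of $i_{B_1}$ and $i_{B_2}$ by the comma-category description of $\mathbf{uGraph}$ in Definition~\ref{def:ugraph}. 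Hence the trivial triple is the only surviving one, and applying the negation rule~\eqref{eq:ShiftAlgo-neg} yields the claim.

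The main obstacle I anticipate is bookkeeping the ``up to isomorphism'' enumeration in~\eqref{eq:ShiftAlgo-iter} carefully, so that the bridge-based exclusion genuinely sweeps through every iso-class of $(a',x',y')$ rather than a chosen system of representatives. A secondary subtlety is the concrete step that any sub-object of $A$ strictly containing $A_1+A_2$ must contain $e$ — this should follow from the explicit epi-$\cM$-factorization construction in $\mathbf{uGraph}$ given earlier in the excerpt, but I would verify it explicitly to be safe.
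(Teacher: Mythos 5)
Your proposal is correct and takes essentially the same route as the paper's own proof in \ref{app:proof-lem-bridges}: both unfold the recursive $\Shift$ algorithm of Theorem~\ref{thm:STdefns} and observe that the bridge hypothesis forces $(a,\mathrm{id}_{A_1+A_2},[\beta_1,\beta_2])$ to be the unique (up to span-isomorphism) triple contributing to the shifted condition, whose associated pushout is precisely the $B$ defined in the statement. The only difference is that you spell out the uniqueness step explicitly --- any intermediate subobject $X'$ strictly extending $A_1+A_2$ must contain the bridge $e$, hence $X'\cong A$, and then no $\cM$-morphism $y':A\hookrightarrow B_1+B_2$ compatible with $[\beta_1,\beta_2]$ can exist because $e$ would have to map to an edge joining the two coproduct components --- whereas the paper dismisses this as ``straightforward to verify''.
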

\begin{proof}
See~\ref{app:proof-lem-bridges}.
\end{proof}
Crucially, the technical results presented in Example~\ref{ex:sgraph-cpc} and Lemma~\ref{lem:Bridges} permit to understand that in the important application scenario of rewriting \emph{simple} undirected graphs via imposing a global non-multi-edge constraint $\ac{c}_{\mIO}$ as defined in~\eqref{eq:nmec}, all rewriting rules obtained via composing (finitely many) copies of the edge-linking and -unlinking rules $r_{E_{\pm}}$ possess extremely simple constraint-preserving application conditions when computing their constraint-preserving completions w.r.t.\ $\ac{c}_{\mIO}$. Concretely, any composite rule of the type described possesses a constraint-preserving application condition that merely states that certain pairs of vertices in the input pattern of the rule must not be linked. With a similar statement true for \emph{typed} undirected graphs (as will be the starting point for both bio- and organo-chemical rewriting systems), this significantly reduces the complexity of algorithms for rule compositions, direct derivations and ultimately rule-algebraic CTMC computations, especially as it will often be the case in applications that the global constraint $\ac{c}_{\mIO}$ consists of a large number of atomic constraints.

In light of this empirical finding, it is thus highly desirable to find an algorithmic approach to work consistently with the constraint-preserving completions also in rule-composition computations, which motivates the introduction of the following ``restricted'' variants of the concurrency theorems:
\begin{theorem}[Restricted Concurrency Theorems]\label{thm:rcct}
Let $\bfC\in \cM-\mathbf{CAT}_{\bT}$ be an $\cM$-adhesive category suitable for type $\bT\in \{DPO,SqPO\}$ rewriting, and fix a \emph{global constraint} $\ac{c}_{\mIO}\in \cond{\bfC}$. Given rewriting rules $R_j\equiv(r_j, \ac{c}_{I_j})\in \LinAc{\bfC}$ ($j=1,2$), denote by $\widetilde{R}_j\equiv [(r_j,\widetilde{\ac{c}_{I_j}})]_{\sim}$ and $\overline{R}_j\equiv[(r_j,\overline{\ac{c}_{I_j}})]_{\sim}$ the constraint-guaranteeing and constraint-preserving completions of the two rules. Then the following properties hold:
\begin{enumerate}[label=(\roman*)]
\item For all $\bT$-admissible matches $\tilde{\mu}=(I_2\hookleftarrow M_{21}\hookrightarrow O_1)\in \MatchGT{\bT}{\widetilde{R}_2}{\widetilde{R}_1}$ of $\widetilde{R}_2$ into $\widetilde{R}_1$, performing the $\bT$-type rule composition according to~\eqref{eq:defRcomp}, we find that the application condition $\ac{c}_{\tilde{I}_{21}}$ of the composite rule, computed according to~\eqref{eq:acRcomp} as
\begin{equation}
\ac{c}_{\tilde{I}_{21}} = \Trans(P_{21}\leftharpoonup I_{21},\Shift(P_{21}\hookleftarrow I_2,\widetilde{\ac{c}_{I_2}})) \land \Shift(I_{21}\hookleftarrow I_1,\widetilde{\ac{c}_{I_1}})\,.
\end{equation}
may be equivalently expressed as\footnote{Note that the condition $\ac{c}_{\widetilde{I}_{21}}$ does \emph{not} constitute the constraint-guaranteeing completion of the composite of $\widetilde{R}_2$ with $\widetilde{R}_1$ along match $\tilde{\mu}$, but merely the condition computed by composing the two rules along the standard $\bT$-type composition operation (i.e., one would need an additional explicit completion operation according to Lemma~\ref{lem:cgcp} in order to obtain the constraint-guaranteeing variant of the condition). In contrast, $\overline{\ac{c}_{\overline{I}_{21}}}$ is precisely the constraint-preserving completion of the condition $\ac{c}_{\overline{I}_{21}}$ obtained by computing the $\bT$-type composition of the rules $\overline{R}_2$ and $\overline{R}_1$ along $\tilde{\mu}$, whence the subtle difference in the notations utilized.}
\begin{equation}
\begin{aligned}
\ac{c}_{\tilde{I}_{21}} & \dot{\equiv}\, \overline{\ac{c}_{\overline{I}_{21}}} \land \Shift(I_{21}\hookleftarrow \mIO,\ac{c}_{\mIO})\\
\overline{\ac{c}_{\overline{I}_{21}}}&:=
\Shift(I_{21}\hookleftarrow \mIO,\overline{\ac{c}_{\mIO}}) \Rightarrow \bigg(
\Trans(P_{21}\leftharpoonup I_{21},\Shift(P_{21}\hookleftarrow I_2,\overline{\ac{c}_{I_2}})) \land \Shift(I_{21}\hookleftarrow I_1,\overline{\ac{c}_{I_1}})\bigg)\,.
\end{aligned}
\end{equation}
\item For every object $\overline{X}\in \obj{\bfC}$ such that $\overline{X}\vDash \ac{c}_{\mIO}$, there exists an \emph{isomorphism} $\alpha:\widetilde{\cA}\rightarrow \overline{\cA}$ between sets of pairs of admissible matches of the form
\begin{equation}
\begin{aligned}
\widetilde{\cA} &:= \{ (\tilde{m}_1,\tilde{m}_2) \mid \tilde{m}_1\in \MatchGT{\bT}{\widetilde{R}_1}{\overline{X}}\,,\; 
\tilde{m}_2\in \MatchGT{\bT}{\widetilde{R}_2}{\widetilde{R}_{1_{\tilde{m}_1}}(\overline{X})}\}\\
\overline{\cA} &:= \{ (\bar{m}_1,\bar{m}_2) \mid \bar{m}_1\in \MatchGT{\bT}{\overline{R}_1}{\overline{X}}\,,\; 
\bar{m}_2\in \MatchGT{\bT}{\overline{R}_2}{\overline{R}_{1_{\bar{m}_1}}(\overline{X})}\}
\end{aligned}
\end{equation}
such that for all $(\tilde{m}_1,\tilde{m}_2)\in \widetilde{\cA}$, letting $(\bar{m}_1,\bar{m}_2)=\alpha(\tilde{m}_1,\tilde{m}_2)\in \overline{\cA}$,
\begin{equation}
\widetilde{R}_{2_{\tilde{m}_2}}\left(\widetilde{R}_{1_{\tilde{m}_1}}(\overline{X})\right)\cong 
\overline{R}_{2_{\bar{m}_2}}\left(\overline{R}_{1_{\bar{m}_1}}(\overline{X})\right)\,.
\end{equation}
\item Let the \emph{restricted set of $\bT$-admissible matches} $\overline{\RMatchGT{\bT}{\overline{R}_2}{\overline{R}_1}}$ of $\overline{R}_2$ into $\overline{R}_1$ be defined as
\begin{equation}
\overline{\RMatchGT{\bT}{\overline{R}_2}{\overline{R}_1}}:= \{ \bar{\mu}\in \RMatchGT{\bT}{\overline{R}_2}{\overline{R}_1} \mid \Shift(I_{21}\hookleftarrow \mIO,\ac{c}_{\mIO}) \not\!\! {\equiv}\; \ac{false}\}\,.
\end{equation}
Then there exists an \emph{isomorphism} $\beta:\widetilde{\cB}\rightarrow \overline{\cB}$ between pairs of admissible matches of the form\footnote{Note in particular that the set $\overline{\cB}$ contains pairs of the form $(\overline{\mu},\bar{m}_{21})$ where $\bar{m}_{21}$ is an admissible match into $\overline{X}$ of the rule $\overline{\compGT{\bT}{\overline{R}_2}{\bar{\mu}}{\overline{R}_1}}$, which is the constraint-preserving completion (computed according to Lemma~\ref{lem:cgcp}) of the composite rule ${\compGT{\bT}{\overline{R}_2}{\bar{\mu}}{\overline{R}_1}}$.}
\begin{equation}
\begin{aligned}
\widetilde{\cB} &:= \{ (\tilde{\mu},\tilde{m}_{21}) \mid \tilde{\mu}\in \RMatchGT{\bT}{\widetilde{R}_2}{\widetilde{R}_1}\,,\; 
\tilde{m}_{21}\in \MatchGT{\bT}{\compGT{\bT}{\widetilde{R}_2}{\tilde{\mu}}{\widetilde{R}_1}}{\overline{X}}\}\\
\overline{\cB} &:= 
\{ (\bar{\mu},\bar{m}_{21}) \mid 
\bar{\mu}\in \overline{\RMatchGT{\bT}{\overline{R}_2}{\overline{R}_1}}\,,\; 
\bar{m}_{21}\in \MatchGT{\bT}{\overline{\compGT{\bT}{\overline{R}_2}{\bar{\mu}}{\overline{R}_1}}}{\overline{X}}\}
\end{aligned}
\end{equation}
such that for all $(\tilde{\mu},\tilde{m}_{21})\in \widetilde{\cB}$, letting $(\bar{\mu},\bar{m}_{21})=\beta(\tilde{\mu},\tilde{m}_{21})\in \overline{\cB}$,
\begin{equation}
(\compGT{\bT}{\widetilde{R}_2}{\tilde{\mu}}{\widetilde{R}_1})_{\tilde{m}_{21}}(\overline{X})
\cong (\overline{\compGT{\bT}{\overline{R}_2}{\bar{\mu}}{\overline{R}_1}})_{\bar{m}_{21}}(\overline{X})\,.
\end{equation}
\item There exists an \emph{isomorphism} $\gamma: \overline{\cA} \rightarrow \overline{\cB}$ such that for all $(\tilde{m}_1,\tilde{m}_2)\in \widetilde{\cA}$, letting $(\bar{\mu},\bar{m}_{21}) = \gamma(\tilde{m}_1,\tilde{m}_2)$,
\begin{equation}
\overline{R}_{2_{\bar{m}_2}}\left(\overline{R}_{1_{\bar{m}_1}}(\overline{X})\right) \cong(\overline{\compGT{\bT}{\overline{R}_2}{\bar{\mu}}{\overline{R}_1}})_{\bar{m}_{21}}(\overline{X})\,.
\end{equation}
\end{enumerate}
\end{theorem}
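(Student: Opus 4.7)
The proof decomposes naturally into four interconnected parts, all of which ultimately reduce to the identity
\[\widetilde{\ac{c}_I} \equiv \overline{\ac{c}_I} \land \Shift(I\hookleftarrow \mIO, \ac{c}_{\mIO})\]
from Lemma~\ref{lem:cgcp} (equation~\eqref{eq:cGuaV2}), together with the compositionality and compatibility properties of $\Shift$ and $\Trans$ stated in Theorem~\ref{thm:STcompComp}, and the $\bT$-type concurrency theorem for rules with conditions (cf.~\ref{app:ACthms}). The plan is to establish (i) by a direct algebraic manipulation, then (ii) and (iii) via constructive identity-on-underlying-data bijections exploiting constraint preservation, and finally to obtain (iv) by composing these bijections with the concurrency theorem.

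For part~(i), I would substitute $\widetilde{\ac{c}_{I_j}} = \overline{\ac{c}_{I_j}} \land \Shift(I_j\hookleftarrow \mIO, \ac{c}_{\mIO})$ ($j=1,2$) into the defining formula~\eqref{eq:acRcomp} for $\ac{c}_{\widetilde{I}_{21}}$, and repeatedly apply distributivity of $\Shift$ and $\Trans$ over conjunctions, together with the collapse $\Shift(I_{21}\hookleftarrow I_1, \Shift(I_1\hookleftarrow \mIO, \ac{c}_{\mIO})) \equiv \Shift(I_{21}\hookleftarrow \mIO, \ac{c}_{\mIO})$ from compositionality of $\Shift$. A direct defining-property computation shows that the remaining ``extra'' conjunct $\Trans(P_{21}\leftharpoonup I_{21}, \Shift(P_{21}\hookleftarrow \mIO, \ac{c}_{\mIO}))$ is satisfied by a match $m$ of the composite rule iff the application of that rule at $m$ yields an object satisfying $\ac{c}_{\mIO}$; on admissible matches this is forced by $X\vDash\ac{c}_{\mIO}$ together with constraint preservation of the remaining conjuncts, and is therefore absorbed by the $\dot{\equiv}$-qualifier. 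A final appeal to the tautology $(\ac{c}'\Rightarrow\ac{c})\land\ac{c}'\equiv \ac{c}\land\ac{c}'$ from~\eqref{eq:usefulTautology} repackages the result into the implicational form of $\overline{\ac{c}_{\overline{I}_{21}}}$.

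For parts~(ii) and~(iii), the required bijections $\alpha$ and $\beta$ are to be taken as the identity on the underlying $\cM$-morphism data (the ``plain'' rule parts of $\widetilde{R}_j$ and $\overline{R}_j$ are identical, so their direct derivations and rule compositions coincide as diagrams). The essential observation is that whenever $\overline{X}\vDash\ac{c}_{\mIO}$, any $\cM$-morphism $m:I_j\hookrightarrow \overline{X}$ automatically satisfies $\Shift(I_j\hookleftarrow\mIO,\ac{c}_{\mIO})$, hence $m\vDash\widetilde{\ac{c}_{I_j}}$ iff $m\vDash\overline{\ac{c}_{I_j}}$; the constraint-preservation property~\eqref{def:cpres} then ensures that the intermediate object $\overline{R}_{1_{\bar m_1}}(\overline{X})\vDash\ac{c}_{\mIO}$, so the argument iterates to give~(ii). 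For~(iii), the same reasoning is applied one level higher, at the level of the sequential-composition diagram~\eqref{eq:defRcomp}: the restriction to $\overline{\RMatchGT{\bT}{\overline{R}_2}{\overline{R}_1}}$ is precisely designed to discard those spans whose composite input $I_{21}$ admits no embedding into any $\overline{X}\vDash\ac{c}_{\mIO}$, and part~(i) ensures that the application conditions carried by the two kinds of composite rules agree modulo admissibility into such $\overline{X}$.

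Part~(iv) then follows by composing $\alpha^{-1}$, the bijection supplied by the concurrency theorem for $\bT$-type rewriting with conditions (cf.~\ref{app:ACthms}) relating pairs of consecutive admissible matches to admissible matches of composite rules, and $\beta$. The principal technical obstacle throughout lies in~(i), specifically in rigorously justifying that the auxiliary $\Trans$-conjunct is redundant on admissible matches: this requires combining the defining properties of $\Trans$ and $\Shift$ with a contextual invocation of constraint preservation that precisely mirrors the derivation of $\overline{\ac{c}_I}$ from $\widetilde{\ac{c}_I}$ in Lemma~\ref{lem:cgcp}, and care must be taken that the equivalence is genuinely ``modulo admissibility'' rather than strict logical equivalence.
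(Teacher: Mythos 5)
Your proposal reproduces the paper's proof architecture: part~(i) is treated as the core and derived from~\eqref{eq:cGuaV2} of Lemma~\ref{lem:cgcp} plus compositionality of $\Shift$, while (ii)--(iv) follow by specializing the concurrency theorem (Theorem~\ref{thm:concur}) via exactly the paper's two observations --- that any match into $\overline{X}\vDash \ac{c}_{\mIO}$ automatically satisfies the $\Shift(I_j\hookleftarrow\mIO,\ac{c}_{\mIO})$ conjunct, so $\widetilde{R}_j$ and $\overline{R}_j$ have the same admissible matches there, and that constraint preservation propagates $\ac{c}_{\mIO}$ to the intermediate object so the argument iterates, with the restriction $\overline{\RMatchGT{\bT}{\overline{R}_2}{\overline{R}_1}}$ discarding composites that admit no match into constraint-satisfying objects. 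The one genuine divergence is the absorption of the extra conjunct $T:=\Trans(P_{21}\leftharpoonup I_{21},\Shift(P_{21}\hookleftarrow \mIO,\ac{c}_{\mIO}))$ in~(i): you argue semantically that $T$ is forced on admissible matches by $X\vDash\ac{c}_{\mIO}$ and constraint preservation, whereas the paper never leaves the condition calculus --- it factors $(P_{21}\hookleftarrow \mIO)=(P_{21}\hookleftarrow O_1)\circ(O_1\hookleftarrow \mIO)$, reads off from diagram~\eqref{eq:defRcomp} that $(I_{21}\hookleftarrow I_1)$ is an admissible match of $r_1$ with direct derivation $(P_{21}\leftharpoonup I_{21})$, applies the compatibility property~\eqref{eq:ShiftTransCompatibility} to rewrite $T\,\dot{\equiv}\,\Shift(I_{21}\hookleftarrow I_1,\Trans(r_1,\Shift(O_1\hookleftarrow \mIO,\ac{c}_{\mIO})))$, and then exhibits this term as a conjunct already hidden inside $\Shift(I_{21}\hookleftarrow I_1,\overline{\ac{c}_{I_1}})\land \Shift(I_{21}\hookleftarrow \mIO,\ac{c}_{\mIO})$ by expanding $\overline{\ac{c}_{I_1}}$ per~\eqref{eq:lemcgcp} and applying the tautology~\eqref{eq:usefulTautology}. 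The syntactic route makes explicit two things your shortcut leaves implicit: first, $T$ speaks about the output of the \emph{half}-rule $(P_{21}\leftharpoonup I_{21})$, not of the composite rule, and identifying that output with $R_{1_{m_1}}(X)$ for the induced match $m_1=m\circ(I_{21}\hookleftarrow I_1)$ --- together with the admissibility of $m_1$ (pushout-complement existence in the DPO case) --- is precisely the content that~\eqref{eq:ShiftTransCompatibility} packages, so you must invoke it rather than a ``direct defining-property computation''; second, the claimed $\dot{\equiv}$ ranges over admissible matches into \emph{arbitrary} $X$, and the case $X\not\vDash\ac{c}_{\mIO}$ is handled not by your constraint-preservation argument but by both sides failing on the common conjunct $\Shift(I_{21}\hookleftarrow \mIO,\ac{c}_{\mIO})$. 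With those two points made explicit, your plan is sound and is in substance the paper's proof.
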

\begin{proof}
The central step of the proof consists in proving statement $(i)$, which in combination with the $\bT$-type concurrency theorems (Theorem~\ref{thm:concur}) then permits to derive $(ii-iv)$. Let thus $\ac{c}_{\widetilde{I_{21}}}$ and $\ac{c}_{\overline{I_{21}}}$ denote the application conditions for the compositions of the rules $\widetilde{R}_2$ with $\widetilde{R}_1$ along $\tilde{\mu}$, and of the rules $\overline{R}_2$ with $\overline{R}_1$ along $\tilde{\mu}$, respectively. According to the definition of $\bT$-type rule compositions as provided in~\eqref{eq:defRcomp} and~\eqref{eq:acRcomp},
\begin{align*}
\ac{c}_{\widetilde{I_{21}}}
&= \tilde{\ac{c}}^{(1)}\land \tilde{\ac{c}}^{(2)}
\,,\; & 
\tilde{\ac{c}}^{(1)}
&= \Shift(I_{21}\hookleftarrow I_1,\widetilde{\ac{c}_{I_1}})\,,\;
& \tilde{\ac{c}}^{(2)}
&=  \Trans(P_{21}\leftharpoonup I_{21},\Shift(P_{21}\hookleftarrow I_2,\widetilde{\ac{c}_{I_2}}))
\\
\ac{c}_{\overline{I_{21}}}
&= \bar{\ac{c}}^{(1)}\land \bar{\ac{c}}^{(2)}\,,\;
& 
\bar{\ac{c}}^{(1)}
&= \Shift(I_{21}\hookleftarrow I_1,\overline{\ac{c}_{I_1}})
\,\; 
& \bar{\ac{c}}^{(2)}
&=  \Trans(P_{21}\leftharpoonup I_{21},\Shift(P_{21}\hookleftarrow I_2,\overline{\ac{c}_{I_2}}))\,.
\end{align*}
The first step in our proof amounts to applying~\eqref{eq:cGuaV2} of Lemma~\ref{lem:cgcp}, whereby
\begin{equation*}
\widetilde{\ac{c}_{I_1}}  = \overline{\ac{c}_{I_1}} \land \Shift(I_1\hookleftarrow \mIO,\ac{c}_{\mIO})\,,\quad
\widetilde{\ac{c}_{I_2}}  = \overline{\ac{c}_{I_2}} \land \Shift(I_2\hookleftarrow \mIO,\ac{c}_{\mIO})\,.
\end{equation*}
Combining this result with the compositionality property of $\Shift$ (Theorem~\ref{thm:STcompComp}), indicated below as {\color{blue}$\mathsf{CS}$}, we may convert the contribution $\tilde{\ac{c}}^{(1)}$ into the form
\begin{equation}\label{eq:c1tildeA}
\begin{aligned}
\tilde{\ac{c}}^{(1)}&=
\Shift(I_{21}\hookleftarrow I_1,\overline{\ac{c}_{I_1}} \land \Shift(I_1\hookleftarrow \mIO,\ac{c}_{\mIO}))
=\bar{\ac{c}}^{(1)} \land \Shift(I_{21}\hookleftarrow I_1,\Shift(I_1\hookleftarrow \mIO,\ac{c}_{\mIO}))\\
&\equiv \bar{\ac{c}}^{(1)} \land \Shift(I_{21}\hookleftarrow \mIO,\ac{c}_{\mIO}) & \text{(by {\color{blue}$\mathsf{CS}$})}\,.
\end{aligned}
\end{equation}
Proceeding analogously for the contribution $\tilde{\ac{c}}^{(2)}$, we find the intermediate result
\begin{align*}
\tilde{\ac{c}}^{(2)}&=
 \Trans(P_{21}\leftharpoonup I_{21},\Shift(P_{21}\hookleftarrow I_2,\overline{\ac{c}_{I_2}} \land \Shift(I_2\hookleftarrow \mIO,\ac{c}_{\mIO})))\\
 &\equiv \bar{\ac{c}}^{(2)} \land 
 \Trans(P_{21}\leftharpoonup I_{21},\Shift(P_{21}\hookleftarrow \mIO,\ac{c}_{\mIO}))\,.
\end{align*}
Upon closer inspection of the commutative diagram~\eqref{eq:defRcomp} that is part of the definition of the $\bT$-type rule composition operation, we find that $(P_{21}\hookleftarrow M_{21}) = (P_{21}\hookleftarrow O_1)\circ (O_1\hookleftarrow M_{21})$, whence pre-composing with the (unique) initial morphism $(M_{21}\hookleftarrow \mIO)\in \cM$, this yields $(P_{21}\hookleftarrow \mIO)= (P_{21}\hookleftarrow O_1)\circ (O_1\hookleftarrow \mIO)$. Moreover, the diagram in~\eqref{eq:defRcomp} encodes that $(I_{21}\hookleftarrow I_1)$ is a DPO-admissible (and thus also SqPO-admissible) match for $r_1$ into the composite input interface $I_{21}$, and with $(P_{21}\leftharpoonup I_{21})$ the direct derivation of $I_{21}$ with rule $r_1$ along match $(I_{21}\hookleftarrow I_1)$. Therefore, the compatibility property of $\Shift$ and $\Trans$ according to Theorem~\ref{thm:STcompComp} (marked below as ${\color{green}\mathsf{CST}}$) is applicable in the following form:
\begin{align*}
\Trans(P_{21}\leftharpoonup I_{21},\Shift(P_{21}\hookleftarrow \mIO,\ac{c}_{\mIO}))
&\equiv \Trans(P_{21}\leftharpoonup I_{21},\Shift(P_{21}\hookleftarrow O_1,\Shift(O_1\hookleftarrow \mIO,\ac{c}_{\mIO})))\quad &\text{(by {\color{blue}$\mathsf{CS}$})}\\
&\,\dot{\equiv}\,
\Shift(I_{21}\hookleftarrow I_1,\Trans(r_1, \Shift(O_1\hookleftarrow \mIO,\ac{c}_{\mIO})))
&\text{(by {\color{green}$\mathsf{CST}$})}
\end{align*}
To proceed, it is necessary to identify the above contribution as a subformula in $\bar{\ac{c}}^{(1)}\land \Shift(I_{21}\hookleftarrow \mIO,\ac{c}_{\mIO})$, which may be achieved via inserting the explicit formula for the constraint-preserving completion $\overline{\ac{c}_{I_1}}$ as provided in\eqref{eq:lemcgcp} of Lemma~\ref{lem:cgcp}:
\begin{align*}
\bar{\ac{c}}^{(1)} &= \Shift(I_{21}\hookleftarrow I_1,\overline{\ac{c}_{I_1}})
= \Shift(I_{21}\hookleftarrow I_1, (\neg \Shift(I_1\hookleftarrow \mIO,\ac{c}_{\mIO}))\lor \widetilde{\ac{c}_{I_1}})\\
&= (\neg \Shift(I_{21}\hookleftarrow \mIO,\ac{c}_{\mIO}))
 \lor \bigg(
\Shift(I_{21}\hookleftarrow I_1, \ac{c}_{I_1})\\
&\qquad \land 
\Shift(I_{21}\hookleftarrow I_1, \Trans(r_1,\Shift(O_1\hookleftarrow \mIO,\ac{c}_{\mIO})))
\land
\Shift(I_{21}\hookleftarrow I_1, \Shift(I_1\hookleftarrow \mIO,\ac{c}_{\mIO})
\bigg)
\end{align*}
Inserting this intermediate result into the formula for $\tilde{\ac{c}}^{(1)}$ according to~\eqref{eq:c1tildeA}, and utilizing the logical tautology~\eqref{eq:usefulTautology}, we find:
\begin{align*}
\tilde{\ac{c}}^{(1)}
& = \bar{\ac{c}}^{(1)} \land \Shift(I_{21}\hookleftarrow \mIO,\ac{c}_{\mIO})\\
&\equiv
\Shift(I_{21}\hookleftarrow I_1, \ac{c}_{I_1})\land 
\Shift(I_{21}\hookleftarrow I_1, \Trans(r_1,\Shift(O_1\hookleftarrow \mIO,\ac{c}_{\mIO})))
\land
\Shift(I_{21}\hookleftarrow \mIO,
\ac{c}_{\mIO}) &\text{(by {\color{blue}$\mathsf{CS}$})}
\end{align*}
Note in particular that the middle term in the last line is precisely the second term in $\tilde{\ac{c}}^{(2)}$, which is why we obtain as the intermediate result
\begin{align*}
\ac{c}_{\widetilde{I_{21}}}
&= \tilde{\ac{c}}^{(1)} \land \tilde{\ac{c}}^{(2)}
\,\dot{\equiv}\, \bar{\ac{c}}^{(1)}\land \bar{\ac{c}}^{(2)} \land \Shift(I_{21}\hookleftarrow \mIO,\ac{c}_{\mIO})
= \ac{c}_{\overline{I_{21}}} \land \Shift(I_{21}\hookleftarrow \mIO,\ac{c}_{\mIO})\,.
\end{align*}
Finally, via the tautology~\eqref{eq:usefulTautology} and by definition of the constraint-preserving completion according to Lemma~\ref{lem:cgcp}, we may prove statement $(i)$:
\begin{align*}
\ac{c}_{\widetilde{I_{21}}}
&= (\Shift(I_{21}\hookleftarrow \mIO,\ac{c}_{\mIO})\Rightarrow\ac{c}_{\overline{I_{21}}} )\land \Shift(I_{21}\hookleftarrow \mIO,\ac{c}_{\mIO})
=\overline{\ac{c}_{\overline{I_{21}}}} \land \Shift(I_{21}\hookleftarrow \mIO,\ac{c}_{\mIO})\,.
\end{align*} 
For the proof of the statements $(ii-iv)$, it is sufficient to specialize of the general $\bT$-type concurrency theorems as in Theorem~\ref{thm:concur} to the setting of rewriting objects $\overline{X}\in \obj{\bfC}$ with $\overline{X}\vDash \ac{c}_{\mIO}$, and for rules with conditions $\widetilde{R}_2$ and $\widetilde{R}_1$ (i.e., for the constraint-guaranteeing completions of some generic rules $R_2,R_1\in \LinAc{\bfC}$) and their constraint-preserving variants $\overline{R}_2$ and $\overline{R}_1$, respectively. For statement $(ii)$, it is evident that $\widetilde{R}_1$ and $\overline{R}_1$ have precisely the same number of admissible matches into the constraint-satisfying object $\overline{X}$. By definition of constraint-preserving completions, the application of $\overline{R}_1$ to $\overline{X}$ along any admissible match $\bar{m}_1$ results in an object that again satisfies the global constraint, i.e., $\overline{R}_{1_{\bar{m}_1}}(\overline{X})\vDash\ac{c}_{\mIO}$, thus one may repeat the preceding argument to show that there exists an isomorphism between the sets of two-step direct derivations along $\overline{R}_1$ followed by $\overline{R}_2$, and along $\widetilde{R}_1$ followed by $\widetilde{R}_2$, respectively (and thus also an isomorphism $\alpha$ between the sets of pairs of admissible matches as claimed).  The proof of statement $(iii)$ is completely analogous, with the only additionally necessary technical detail concerning the definition of the restricted set of admissible matches $\overline{\RMatchGT{\bT}{\overline{R}_2}{\overline{R}_1}}$: this restriction is necessary in order to obtain the isomorphism $\beta$, since the constraint-preserving rules $\overline{R}_2$ and $\overline{R}_1$ may in general have admissible matches $\bar{\mu}$ which result in $\Shift(I_{21}\hookleftarrow \mIO,\ac{c}_{\mIO})=\ac{false}$, in contrast to compositions of the constraint-guaranteeing rules $\widetilde{R}_2$ with $\widetilde{R}_1$. However, rules with $\Shift(I_{21}\hookleftarrow \mIO,\ac{c}_{\mIO})=\ac{false}$ do not possess $\bT$-admissible matches into a constraint-satisfying object $\overline{X}$, which is why the restriction to matches in $\overline{\RMatchGT{\bT}{\overline{R}_2}{\overline{R}_1}}$ precisely reproduces those rule compositions that are computable equivalently as compositions of $\widetilde{R}_2$ with $\widetilde{R}_1$. Finally, to prove statement $(iv)$, it suffices to combine $(ii)$ and $(iii)$ with the generic $\bT$-type concurrency theorems.
\end{proof}

In summary, we have obtained modified versions of the $\bT$-type concurrency theorems that permit us to work throughout all of our computations in the setting of rewriting of objects under global constraints with the most restricted version of application conditions for rules, i.e., in a sense the \emph{most minimal possible} such constraints. Importantly, by virtue of Theorem~\ref{thm:rcct}, this restriction is indeed ``compositional'', in that composing two rules with constraint-guaranteeing application conditions (and thus typically enormously complicated such conditions) may be avoided, namely by instead computing the constraint-preserving variants of the rules, reducing each contribution to the set of sequential composition to just constraint-preserving completions as well. An immediate consequence of this line of reasoning are the following definition and theorem, quintessential for the development of efficient rule-algebraic algorithms in particular in the settings of  bio- and organo-chemical rewriting of Sections\ref{sec:bcgr} and~\ref{sec:ocgr}:

\begin{definition}[Restricted rule algebras and representations]\label{def:RRA}
For an $\cM$-adhesive category $\bfC\in \cM-\mathbf{CAT}_{\bT}$ suitable for type $\bT\in \{DPO,SqPO\}$ rewriting, and for a \emph{global constraint} $\ac{c}_{\mIO}\in \cond{\bfC}$, define the \emph{equivalence relation} $\overline{\sim}$ on $\LinAc{\bfC}$ via
\begin{equation}
\forall (r,\ac{c}_I), (r',\ac{c}_{I'})\in \LinAc{\bfC}:\quad
(r,\ac{c}_I) \overline{\sim} (r',\ac{c}_{I'})\quad
:\Leftrightarrow\quad  \left(r\cong r' \land \overline{\ac{c}_I}\,\dot{\equiv}\, \overline{\ac{c}_{I'}}\right)\,.
\end{equation}
Letting $\LinAc{\bfC}_{\overline{\sim}}$ denote the set\footnote{As with the standard definition of rule algebras, should for the chosen category $\bfC$  and global constraint $\ac{c}_{\mIO}$ the quotient of the class $\LinAc{\bfC}$ by the equivalence relation $\overline{\sim}$ not yield a set, but again a proper class, one must restrict from all constraint-preserving rule equivalence classes to a countable set (which in practice is typically generated by a countable number of pattern observables and a finite number of rewriting rules in the transition operators).} of equivalence classes of $\LinAc{\bfC}$ under $\overline{\sim}$, and for a field $\bK$ of characteristic $0$ (such as $\bK=\bR$ or $\bK=\bC$), define the \emph{$\bK$-vector space} $\overline{\cR}_{\bfC}$ via a bijection $\bar{\delta}:\LinAc{\bfC}_{\overline{\sim}}\xrightarrow{\cong} \mathsf{basis}(\overline{\cR}_{\bfC})$ from the set of $\overline{\sim}$-equivalence classes of linear rules to the set of basis elements of $\overline{\cR}_{\bfC}$. %
Introduce the \emph{$\bT$-type restricted rule algebra products} $\rrap{\bT}{}{}: \overline{\cR}_{\bT}\times \overline{\cR}_{\bT}\rightarrow \overline{\cR}_{\bT}$ as the binary operations defined via their action on basis elements,
\begin{equation}\label{eq:rrapdef}
 \rrap{\bT}{\bar{\delta}(R_2)}{\bar{\delta}(R_1)} := \sum_{\bar{\mu}\in\overline{\RMatchGT{\bT}{R_2}{R_1}}} \bar{\delta}\left(
 \compGT{\bT}{R_2}{\bar{\mu}}{R_1}
 \right)\,.
\end{equation}
Then the \emph{$\bT$-type restricted rule algebras} are defined as $\overline{\cR}^{\bT}_{\bfC}\equiv (\overline{\cR}_{\bfC}, \rrap{\bT}{}{})$.

Let the $\bK$-vector space $\hat{\bar{\bfC}}$ be defined via a bijection $\ket{.}:\obj{\bar{\bfC}}_{\cong}\xrightarrow{\cong} \mathsf{basis}(\hat{\bar{\bfC}})$ from the set $\obj{\bar{\bfC}}_{\cong}$ of isomorphism classes of objects of $\bfC$ satisfying the constraint $\ac{c}_{\mIO}$ into the set of basis elements of $\hat{\bar{\bfC}}$, with
\begin{equation}
X\in \obj{\bar{\bfC}} :\Leftrightarrow X\in \obj{\bfC}\land X\vDash \ac{c}_{\mIO}\,.
\end{equation}
Then we define the \emph{$\bT$-type restricted representations} as the homomorphisms $\bar{\rho}^{\bT}_{\bfC}: \overline{\cR}^{\bT}_{\bfC}\rightarrow End_{\bK}(\hat{\bar{\bfC}})$, fully specified via their action on basis vectors according to
\begin{equation}\label{eq:defRRAR}
\bar{\rho}^{\bT}_{\bfC}(\bar{\delta}(R))\ket{\overline{X}} := \sum_{\bar{m}\in \MatchGT{\bT}{\overline{R}}{\overline{X}}} \ket{\overline{R}_{\bar{m}}(\overline{X})}\,.
\end{equation}
\end{definition}

\begin{remark}
It is worthwhile to note that the sum in~\eqref{eq:defRRAR} ranges over admissible matches of the constraint-preserving completion $\overline{R}$ of $R$ (which due to $\overline{R}\overline{\sim} R$ is evidently in the same $\overline{\sim}$-equivalence class), which ensures that the right-hand side is indeed a vector in $\hat{\bar{\bfC}}_{\bT}$. Moreover, at various points in our definitions, we have chosen notations that keep the dependence on the choice of the constraint $\ac{c}_{\mIO}$ implicit for succinctness, since the choice of $\ac{c}_{\mIO}$ is in practice taken ``globally'', i.e., as part of the input data of a given set of computations and kept fixed throughout all computations. Another notational simplification taken in the above definition concerns the symbol $\bar{\delta}(R)$ for ``restricted'' basis states, which through the dependency on the equivalence relation $\overline{\sim}$ strictly speaking of course also depends on the style $\bT$ of rewriting; however, in all computations, this type will be clear from the context, so we chose to omit the annotation of restricted states by $\bT$.
\end{remark}

\begin{theorem}\label{thm:rrap}
With notations as in Definition~\ref{def:RRA}, and letting $\cR^{\bT}_{\bfC} \equiv (\cR_{\bfC}, \rap{\bT}{}{})$ and $\rho^{\bT}_{\bfC}$ denote the $\bT$-type rule algebras and their representations in the \emph{unrestricted} setting (compare Definition~\ref{def:RA} and~\ref{def:RAR}), we find:
\begin{enumerate}[label=(\roman*)]
\item The algebras $\bar{\cR}^{\bT}_{\bfC}$ are \emph{associative unital algebras}, with unit elements $\bar{\delta}(R_{\mIO})=\bar{\delta}(\mIO\hookleftarrow \mIO,\ac{true}))$. 
\item The homomorphisms $\bar{\rho}^{\bT}_{\bfC}$ are \emph{algebra homomorphisms} (and as such qualify as representations of the algebras $\bar{\cR}^{\bT}_{\bfC}$), or, equivalently, for all $R_1,R_2\in \LinAc{\bfC}_{\overline{\sim}}$, 
\begin{equation}\label{eq:rrapRhoH}
(a)\quad \bar{\rho}^{\bT}_{\bfC}(\bar{\delta}(R_2))\bar{\rho}^{\bT}_{\bfC}(\bar{\delta}(R_1)) = \bar{\rho}^{\bT}_{\bfC}\left(
\rrap{\bT}{\bar{\delta}(R_2)}{\bar{\delta}(R_1)}
\right)\,,\quad (b)\quad \bar{\rho}^{\bT}_{\bfC}(\bar{\delta}(R_{\mIO})) =Id_{End_{\bK}(\hat{\bar{\bfC}})}
\end{equation}
\item For arbitrary rules with conditions $R_1,R_2\in \LinAc{\bfC}$, letting $\widetilde{R}_j$ and $\overline{R}_j$ (for $j=1,2$) denote the constraint-guaranteeing and constraint-preserving completions of the two rules (for the global constraint $\ac{c}_{\mIO}$), and for an arbitrary restricted state $\ket{\overline{X}}\in \hat{\bar{\bfC}}$, the following equalities hold:
\begin{equation}
\begin{aligned}
\rho^{\bT}_{\bfC}(\delta(\widetilde{R}_2))\rho^{\bT}_{\bfC}(\delta(\widetilde{R}_1))\ket{\overline{X}}
&=\rho^{\bT}_{\bfC}\left(\delta\left(\rap{\bT}{\widetilde{R}_2}{\widetilde{R}_1}\right)\right)\ket{\overline{X}}\\
&=\bar{\rho}^{\bT}_{\bfC}\left(\bar{\delta}\left(\rrap{\bT}{\overline{R}_2}{\overline{R}_1}\right)\right)\ket{\overline{X}}
=\bar{\rho}^{\bT}_{\bfC}(\bar{\delta}(\overline{R}_2))\bar{\rho}^{\bT}_{\bfC}(\bar{\delta}(\overline{R}_1))\ket{\overline{X}}\,.
\end{aligned}
\end{equation}
\end{enumerate}
\end{theorem}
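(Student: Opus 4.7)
The proof naturally splits into three parts, with Theorem~\ref{thm:rcct} serving as the main technical engine throughout: its four statements establish bijections between admissible matches in the constraint-guaranteeing and constraint-preserving settings when acting on constraint-satisfying objects, and crucially relate application conditions obtained via the two composition schemes.

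The plan is to prove part $(ii)$ first, since it yields the most direct payoff. For $(ii)(a)$, I would expand the LHS $\bar{\rho}^{\bT}_{\bfC}(\bar{\delta}(R_2))\bar{\rho}^{\bT}_{\bfC}(\bar{\delta}(R_1))\ket{\overline{X}}$ via~\eqref{eq:defRRAR} into a sum indexed by $\overline{\cA}$ of states $\ket{\overline{R}_{2_{\bar{m}_2}}(\overline{R}_{1_{\bar{m}_1}}(\overline{X}))}$. Similarly, unfolding the RHS via~\eqref{eq:rrapdef} and~\eqref{eq:defRRAR} gives a sum indexed by $\overline{\cB}$ of states $\ket{(\overline{\compGT{\bT}{R_2}{\bar{\mu}}{R_1}})_{\bar{m}_{21}}(\overline{X})}$. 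The two sums coincide under the isomorphism $\gamma$ of Theorem~\ref{thm:rcct}(iv), which matches the terms pointwise up to isomorphism of the resulting objects. Statement $(ii)(b)$ then follows by direct computation: the unit rule $R_{\mIO}$ admits exactly one admissible match into any object (the $\cM$-initial morphism), and leaves it unchanged, so its constraint-preserving completion acts as the identity on $\hat{\bar{\bfC}}$.

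For part $(i)$, unitality follows from the uniqueness of the trivial overlap match in $\compGT{\bT}{R_{\mIO}}{\mu}{R}$ and $\compGT{\bT}{R}{\mu}{R_{\mIO}}$, each reproducing the original rule with identical constraint-preserving application condition. For associativity, the cleanest route is to invoke $(ii)(a)$: on any restricted state $\ket{\overline{X}}$, both triple products $\rrap{\bT}{\bar{\delta}(R_3)}{(\rrap{\bT}{\bar{\delta}(R_2)}{\bar{\delta}(R_1)})}$ and $\rrap{\bT}{(\rrap{\bT}{\bar{\delta}(R_3)}{\bar{\delta}(R_2)})}{\bar{\delta}(R_1)}$ represent as the same three-fold operator composition $\bar{\rho}^{\bT}_{\bfC}(\bar{\delta}(R_3))\bar{\rho}^{\bT}_{\bfC}(\bar{\delta}(R_2))\bar{\rho}^{\bT}_{\bfC}(\bar{\delta}(R_1))\ket{\overline{X}}$; one then concludes via injectivity of $\bar{\rho}^{\bT}_{\bfC}$ on $\overline{\cR}^{\bT}_{\bfC}$, which can be verified by choosing sufficiently many test states $\overline{X}$ to separate distinct $\overline{\sim}$-equivalence classes. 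An alternative route reduces restricted compositions to constraint-preserving completions of standard compositions of constraint-guaranteeing rules (via Theorem~\ref{thm:rcct}(i)) and then invokes Theorem~\ref{thm:RAmain}.

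Finally for part $(iii)$, the first equality is the standard representation property from Theorem~\ref{thm:canrep} applied to the constraint-guaranteeing completions $\widetilde{R}_j$, and the last equality is exactly $(ii)(a)$. The middle equality requires showing that the contributions from $\rap{\bT}{\delta(\widetilde{R}_2)}{\delta(\widetilde{R}_1)}$ acting on $\ket{\overline{X}}$ agree with those from $\rrap{\bT}{\bar{\delta}(\overline{R}_2)}{\bar{\delta}(\overline{R}_1)}$ acting on $\ket{\overline{X}}$: by Theorem~\ref{thm:rcct}(i), the application conditions of corresponding composite rules differ only by a $\Shift(I_{21}\hookleftarrow\mIO,\ac{c}_{\mIO})$ factor, which is automatically satisfied by matches into $\overline{X}$; and by Theorem~\ref{thm:rcct}(iii) the restricted match set $\overline{\RMatchGT{\bT}{\overline{R}_2}{\overline{R}_1}}$ excludes precisely those compositions whose shifted global constraint reduces to $\ac{false}$ and which therefore contribute zero on $\overline{X}$ on the unrestricted side as well. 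The main obstacle I anticipate is bookkeeping the interplay between the two equivalence relations $\sim$ and $\overline{\sim}$ and ensuring that the subtle content of Theorem~\ref{thm:rcct}(i) — that taking the constraint-preserving completion and forming the rule composition commute up to the global shift factor — is invoked in exactly the correct form at each transition between equivalence classes.
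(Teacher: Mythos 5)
Your proposal follows essentially the same route as the paper: the paper's proof of this theorem is literally a pointer to ``combine Theorem~\ref{thm:rcct} with Theorems~\ref{thm:RAmain}, \ref{thm:canrep}, \ref{thm:CTMCs}, \ref{thm:concur} and \ref{thm:assocR}'', and your elaborations of parts $(ii)$ and $(iii)$ --- expanding both sides over the match sets $\overline{\cA}$, $\overline{\cB}$ and invoking the isomorphisms $\beta$ and $\gamma$ of Theorem~\ref{thm:rcct}$(iii)$--$(iv)$, absorbing the extra $\Shift(I_{21}\hookleftarrow\mIO,\ac{c}_{\mIO})$ factor of Theorem~\ref{thm:rcct}$(i)$ into matches into a constraint-satisfying $\overline{X}$, and noting that compositions excluded from $\overline{\RMatchGT{\bT}{\overline{R}_2}{\overline{R}_1}}$ contribute no admissible matches into $\overline{X}$ --- are exactly the intended instantiations of those theorems.

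One step in your treatment of part $(i)$ would not survive scrutiny, however. Deriving associativity of $\rrap{\bT}{}{}$ from $(ii)(a)$ requires $\bar{\rho}^{\bT}_{\bfC}$ to be \emph{injective} on $\overline{\cR}^{\bT}_{\bfC}$, and your claim that this ``can be verified by choosing sufficiently many test states'' is not substantiated and not obviously true: nothing rules out a nontrivial linear combination of basis vectors $\bar{\delta}(R)$ being annihilated by the representation (note, e.g., that a rule whose input pattern admits no embedding into any constraint-satisfying object represents as the zero operator). Faithfulness of the canonical representation is never claimed even in the unrestricted setting --- which is precisely why Theorem~\ref{thm:RAmain} proves associativity from the associativity theorem for rule compositions (Theorem~\ref{thm:assocR}) rather than from the representation property of Theorem~\ref{thm:canrep}. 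Your parenthetical alternative --- reducing restricted compositions via Theorem~\ref{thm:rcct}$(i)$ to constraint-preserving completions of compositions of the constraint-guaranteeing rules and then invoking Theorem~\ref{thm:RAmain} (hence Theorem~\ref{thm:assocR}) --- is the correct argument and is what the paper intends; it should be promoted from fallback to the main proof of part $(i)$, after which the rest of your proposal stands as written.
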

\begin{proof}
The proof follows from combining Theorem~\ref{thm:rcct} with the results presented in Sections~\ref{sec:ra} and~\ref{sec:stochMech} and in~\ref{app:ACthms} for the general setting of rewriting with conditions (i.e., Theorems~\ref{thm:RAmain}, \ref{thm:canrep}, \ref{thm:CTMCs}, \ref{thm:concur} and \ref{thm:assocR}).
\end{proof}

One of the main applications of the notion of restricted rule algebras in view of the main theme of the present paper is the formulation of CTMCs for restricted stochastic rewriting systems:
\begin{corollary}\label{cor:CTMCrrt}
Let $\bfC\in \cM-\mathbf{CAT}_{\bT}$ be an $\cM$-adhesive category suitable for type $\bT\in \{DPO,SqPO\}$ rewriting, and let $\ac{c}_{\mIO}\in \cond{\bfC}$ be a \emph{global constraint}. For $R\in \Lin{\bfC}$, let $\bar{\delta}(R):=\bar{\delta}([R]_{\overline{\sim}})$. Denote by $\hat{\bar{\bfC}}$ the sub-$\bR$-vector space of the $\bR$-vector space $\hat{\bfC}$ spanned by states $\ket{\overline{X}}$ with  $\overline{X}\vDash \ac{c}_{\mIO}$, and let $\bra{}:\hat{\bar{\bfC}}\rightarrow \bR$ be the functional on $\hat{\bar{\bfC}}$ defined via $\braket{}{\overline{X}}:= 1_{\bR}$.
\begin{enumerate}[label=(\roman*)]
\item \textbf{Restricted DPO-type jump-closure:} for any  $\ket{\overline{X}}\in \hat{\bar{\bfC}}$ and $R\equiv(O\hookleftarrow K\hookrightarrow I,\ac{c}_{I})\in \LinAc{\bfC}$,
\begin{equation}
\bra{}\bar{\rho}_{DPO}(\bar{\delta}(R))\ket{\overline{X}}
=\bra{}\overline{\bO}_{DPO}(\bar{\delta}(R))\ket{\overline{X}}\,,\;
\overline{\bO}_{DPO}(\bar{\delta}(O\hookleftarrow K\hookrightarrow I,\ac{c}_{I})):=
\bar{\delta}(I\hookleftarrow K\hookrightarrow I,\ac{c}_{I})\,.
\end{equation}
\item \textbf{Restricted SqPO-type jump-closure:} for any  $\ket{\overline{X}}\in \hat{\bar{\bfC}}$ and $R\equiv(O\hookleftarrow K\hookrightarrow I,\ac{c}_{I})\in \LinAc{\bfC}$,
\begin{equation}
\bra{}\bar{\rho}_{SqPO}(\bar{\delta}(R))\ket{\overline{X}}
=\bra{}\overline{\bO}_{SqPO}(\bar{\delta}(R))\ket{\overline{X}}\,,\;
\overline{\bO}_{SqPO}(\bar{\delta}(O\hookleftarrow K\hookrightarrow I,\ac{c}_{I})):=
\bar{\delta}(I\hookleftarrow I\hookrightarrow I,\ac{c}_{I})\,.
\end{equation}
\item \textbf{CTMCs based upon restricted rewriting:} given a finite set of pairs of base rates and rules with conditions $\cT:=\{(\kappa_j,R_j)\}_{j=1}^n$ and an input state $\ket{\overline{\Psi}_0}\in \Prob{\bar{\bfC}}$, this data defines a CTMC as follows:
\begin{equation}
\tfrac{d}{dt}\ket{\overline{\Psi}(t)}=\overline{H}\ket{\overline{\Psi}(t)}\,,\;
\ket{\overline{\Psi}(0)}=\ket{\overline{\Psi}_0}\,,\; \overline{H}:= \hat{\overline{H}}- \bar{\bO}_{\bT}\left(\hat{\overline{H}}\right)\,,\;
\hat{\overline{H}}:= \sum_{j=1}^n \kappa_j \bar{\rho}_{\bT}\left(\bar{\delta}(R_j)\right)\,.
\end{equation}
\end{enumerate}
\end{corollary}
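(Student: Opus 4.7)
The plan is to reduce all three claims to the corresponding results in the unrestricted setting (Theorem~\ref{thm:CTMCs}) by leveraging the single-rule analogue of statement~$(ii)$ of Theorem~\ref{thm:rcct}. Concretely, for every $R\in \LinAc{\bfC}$ and every constraint-satisfying object $\overline{X}\vDash \ac{c}_{\mIO}$, Lemma~\ref{lem:cgcp} (in particular the relation $\widetilde{\ac{c}_I} = \overline{\ac{c}_I}\land \Shift(I\hookleftarrow \mIO,\ac{c}_{\mIO})$) induces a canonical bijection between $\MatchGT{\bT}{\overline{R}}{\overline{X}}$ and $\MatchGT{\bT}{\widetilde{R}}{\overline{X}}$, since any $\cM$-morphism $(m:I\hookrightarrow \overline{X})$ automatically satisfies $\Shift(I\hookleftarrow \mIO,\ac{c}_{\mIO})$ by the defining property of $\Shift$. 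Moreover, on these matches the two rules produce isomorphic direct-derivation outputs (as the underlying spans $r$ coincide), and by the guaranteeing property of $\widetilde{R}$ (cf.~\eqref{def:cgua}) these outputs again satisfy $\ac{c}_{\mIO}$, so the restricted operator $\bar{\rho}^{\bT}_{\bfC}(\bar{\delta}(R))$ indeed acts as an endomorphism of $\hat{\bar{\bfC}}$.

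For~$(i)$ and~$(ii)$, I would fix $\ket{\overline{X}}\in \hat{\bar{\bfC}}$ and compute directly
\[
\bra{}\bar{\rho}^{\bT}_{\bfC}(\bar{\delta}(R))\ket{\overline{X}}
=\sum_{\bar{m}\in \MatchGT{\bT}{\overline{R}}{\overline{X}}} \braket{}{\overline{R}_{\bar{m}}(\overline{X})}
= \lvert \MatchGT{\bT}{\overline{R}}{\overline{X}}\rvert\,.
\]
On the other hand, the defining action~\eqref{eq:defRRA} of $\overline{\bO}_{\bT}$ on basis vectors $\ket{\overline{X}}$ yields precisely the counting observable $\bar{\rho}^{\bT}_{\bfC}(\bar{\delta}(I\hookleftarrow K\hookrightarrow I,\ac{c}_I))\ket{\overline{X}}$ (for DPO) or $\bar{\rho}^{\bT}_{\bfC}(\bar{\delta}(I\xleftarrow{\mathrm{id}}I\xrightarrow{\mathrm{id}}I,\ac{c}_I))\ket{\overline{X}}$ (for SqPO), each of which evaluates to $\lvert \MatchGT{\bT}{\overline{R}}{\overline{X}}\rvert\cdot \ket{\overline{X}}$ via the same match-bijection argument. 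Taking $\bra{}$ on both sides gives the claimed jump-closure identities. (Alternatively, one can deduce them from the unrestricted jump-closure statements of Theorem~\ref{thm:CTMCs} applied to $\widetilde{R}$ and then translate back via Theorem~\ref{thm:rrap}(iii).)

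For~$(iii)$, with $\hat{\overline{H}}$ as specified the operator $\bar{\bO}_{\bT}(\hat{\overline{H}})$ is diagonal with non-negative real entries (by construction of the jump-closure operators), and each summand $\bar{\rho}_{\bT}(\bar{\delta}(R_j))$ has non-negative integer matrix entries and preserves $\hat{\bar{\bfC}}$. Hence $\overline{H} = \hat{\overline{H}} - \bar{\bO}_{\bT}(\hat{\overline{H}})$ has non-negative off-diagonal entries on $\hat{\bar{\bfC}}$, and by~$(i)$/$(ii)$ together with linearity one obtains
\[
\bra{}\overline{H} = \bra{}\hat{\overline{H}} - \bra{}\bar{\bO}_{\bT}(\hat{\overline{H}}) = 0\,.
\]
These two properties are precisely those of an infinitesimal generator of a (sub-)stochastic semigroup on $\Prob{\bar{\bfC}}$, so the asserted master equation is well-posed and preserves total probability when started from $\ket{\overline{\Psi}_0}\in \Prob{\bar{\bfC}}$.

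The main obstacle I anticipate is the single-rule correspondence in the first paragraph: one must check that the bijection $\MatchGT{\bT}{\overline{R}}{\overline{X}}\cong \MatchGT{\bT}{\widetilde{R}}{\overline{X}}$ is genuine in both DPO and SqPO semantics (accounting for the extra pushout-complement existence requirement in the DPO case) and, crucially, that the restricted representation operator is closed on $\hat{\bar{\bfC}}$. Both reduce to the constraint-preservation property~\eqref{def:cpres}, so the argument is essentially a careful bookkeeping of admissibility and completions rather than a new categorical construction.
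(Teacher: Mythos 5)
Your overall route coincides with the paper's: the corollary is stated there without a separate proof, precisely because it is the specialization of the unrestricted CTMC theorem (Theorem~\ref{thm:CTMCs}, proved in~\ref{sec:CTMCproofsApp} by exactly your counting and $Q$-matrix arguments) through the restricted machinery, and your single-rule match correspondence $\MatchGT{\bT}{\overline{R}}{\overline{X}}\cong\MatchGT{\bT}{\widetilde{R}}{\overline{X}}$ derived from~\eqref{eq:cGuaV2} is precisely the mechanism underlying Theorem~\ref{thm:rcct}(ii) and Theorem~\ref{thm:rrap}(iii). Your handling of the DPO pushout-complement bookkeeping (identical for $\overline{R}$ and $\widetilde{R}$ since the underlying plain rule coincides), the closure of $\bar{\rho}^{\bT}_{\bfC}(\bar{\delta}(R))$ as an endomorphism of $\hat{\bar{\bfC}}$ via~\eqref{def:cpres}, and the verification $\bra{}\overline{H}=0$ together with non-negativity of off-diagonal entries for part~(iii) are all in line with the paper's implicit derivation. (Minor slip: the restricted representation is~\eqref{eq:defRRAR}, not~\eqref{eq:defRRA}.)

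One step deserves more care than your phrase ``via the same match-bijection argument''. When you apply the bijection to the observable rule, its constraint-guaranteeing clause trivializes — the observable's output is $\overline{X}$ itself, so the $\Trans$-contribution to the completion of $(I\hookleftarrow K\hookrightarrow I,\ac{c}_I)$ is automatically satisfied on constraint-satisfying states — and hence $\bar{\rho}^{\bT}_{\bfC}(\bar{\delta}(I\hookleftarrow K\hookrightarrow I,\ac{c}_I))$ has eigenvalue $\#\{m \mid m\vDash\ac{c}_I\ (\text{plus POC existence in DPO})\}$ on $\ket{\overline{X}}$. By contrast, $\bra{}\bar{\rho}^{\bT}_{\bfC}(\bar{\delta}(R))\ket{\overline{X}}=\lvert\MatchGT{\bT}{\overline{R}}{\overline{X}}\rvert$ additionally imposes $R_m(\overline{X})\vDash\ac{c}_{\mIO}$ through the clause $\Trans(r,\Shift(O\hookleftarrow\mIO,\ac{c}_{\mIO}))$ inside $\overline{\ac{c}_I}$. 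For a raw condition these counts genuinely differ: take $\ac{c}_I=\ac{true}$ and $r=r_{E_{+}}$ the edge-creation rule of Example~\ref{ex:sgraph-cpc} under the no-multiedge constraint — the observable then counts \emph{all} vertex pairs while the representation only rewrites along \emph{unlinked} pairs. The jump-closure identities of parts (i)/(ii) therefore hold under the operative convention that $\ac{c}_I\,\dot{\equiv}\,\overline{\ac{c}_I}$, i.e., that one works throughout with constraint-preserving completed conditions — exactly as the paper does in its applications (compare~\eqref{eq:oplus}, where the pattern-counting observable carries $\overline{\ac{c}_{I_{+}}}$ rather than $\ac{true}$). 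Make this hypothesis explicit at the point where you equate the observable's eigenvalue with $\lvert\MatchGT{\bT}{\overline{R}}{\overline{X}}\rvert$, and your proof is complete; everything else, including your alternative reduction via Theorem~\ref{thm:rrap}(iii), is sound.
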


In summary, for stochastic rewriting systems restricted via the choice of a global constraint on objects, the notion of constraint-preserving completions lends itself naturally to (often significantly) reduce the complexity in formulating possible pattern-observables arising in the computations of moment-evolution equations. In the next section, we will provide a class of such restricted rewriting theories to exemplify the utility of this notion.

\subsection{Ambient, pattern and state categories}\label{sec:aps}

Inspired by concepts from the \KAP{} framework, and indeed closely following~\cite{Danos2014}, a versatile class of restricted rewriting theories may be characterized by a special structure of the global constraints chosen to define the theories:

\begin{definition}
Let $\bA \in \cM-\mathbf{CAT}_{\bT}$ denote an $\cM$-adhesive category suitable for type $\bT\in \{DPO,SqPO\}$ rewriting, henceforth referred to as the \emph{ambient category}. Define the \emph{negative constraint} $\ac{c}_{-}$ and the \emph{positive constraint} $\ac{c}_{+}$ as
\begin{equation}
\ac{c}_{-} :=\bigwedge_{N\in \cN} \neg \exists (N)\,,\quad
\ac{c}_{+} := \bigwedge_{P\in \cP} \forall\left(P, \bigvee_{(p:P\hookrightarrow Q)\in \cQ_P} \exists(p)\right)\,,
\end{equation}
where $\cN\subset \obj{\bA}_{\cong}$ is a finite set of \emph{forbidden patterns}, where $\cP$ is defined as a finite set of patterns $P\in \obj{\bA}_{\cong}$ with $P\vDash\ac{c}_{-}$, and where for each $P\in \cP$, $\cQ_P$ is a finite set of $\cM$-morphisms $(p:P\hookrightarrow Q)$ with $Q\vDash\ac{c}_{-}$. We then define the \emph{pattern category} $\bP$ and the \emph{state category} $\bS$ via restrictions of $\bA$ as follows:
\begin{equation}
X \in \obj{\bP} :\Leftrightarrow X\in \obj{\bA}\land X\vDash \ac{c}_{-}\,,\quad
Y \in \obj{\bS} :\Leftrightarrow Y\in \obj{\bA}\land Y\vDash \ac{c}_{-}\land Y\vDash \ac{c}_{+}\,.
\end{equation}
For some applications (and in particular for bio- and organo-chemical rewriting), we assume in addition that all patterns $N,P,Q$ occurring in the definitions of constraints are \emph{connected}.
\end{definition}

A first consequence of these definitions~\cite{Danos2014} is that by virtue of the $\Shift$ construction,
\begin{equation}
 \forall (f:X\hookrightarrow Y)\in \cM: \quad Y\vDash \ac{c}_{-} \; \Rightarrow \; X\vDash \ac{c}_{-}\,,
\end{equation}
so that in particular the pattern category $\bP$ is closed under subobjects. Combined with the definition of constraint-guaranteeing and -preserving completions of application conditions (Lemma~\ref{lem:cgcp}), we thus find that rewriting rules that have completions of their application conditions that do not evaluate to $\ac{false}$ (i.e., rules that can act non-trivially on constrained objects) are required to be defined in terms of \emph{patterns} rather than arbitrary elements of the ambient category $\bA$. Regarding the role of the positive constraints $\ac{c}_{+}$, it will prove useful to note the following auxiliary result:

\begin{corollary}
Let $\overline{R}_j \equiv [(r_j, \overline{\ac{c}_{I_j}})]_{\overline{\sim}} \in \LinAc{\bA}_{\overline{\sim}}$ (for $j=1,2$) denote constraint-preserving completions of linear rules with respect to the global constraint $\ac{c}_{-}\land \ac{c}_{+}$ for type $\bT$ rewriting, and suppose that $\overline{\ac{c}_{I_1}}^{+}\,\dot{\equiv}\,\overline{\ac{c}_{I_2}}^{+}\,\dot{\equiv}\, \ac{true}$ (where $\overline{\ac{c}_{I_j}}^{+}$ denotes the contribution to the constraint-preserving application condition arising from $\ac{c}_{+}$). Then for arbitrary $\bT$-admissible matches $\mu\in \MatchGT{\bT}{\overline{R}_2}{\overline{R_1}}$, the constraint-preserving completion of the composite rule satisfies $\overline{\ac{c}_{\bar{I}_{21}}}^{+}\, \dot{\equiv}\,\ac{true}$. 
\end{corollary}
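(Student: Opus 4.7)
The plan is to reduce the claim to a semantic argument via the concurrency theorem. First, I would unpack the hypothesis $\overline{\ac{c}_{I_j}}^{+}\,\dot{\equiv}\,\ac{true}$ using the conjunctive decomposition $\ac{c}_{\mIO}=\ac{c}_{-}\land\ac{c}_{+}$ and the distributivity of $\Shift$ and $\Trans$ over $\land$ (cf.\ Theorem~\ref{thm:STdefns}). Applied to Lemma~\ref{lem:cgcp}, this gives a natural splitting in which the positive part of the constraint-preserving completion takes the form $\overline{\ac{c}_{I_j}}^{+}\equiv \Shift(I_j\hookleftarrow\mIO,\ac{c}_{+})\Rightarrow \widetilde{\ac{c}_{I_j}}^{+}$; its triviality modulo admissibility is equivalent (by the same reasoning as in the proof of Lemma~\ref{lem:cgcp}) to the semantic statement that, for every object $X\in\obj{\bA}$ with $X\vDash\ac{c}_{+}$ and every admissible match $m\in\MatchGT{\bT}{\overline{R}_j}{X}$, the rewrite $\overline{R}_{j_m}(X)$ again satisfies $\ac{c}_{+}$.

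Next, I would invoke Theorem~\ref{thm:rcct}(iv) (or, equivalently, the underlying generic concurrency theorem) in order to realise the action of the composite rule $\compGT{\bT}{\overline{R}_2}{\mu}{\overline{R}_1}$ as a two-step derivation sequence. Concretely, for any object $X\in\obj{\bA}$ with $X\vDash\ac{c}_{\mIO}$ and any admissible match $\bar{m}_{21}$ of the composite rule into $X$, there exists a corresponding pair of admissible matches $(\bar{m}_1,\bar{m}_2)$ such that $\overline{R}_{2_{\bar{m}_2}}(\overline{R}_{1_{\bar{m}_1}}(X))$ is isomorphic to the image of $X$ under the composite rule at $\bar{m}_{21}$.

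The key step is then a direct semantic chase. Let $Y:=\overline{R}_{1_{\bar{m}_1}}(X)$ and $Z:=\overline{R}_{2_{\bar{m}_2}}(Y)$. From $X\vDash\ac{c}_{+}$ and the first hypothesis applied to $\bar{m}_1$, we obtain $Y\vDash\ac{c}_{+}$; since $\overline{R}_1$ is the full constraint-preserving completion we also have $Y\vDash\ac{c}_{-}$, so that $\bar{m}_2$ is admissible and the second hypothesis applies to give $Z\vDash\ac{c}_{+}$. Via the isomorphism above, the composite rule preserves $\ac{c}_{+}$ along $\bar{m}_{21}$. Since $X$ and $\bar{m}_{21}$ were arbitrary, re-reading this semantic property through Lemma~\ref{lem:cgcp} applied to the composite rule yields $\overline{\ac{c}_{\bar{I}_{21}}}^{+}\,\dot{\equiv}\,\ac{true}$.

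The main obstacle is the initial book-keeping: one has to check that the notation $(\cdot)^{+}$ is compatible with the implication structure of $\overline{\ac{c}_{I}}$, which ultimately rests on the distributivity of $\Shift$ and $\Trans$ over $\land$ and on the fact that admissibility of the two-step derivation never relies on $\ac{c}_{+}$ in isolation. Once this is in place, the rest is a straightforward chase. An alternative, purely syntactic route would substitute the explicit formula of Theorem~\ref{thm:rcct}(i) for $\overline{\ac{c}_{\bar{I}_{21}}}$ and collapse the $\ac{c}_{+}$-contributions inside the implication via the triviality hypothesis, but the semantic route above is considerably more transparent.
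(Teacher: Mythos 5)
Your proposal is correct in substance, but it inverts the weighting of the paper's argument, and in doing so it applies the hypothesis in a place where it is not actually needed. The paper's proof is precisely the route you dismiss as the ``alternative'': combine Lemma~\ref{lem:cgcp} with Theorem~\ref{thm:rcct}, i.e., substitute the explicit formula of Theorem~\ref{thm:rcct}(i) for $\overline{\ac{c}_{\overline{I}_{21}}}$, split $\ac{c}_{\mIO}=\ac{c}_{-}\land\ac{c}_{+}$ using the distributivity of $\Shift$ and $\Trans$ over $\land$, and observe that the only $\ac{c}_{+}$-derived subformulae are the images $\Shift(I_{21}\hookleftarrow I_1,\overline{\ac{c}_{I_1}}^{+})$ and $\Trans(P_{21}\leftharpoonup I_{21},\Shift(P_{21}\hookleftarrow I_2,\overline{\ac{c}_{I_2}}^{+}))$, which collapse to $\ac{true}$ by the hypothesis (using that $\Shift$ and $\Trans$ map $\ac{true}$ to $\ac{true}$ and respect equivalence). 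What this syntactic step buys is exactly the part your semantic chase does not reach: the triviality of the \emph{inherited} positive subformulae inside the composite's condition, which is the only place the hypothesis genuinely bites. By contrast, in your chase the conclusions $Y\vDash\ac{c}_{+}$ and $Z\vDash\ac{c}_{+}$ already follow from the defining property~\eqref{def:cpres} of constraint-preserving completions, with or without the triviality hypothesis — every admissible match of $\overline{R}_j$ into a $\ac{c}_{\mIO}$-satisfying object preserves all of $\ac{c}_{\mIO}$ — so the two-step derivation argument only certifies the \emph{newly generated} completion conjunct (the analogue of $\Trans(r_{21},\Shift(O_{21}\hookleftarrow \mIO,\ac{c}_{+}))$ from~\eqref{eq:lemcgcp}), not the inherited ones. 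Your proof therefore becomes correct and complete once the bookkeeping you relegate to a closing remark is promoted to the main argument, with the semantic chase serving as an illuminating consistency check rather than the core step; what the semantic route buys in exchange is a transparent explanation of \emph{why} the completion-generated conjunct is trivial. One small imprecision: in your first paragraph the guard of the implication should be $X\vDash\ac{c}_{\mIO}$ rather than merely $X\vDash\ac{c}_{+}$, since Lemma~\ref{lem:cgcp} guards by $\Shift(I\hookleftarrow\mIO,\ac{c}_{\mIO})$; your subsequent chase does use the full constraint, so nothing downstream breaks.
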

\begin{proof}
The statement follows via combining Lemma~\ref{lem:cgcp} with Theorem~\ref{thm:rcct}.
\end{proof}

\section{Application scenario 1: biochemistry with \KAP{}}\label{sec:bcgr}

The \href{https://kappalanguage.org}{\KAP{} platform}~\cite{danos2004computational,danos2004formal} for rule-based modeling of biochemical reaction systems is based upon the notion of \emph{site-graphs} that abstract proteins and other complex macro-molecules into \emph{agents} (with \emph{sites} representing interaction capacities of the molecules). This open source platform offers a variety of high-performance \emph{simulation algorithms} (for CTMCs based upon \KAP{} rewriting rules) as well as several variants of static analysis tools to analyze and verify biochemical models~\cite{Boutillier:2018aa}. %
Since the start of the \KAP{} development, the simulation-based algorithms have been augmented by \emph{differential semantics} modules aimed at deriving ODE systems for the evolution of pattern-count observable average values, based upon ideas from abstract interpretation~\cite{Feret_2008,Danos_2010,Harmer_2010}. 
Differential semantics for a given set of \KAP{} rules relies on the computation of a set of \KAP{} graphs, called fragments, that is closed under the action of the rules and therefore amenable to ODE representation. %

Interestingly, the computation of the fragments is purely syntactical, and based on a static analysis of rule interference. In particular, albeit a rewriting-based graphical formalism, \KAP{} as originally introduced in~\cite{danos2004formal} is \emph{not} a categorical rewriting formalism, although categorical approaches have been employed to model certain aspects of its semantics~\cite{danos_et_al:LIPIcs:2012:3866}. %
This renders comparing this approach with the computation of commutators presented in this paper highly intricate.%
On the other hand, an interesting line of work by Danos et al.~\cite{danos2015moment,danos2020} demonstrated that at least in the less technically involved setting of rewriting over adhesive categories without conditions or constraints, some of the syntactic methods utilized in the \KAP{} framework could be reinterpreted in the adhesive rewriting setting in order to obtain algorithms for computing first-order moment ODEs for pattern counting observables\footnote{%
In retrospective, one may understand the results of loc. cit.\ as a special case of our universal rule-algebraic CTMC theory, providing in essence a syntactic variant (inspired by the \KAP{} calculus) of the definition of the linear operators $\rho^{\bT}_{\bfC}(\delta(r))$ (for $\bT\in \{DPO,SqPO\}$, $\bfC$ an adhesive category and $r\in \Lin{\bfC}$) via a prescription for the matrix elements of $\rho^{\bT}_{\bfC}(\delta(r))$. Utilizing our framework, one can indeed rigorously validate the correctness of the interpretation of these operators postulated in loc cit.\ (i.e., the representation property of Theorem~\ref{thm:canrep}, which hinges upon Theorems~\ref{thm:RAmain} and~\ref{thm:concur}), and verify in particular that the approach provides an algorithm for computing first-order moment ODEs that is equivalent to the relevant special case of Theorem~\ref{thm:CTMCmev}.}, providing a first hint at the possible existence of a general rewriting-based CTMC formalism for \KAP{}.

In this section, we will show that it is possible to retrieve the expressiveness of \KAP{} site-graphs while remaining in a suitable $\cM$-adhesive category of typed undirected multigraphs. Patterns and states may be formulated consistently via certain negative and positive structural constraints (following the general construction presented in Section~\ref{sec:aps}), so that ultimately we are able to derive a full-fledged CTMC theory for \KAP{} directly from our universal framework based upon rule algebras for restricted rewriting theories (Section~\ref{sec:rrt}). Notably, this formulation is not only fully equivalent to the aforementioned \KAP{} notion of differential semantics (for the averages of pattern-counts), but in addition provides a fully general computational theory for deriving higher-order moment ODEs for \KAP{} pattern-counting observables. 

\subsection{The \KAP{} framework for biochemical reaction systems analysis}

We will begin our presentation with a brief introduction to the \KAP{} formalism, referring the interested readers to~\cite{Kappa_manual} for further details. One of the key practical features of \KAP{} is its foundation upon the notion of \emph{rigidity}~\cite{Danos2014}. In a nutshell, the property that partial embeddings of connected \KAP{} graphs extend to at most one complete embedding. This ensures that subgraph isomorphism checks can be computed efficiently, a key property since \KAP{} graphs are constantly matched against a potentially very large set of rewriting rules during stochastic simulations \cite{Boutillier17}. \KAP{} graphs belong to the family of port graphs \cite{ANDREI200867}, a particular graph formalism in which connections between nodes are made through ports, called \emph{sites} in the context of \KAP{}. We recall here the definition of \KAP{} graphs, largely following a simplification of the presentation in~\cite[Sec. 3]{danos_et_al:LIPIcs:2012:3866}.

\begin{definition}[Signature]\label{def:KS}
 A \KAP{} signature is a tuple $\Sigma=(\Ag,\St,\AgSt,\Prp)$ where $\Ag=\set{A,B,\dots}$ is a countable set of \emph{agent} types, $\St=\set{i,j,k,\dots}$ a countable set of \emph{site} types, $\AgSt:\Ag\to\cP_{\sf fin}(\St)$ maps the agent types to the sites they possess, and $\Prp=\set{\sf p,q,r,\dots}$ is a countable set of \emph{properties}. We define the shorthand notation $(s:A) :\Leftrightarrow (s\in \AgSt(A))$ (for $A\in \Ag$ and $s\in \St$) to indicate that a site type $s$ is present on agents of type $A$. Without loss of expressivity, we will assume purely for technical convenience that a given site type is specific to precisely one agent type, i.e.,
\begin{equation}
\forall s\in \St, A,B\in \Ag:\; (s:A \land s:B) \Rightarrow A=B\,.
\end{equation}
\end{definition}

In a biological modeling context, \KAP{} signatures are used to map protein interactions to a port graph encoding: agent types usually represent protein names, site types represent interaction capabilities, and properties refer to post-translational modifications\footnote{Post-translational modifications are chemical tags that can be attached to protein residues and that can influence the spatial configuration of the protein.} such as phosphorylation or methylation.

\begin{definition}[\KAP{} graphs]
    A \KAP{} graph over a signature $\Sigma$ is a tuple $G=(\mc{A},\mc{S},\mc{E},\tAg,p)$ where 
    \begin{enumerate}[label=(\roman*)]
    \item $\mc A=\set{a,b,c,\dots}$ is a countable set of \emph{agents}, 
    \item $\mc S$ a countable set of \emph{sites} satisfying 
    $\mc S\subseteq \set{(a,i)\mid a\in\mc A, i\in\AgSt(\tAg(a))}$,
    \item $\mc E\subseteq \mc S\times \mc S$ is an irreflexive \emph{link relation on sites}, 
    \item $\tAg:\mc A\to\Ag$ assigns types to agents, and
    \item $p:\mc S\pto\Prp$ is a partial map of sites to properties.
\end{enumerate}
\end{definition}

The above definition implies several invariants over the structure of \KAP{} graphs (over a signature $\Sigma$). First, it is not possible to define an agent that would have two sites of type $i$ (because one uses sets and not multisets for $\mc S$). Second, edges of \KAP{} graphs connect sites but cannot connect agents directly and irreflexivity prevents sites to be connected to themselves, i.e. $(a,i),(a,j)\in\mc E$ implies $i\neq j$. Lastly, only sites may have a property, and at most one in a given state.

\begin{figure}
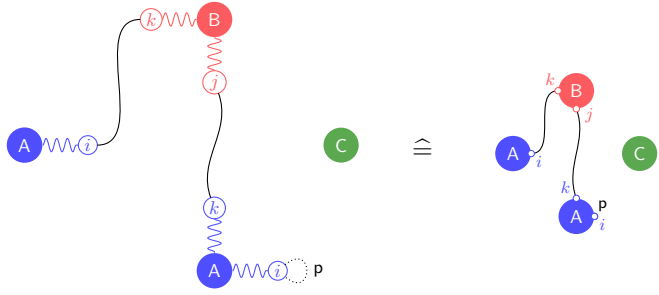

    \centering
    $\vcenter{\hbox{\ti{KEorig}}}\qquad\widehat{=}\qquad$
    $\vcenter{\hbox{\ti{KEa}}}$
    \caption{Two variations of graphical notations for \KAP{} graphs: %
    as a typed undirected graph in the sense of Definition~\ref{def:KTGAPS} (left), %
    and in a compressed graphical notation (right) where sites are drawn on the boundaries of the agent vertices to which they are incident, and where properties are drawn next to the site carrying the properties. The latter graphical notation is the one used in the most recent version of the \KAP{} documentation~\cite{Kappa_manual}. Note that the coloring of the site type names indicates that site types are exclusive to one agent type (i.e., two site types with the same name, but different colors are technically different types in our encoding, indicated via different site name colors). In the example presented, two agents of type $A$ are bound to an agent of type $B$ on different sites, and there is moreover an additional disconnected agent of type $C$ present. Note that the $C$ agent has no apparent site, and that site $i$ of the second agent $A$ has property $\sf p$. The other sites of the graph have no property (the $p$ map is undefined for them). One may thus in particular identify this \KAP{} as an example of a \KAP{} \emph{pattern} in the sense of Definition~\ref{def:KTGAPS} .}
    \label{fig:KAPPA-example}
\end{figure}

\begin{definition}
\def\h{h_{\mc A}}
    A \KAP{} graph homomorphism $h:G\to H$ is a total map on agents $\h:\mc A_G\to \mc A_H$ that is edge, type, site and property preserving:
    \begin{itemize}
        \item $(a,i),(b,j)\in \mc E_G$ implies $(\h(a),i),(\h(b),j)\in \mc E_H$
        \item $\tAg(a)=\tAg(\h(a))$
        \item $(a,i)\in \mc S_G$ implies $(\h(a),i)\in \mc S_H$
        \item $p_G$ defined on $(a,i)$ implies $p_H$ defined on $(\h(a),i)$ and $p_G(a,i)=p_H(\h(a),i)$.
    \end{itemize}
\end{definition}

Finally, in order to formulate rewriting systems within \KAP{}, we require the following additional notion:
\begin{definition}[\KAP{} models]\label{def:KM}
A \emph{\KAP{} model} $K:=(\Sigma,\StSt,\StP)$ is as an \emph{extended signature}, where $\Sigma=(\Ag,\St,\AgSt,\Prp)$ is a \KAP{} signature, $\StSt\subset \St\times\St$ is a (symmetric) site-site type incidence relation, and where $\StP:\St\rightarrow \cP_{\mathsf{fin}}(\Prp)$ is a function that assigns to a site type the property types that can be carried by sites of the given type. We introduce the shorthand notations
\begin{equation}
\forall s,s'\in \St:\; (s:s') :\Leftrightarrow (s,s')\in \StSt\,,\quad \forall s\in \St, \mathsf{p}\in \Prp:\; (\mathsf{p}:s) :\Leftrightarrow \mathsf{p}\in \StP(s)\,.
\end{equation}
A further notation concerns a certain form of \emph{compatibility} of an assignment of concrete property types to the site types of some agent type within a given \KAP{} model. Given an agent type $A\in \Ag$ and a partial function $\sigma_{P}:\AgSt(A)\rightarrow \Prp$ from the site types of the agent type $A$ to properties, we define
\begin{equation}
(A,\sigma_P):K\; :\Leftrightarrow (dom(\sigma_P)\setminus dom(\StP) = \varnothing) \land (\forall s \in dom(\sigma_P): \sigma_P(s)\in \StP(s))\,.
\end{equation}
In other words, $(A,\sigma_P):K$ iff the assignment of property types to each of the site types available for a given agent type is consistent with the extended signature $K$.
\end{definition}

\subsection{An equivalent encoding of \KAP{} models as restricted rewriting theories}\label{sec:KMrrt}

We will now proceed with the encapsulation of \KAP{} models into the framework presented in this paper. To this end, we follow the general strategy introduced in Section~\ref{sec:aps}, whereby by starting from a suitable \emph{ambient category} (which is in particular $\cM$-adhesive) as a ``host'' category for the restricted rewriting theory, from which then \emph{pattern} and \emph{state categories} are obtained via restriction with certain types of structural constraints.

\begin{definition}\label{def:KTGAPS}
For a \KAP{} model $K=(\Sigma,\StSt,\StP)$ (with $\Sigma=(\Ag,\St,\AgSt,\Prp)$ a \KAP{} signature), let $\bA_K=\mathbf{uGraph}/T_K$ be the category of finite undirected multigraphs typed over a type graph $T_K$, where $T_K$ is defined as follows:
\begin{enumerate}[label=(\roman*)]
\item For each \emph{agent type} $A\in \Ag$, $T_K$ contains an \emph{agent type vertex} \ti{TKat}.
\item For each \emph{site type} $s\in \St$, $T_K$ contains a \emph{site type vertex} \ti{TKst}.
\item For each agent type $A\in \Ag$ and site type $s\in \St$ with $s:A$, the corresponding vertices \ti{TKat2} and \ti{TKst2} in $T_K$ are linked by an \emph{agent-site incidence type edge} depicted as \ti{TKagst}.
\item For each site type $s\in \St$ and property type $\mathsf{p}\in \Prp$ with $\mathsf{p}:s$, $T_K$ contains a \emph{property type loop} depicted as \ti{TKsp}.
\item For each pair of site types $i,j\in \St$ such that $(i,j)\in \StSt$, $T_K$ contains a \emph{site link type edge} depicted as \ti{TKssl}.
\end{enumerate}
In order to formulate the additional structural constraints implied by the \KAP{} formalism, %
let us utilize a graphical notation wherein $\bullet$ is a placeholder for a vertex and a dashed line $\ti{dl}$ for an edge of any type. We may then introduce the \emph{negative constraints} defining the \emph{pattern category} $\bP_K$ as $\ac{c}_{\cN_K}:=\land_{N\in \cN_K}\neg\exists(\mIO\hookrightarrow N)$, with the set $\cN_K$ of \emph{``forbidden patterns''} defined as\footnote{There is a certain degree of freedom in choosing the precise definition of the negative constraints (i.e., depending on the precise variant of \KAP{} considered), yet the version presented here closely reflects the standard \KAP{} implementation at the time of writing (modulo the convenience choice made purely for aesthetic reasons to have each site type be exclusive to precisely one agent type). In particular, sites are restricted to bind to at most one other site.}
\begin{equation}
	\cN_K:=\left\{
	\ti{cNK1}\right\} 
	\cup
	\bigcup\limits_{\substack{s\in \St\\\mathsf{p},\mathsf{q}\in \StP(s)}}\left\{
	\ti{cNK2}\,,\;\ti{cNK2b}\,,\;\ti{cNK2c}
	\right\}
	\cup
	\bigcup\limits_{\substack{i,j,k\in \St\\ (i,j), (j,k)\in \StSt}}\left\{
	\ti{cNstst}
	\right\}\cup 
	\bigcup\limits_{\substack{A\in \Ag\\ s\in \AgSt(A)}}\left\{
	\ti{cNK3}\,,\;
	\ti{cNK4}
	\right\}\,.
\end{equation}
Finally, the \emph{state category} $\bS_K$ is obtained from $\bP_K$ via imposing a positive constraint $\ac{c}_{\cP_K}$ that ensures that each agent of type $A$ is linked to exactly one site of type $s:A$ for each of the site types $s\in\AgSt(A)$, and if a site of type $s$ can carry a property or alternative variants thereof (i.e., if $s\in dom(\StP)$), it carries a loop of property type $\mathsf{p}$ for one of the property types $\mathsf{p}\in \StP(s)$. Moreover, for each agent type $A\in \Ag$ and site type\footnote{Recall that by virtue of the technical assumption taken in Definition~\ref{def:KS}, for each site type $s\in \St$ there is precisely one agent type $A\in \Ag$ such that $s:A$.} $s\in \AgSt(A)$, every site of type $s$ must be linked to an agent of type $A$ (i.e., sites cannot occur in isolation). 
\end{definition}

\begin{remark}\label{rem:KGNC}
In order to compress the graphical notations of \KAP{} graphs in the sense of the above encoding back into a more compact graphical notation, we adopt the convention that is standard in the \KAP{} literature whereby site vertices are simply drawn adjacent to the agent vertices they are incident to, and whereby property loops are compressed into just a label with the property type drawn adjacent to the site that carries the given property. We illustrate this convention in Figure~\ref{fig:KAPPA-example}.
\end{remark}

\subsection{\KAP{} rules in the restricted rewriting semantics}\label{sec:KRrrt}

Based upon the definition of a restricted rewriting theory for a given \KAP{} model $K$ according to Definition~\ref{def:KTGAPS}, an interesting question arises in view of formulating \KAP{} rewriting rules in this setting: which types of ``plain'' rules $r\in \Lin{\bA_K}$ in the ambient category $\bA_K$ of the \KAP{} model $K$ can be lifted to restricted rewriting rules $\overline{R}\in \LinAc{\bA_K}$ via the strategy advocated in Section~\ref{sec:rrt}, namely by equipping ``plain'' rules with application conditions via the operation of constraint-preserving completions in SqPO-semantics (starting from $R:=(r,\ac{c}_I)$ with a trivial condition $\ac{c}_I:=\ac{true}$)? Given the complexity of the negative and positive structural constraints $\ac{c}_{\cN_K}$ and $\ac{c}_{\cP_K}$, it seems a priori unclear whether or not such restricted rules would even be feasible to formulate. Nevertheless, since in \KAP{} rewriting systems only rules that preserve the structural constraints are of any practical relevance (i.e., those that transform \KAP{} \emph{states} into \KAP{} \emph{states}), it would be desirable to understand precisely which classes of ``plain'' rules $r\in \Lin{\bA_K}$ fail to posses a non-trivial lift to restricted rules via the aforementioned operation, since lifted rules evidently act trivially on \KAP{} states whenever $\overline{\ac{c}_I} \,\dot{\equiv}\, \ac{false}$.

Postponing a full classification and analysis of the general nature of \KAP{} restricted rewriting rules to future work, we will present here a first important step towards such a theory, namely by formulating a particular subclass of ``plain'' \KAP{} that is guaranteed to posses suitable liftings, and which in a certain sense are sufficient for most practical applications of \KAP{}:

\begin{definition}[Safe \KAP{} rules]\label{def:SKR}
For a \KAP{} model $K$ and with structural constraint $\ac{c}_{K}:=\ac{c}_{\cN_K}\land \ac{c}_{\cP_K}$, let the set of \emph{constraint-preserving rules} $\LinAc{\bA_K}_{\ac{c}_K}$ (with the completion computed w.r.t.\ to $\ac{c}_{K}$ and in SqPO-semantics) be defined as
\begin{equation}
\LinAc{\bA_K}_{\ac{c}_K} := \{ R=[(r,\ac{c}_{I})]_{\sim} \mid \overline{\ac{c}_I} \not{\!\!\dot{\equiv}}\, \ac{false}\}\,.
\end{equation}
For every agent type $A\in \Ag$, let $\{a_1,\dotsc,a_{|A|}\}:=\AgSt(A)$ (where $|A|$ demotes the number of sites on agents of type $A$), and recall from Definition~\ref{def:KM} the notations for various forms of type incidences (including in particular the notation $\sigma_P:K$ for partial maps $\sigma_P:\AgSt(A)\rightarrow \Prp$). Then we define the \emph{elementary \KAP{} rules} of the \KAP{} model $K$ as follows:
\begin{enumerate}[label=(\roman*)]
\item \emph{Agent creation and deletion rules:} $\forall A\in \Ag, \sigma_P:K$,
\begin{equation}
\overline{R}^{\pm}_{(A,\sigma_P)}:= [\overline{(r^{\pm}_{(A,\sigma_P)},\ac{true})}]_{\overline{\sim}}\,,\quad
\ti{EKRasp}\xrightleftharpoons[\;r^{+}_{(A,\sigma_P)}\;]{r^{-}_{(A,\sigma_P)}} \mIO
\label{eq:EKRasp}
\end{equation}
Here, we take the convention that if the partial function $\sigma_P$ is not defined for a given site $a_i$, there is no property present in the \KAP{} graph in~\eqref{eq:EKRasp}, and the dotted line indicates that the rules create and delete a fully specified instance of an $A$-type agent with all of its sites (and a consistent assignment of site properties) instantiated (thus constituting an instance of. \KAP{} \emph{state} within the \KAP{} model $K$).
\item \emph{Site (un-)linking rules:} $\forall A,B\in \Ag: \forall a\in \AgSt(A), b\in \AgSt(B): (a,b)\in \StSt$,
\begin{equation}
\overline{R}^{\mathsf{(un)link}}_{(A,a),(B,b)} 
:= [\overline{(r^{\mathsf{(un)link}}_{(A,a),(B,b)},\ac{true})}]_{\overline{\sim}}\,,\quad 
\ti{EKRulA}
\xrightleftharpoons[\;r^{\mathsf{link}}_{(A,a),(B,b)}\;]{r^{\mathsf{unlink}}_{(A,a),(B,b)}} 
\ti{EKRulB}
\label{eq:EKRul}
\end{equation}
\item \emph{Site-property changing rules:} $\forall A\in \Ag: \forall a\in \AgSt(A): \forall \mathsf{p},\mathsf{p'}\in \StP(a)$,
\begin{equation}
\overline{R}_{(A,a;\mathsf{p}\to\mathsf{p}')} 
:= [\overline{(r_{(A,a;\mathsf{p}\to\mathsf{p}')} ,\ac{true})}]_{\overline{\sim}}\,,\quad 
\ti{EKspcA}
\xleftharpoonup{r_{(A,a;\mathsf{p}\to\mathsf{p}')}} 
\ti{EKspcB}
\label{eq:EKspc}
\end{equation}
\end{enumerate}
We denote the set of \emph{elementary \KAP{} rules} for a given \KAP{} model $K$ by $\cE_K$. 
Finally, the set of \emph{safe \KAP{} rules} $\cR_K^{\mathsf{safe}}$ is defined as the subset of $\LinAc{\bA_K}_{\ac{c}_K}$ obtained via (finite) iterations of SqPO-type rule compositions of (finitely many) elementary \KAP{} rules.
\end{definition}

\begin{lemma}\label{lem:KSR}
For a \KAP{} model $K$ and the set $\cR_K^{\mathsf{safe}}$ of safe \KAP{} rules as in Definition~\ref{def:SKR}, the following properties hold:
\begin{enumerate}[label=(\roman*)]
\item The application conditions $\overline{\ac{c}_{I_X}}$ of the \emph{elementary \KAP{} rules} $\overline{R}_X=[(r_X,\overline{\ac{c}_{I_X}})]_{\overline{\sim}} \in \cE_K$ read explicitly
\begin{equation}
\overline{\ac{c}_{I_X}} \dot{\equiv} \begin{cases}
\bigwedge\limits_{N\in \cN_{(A,a),(B,b)}}
\neg \exists \left(
\ti{EKRlAcA}
\hookrightarrow N\right)
\quad &\text{if } \exists a:A, b:B \text{ such that } \overline{R}_X = \overline{R}^{\mathsf{link}}_{(A,a), (B,b)}\\
\ac{true} &\text{otherwise,}
\end{cases}\label{eq:acEKR}
\end{equation}
with 
\begin{equation}
\cN_{(A,a),(B,b)}:=\left\{\ti{EKRlAcB}
\right\}\cup
\bigcup\limits_{\substack{%
C\in \Ag, c\in \AgSt(C)\\
(c,a)\in \StSt
}}\left\{
\ti{EKRlAcC}\right\}
\cup
\bigcup\limits_{\substack{%
C\in \Ag, c\in \AgSt(C)\\
(c,b)\in \StSt
}}\left\{
\ti{EKRlAcD}\right\}
\end{equation}
\item The only non-trivial contributions to the application condition $\overline{\ac{c}_I}$ of a generic safe \KAP{} rule $\overline{R}=[(r,\overline{\ac{c}_I})]_{\overline{\sim}}\in \cR_K^{\mathsf{safe}}$ are site-non-linkage constraints (i.e., of the form as in~\eqref{eq:acEKR}).
\end{enumerate}
\end{lemma}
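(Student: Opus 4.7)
The plan is to attack part~(i) by direct computation of the constraint-preserving completion $\overline{\ac{c}_{I_X}}$ for each class of elementary rule via Lemma~\ref{lem:cgcp}, which expresses $\overline{\ac{c}_{I_X}}$ as $\Shift(I_X\hookleftarrow \mIO, \ac{c}_K) \Rightarrow \widetilde{\ac{c}_{I_X}}$ with $\widetilde{\ac{c}_{I_X}}=\Shift(I_X\hookleftarrow \mIO,\ac{c}_K)\land \Trans(r_X,\Shift(O_X\hookleftarrow\mIO,\ac{c}_K))$. Since $\ac{c}_K = \ac{c}_{\cN_K}\land \ac{c}_{\cP_K}$ is a conjunction of atomic constraints, I distribute $\Shift$ and $\Trans$ over the conjunction using~\eqref{eq:ShiftAlgo-lor} and~\eqref{eq:TransAlgo-lor}, then analyze each atomic contribution separately via the algorithmic definitions~\eqref{eq:ShiftAlgo-iter} and~\eqref{eq:TransAlgo-iter}.

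For the agent creation/deletion rules $\overline{R}^{\pm}_{(A,\sigma_P)}$, the crucial observation is that the rule creates/deletes a fully instantiated agent together with all its sites and prescribed property loops, with no connections to the ambient graph; hence no forbidden pattern in $\cN_K$ can arise freshly at the new agent, and every positive conjunct in $\ac{c}_{\cP_K}$ is either verified locally by construction (on the fresh agent) or already ensured by the $\Shift(I\hookleftarrow\mIO,\ac{c}_K)$ hypothesis of the implication, so after the implication one recovers $\overline{\ac{c}_{I_X}}\,\dot{\equiv}\,\ac{true}$. The same reasoning, in even simpler form, covers the property-changing rules and the unlinking rule, whose direct derivations alter no vertex set and introduce no new edges. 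For the linking rule $r^{\mathsf{link}}_{(A,a),(B,b)}$, the only atomic clauses of $\ac{c}_{\cN_K}$ potentially violated by the new edge are the ``at most one incident link per site'' forbidden patterns at the two endpoint sites; carrying out the $\Trans$ computation on these clauses and dropping the contributions subsumed by the implication with $\Shift(I\hookleftarrow\mIO,\ac{c}_K)$ reproduces exactly the three families of witness patterns assembled in $\cN_{(A,a),(B,b)}$, including the degenerate case where $a$ and $b$ were already directly linked (the first element of $\cN_{(A,a),(B,b)}$).

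For part~(ii), I proceed by induction on the number of elementary rules used to build the safe \KAP{} rule, with the base case provided by~(i). For the inductive step, let $\overline{R}_2,\overline{R}_1$ be safe rules whose constraint-preserving completions are conjunctions of site-non-linkage atomic constraints, and let $\overline{R}_{21}$ be their composition along some $\bar{\mu}$. By Theorem~\ref{thm:rcct}(i), $\overline{\ac{c}_{\overline{I}_{21}}}$ is computed, modulo the implication with $\Shift(I_{21}\hookleftarrow \mIO,\ac{c}_K)$, from nested $\Shift$ and $\Trans$ operations applied to $\overline{\ac{c}_{I_1}}$ and $\overline{\ac{c}_{I_2}}$. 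It therefore suffices to verify that the site-non-linkage form (a negative atomic constraint whose witness is the input pattern extended by precisely one forbidden additional link between existing sites) is stable under these two operations: by~\eqref{eq:ShiftAlgo-iter}, $\Shift$ along an $\cM$-morphism of typed undirected multigraphs produces a conjunction of negative constraints whose witnesses are pushouts that again take the shape of an input pattern extended by one forbidden link (possibly after identification of vertices); and by~\eqref{eq:TransAlgo-iter}, $\Trans$ along a safe rule either yields $\ac{false}$ (when the requisite pushout complement fails to exist, in which case the constraint drops out of the conjunction) or keeps the witness as the predecessor input pattern extended by the same forbidden link, since safe rules are built only from localized site-linking/unlinking, property-changing, and disconnected agent creation/deletion operations, none of which can turn a pure link-forbidding witness into a structurally richer one. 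Hence the invariant persists under composition.

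The main obstacle is the exhaustive case analysis in~(i): each atomic negative pattern in $\cN_K$ (four families) and each conjunct of $\ac{c}_{\cP_K}$ (one per site/property-type group and per agent-site incidence) must be pushed through $\Shift(O\hookleftarrow\mIO,\cdot)$, then $\Trans(r,\cdot)$, and finally conjoined with $\Shift(I\hookleftarrow\mIO,\cdot)$ before collapsing via the implication; although each individual case is routine, careful bookkeeping is required to track which pushout complements (or, in SqPO, final pullback complements) exist, which witnesses survive under vertex identifications, and which contributions are absorbed into $\ac{true}$ by the implication. A subtle subcase is confirming that the positive-constraint contributions in $\ac{c}_{\cP_K}$ never produce additional site-non-linkage atoms beyond those listed in~\eqref{eq:acEKR}.
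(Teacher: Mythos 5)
Your proposal is correct and follows essentially the same route as the paper: part (i) by direct computation of the SqPO-type constraint-preserving completions via Lemma~\ref{lem:cgcp}, and part (ii) by induction over the elementary-rule decomposition of a safe rule, using Theorem~\ref{thm:rcct}(i) to reduce the composite condition (modulo the implication with the shifted global constraint) to $\Shift$/$\Trans$ images of the constituent conditions. The only deviation is that you re-derive inline, from the algorithms of Theorem~\ref{thm:STdefns}, the stability of the site-non-linkage shape under $\Shift$, which the paper instead obtains by invoking Lemma~\ref{lem:Bridges}; do note, however, that as in~\eqref{eq:acEKR} a witness may adjoin a \emph{fresh} site together with the forbidden link, so your invariant should read ``one additional link, possibly to a freshly adjoined site'' rather than ``between existing sites'' (your pushout-based argument already handles this general shape, so the slip is purely in the phrasing).
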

\begin{proof}
Part $(i)$ of the statement follows from a direct computation of SqPO-type constraint-preserving completions according to Lemma~\ref{lem:cgcp}, while part $(ii)$ follows from combining statement $(i)$ with Lemma~\ref{lem:Bridges} and Theorem~\ref{thm:rcct}(i) (where the latter clarifies the structure of SqPO-type rule compositions in restricted rewriting). 
\end{proof}

\begin{remark}
It is instructive to consider some examples of \KAP{} rewriting rules $r\in \Lin{\bA_K}$ for some \KAP{} model $K$ that do \emph{not} lift to safe \KAP{} rules since they all possess constraint-preserving completions $\overline{R}=[\overline{(r,\ac{true})}]_{\overline{\sim}}$ with $\overline{\ac{c}_I}\,\dot{\equiv}\,\ac{false}$ (for completions computed in SqPo-semantics and w.r.t.\ the structural constraint $\ac{c}_K$ of the model $K$):
\begin{enumerate}[label=(\roman*)]
\item $\forall A\in \Ag: |\AgSt(A)|\geq 2: \forall a\in \AgSt(A)$, the ``plain'' rules $r^{\pm}_{(A,a)}\in \Lin{\bA_K}$ defined as
\begin{equation}
\ti{KRAa}
\xrightleftharpoons[\;r^{+}_{(A,a)}\;]{r^{-}_{(A,a)}} 
\mIO\label{eq:KRAa}
\end{equation}
produce isolated site vertices when applied to states $X\in \obj{\bS_K}$.
\item $\forall A\in \Ag: \forall a\in \AgSt(A): \forall \mathsf{p}\in \StP(a)$, the ``plain'' rules $r^{\pm}_{(A,a,\mathsf{p})}$ defined as
\begin{equation}
\ti{KRAapA}
\xrightleftharpoons[\;r^{+}_{(A,a,\mathsf{p})}\;]{r^{-}_{(A,a,\mathsf{p})}} 
\ti{KRAapB}
\label{eq:KRAap}
\end{equation}
yield duplicate and missing site properties, respectively, when applied to states $X\in \obj{\bS_K}$.
\end{enumerate}
\end{remark}

In summary, it is tempting to wonder whether safe \KAP{} rules are in fact the only relevant non-trivial restricted rewriting rules for a given \KAP{} model, and we refer a dedicated analysis of this interesting theoretical question to future work.

\subsection{CTMC semantics for \KAP{} in restricted rewriting theory}\label{sec:KAPctmc}

Having identified a suitable set of constraint-preserving \KAP{} rewriting rules in the form of sage \KAP{} rules (for a given \KAP{} model $K$), we are finally in a position to specialize our universal framework for rewriting-based CTMC semantics (in the variant for SqPO-type restricted rewriting according to Corollary~\ref{cor:CTMCrrt}). The only particularly noteworthy special feature of \KAP{} CTMC semantics as compared to generic SqPO-type restricted rewriting CTMC semantics is the observation that (somewhat trivially in light of Lemma~\ref{lem:KSR}) the application of the restricted SqPO-type jump-closure operator $\overline{\bO}$ to a safe \KAP{} rule yields again a safe \KAP{} rule, so that in particular all observables of relevance in a \KAP{} stochastic rewriting system based upon safe \KAP{} rules are of this specific form. The latter feature drastically reduces the complexity of the analysis of \KAP{} ODE semantics for moments of pattern-counting observables, and we illustrate this crucial feature via the following worked example:

\begin{example}\label{ex:KappaWorkedExample}
	Consider a simple \KAP{} model with a type graph as below left that introduces two agent types $\mathsf{K}$ (for ``kinase'') and $\mathsf{P}$ (for ``protein''), where $\mathsf{K}$ has a site $k:\mathsf{K}$, and where $\mathsf{P}$ has sites $p_t,p_l,p_b:\mathsf{P}$. Moreover, the sites $p_t$ and $p_b$ can carry properties $\mathsf{u}$ (``unphosphorylated'') and $\mathsf{p}$ (``phosphorylated''), depicted as dotted loops in the type graph. Sites $k:\mathsf{K}$ and $p_l:\mathsf{P}$ can bind (as indicated by the solid line in the type graph).
\begin{equation}\label{eq:kappaExample}
\begin{array}{cc|ccc|c}
\vcenter{\hbox{\ti{exK1}}}&\hphantom{x} &\hphantom{x} &
\begin{array}{rcl}
\mIO
	& \xrightleftharpoons[\;\overline{K}_{-}\;]{\overline{K}_{+}} &
\ti{exK2}\\
\ti{exK4}
	& \xleftharpoonup{\overline{L}_{+}} &
\ti{exK3}\\
\ti{exK4b}	& \xleftharpoonup{\overline{L}_{-}} &
\ti{exK3b}\\
\ti{exK5}&\xrightleftharpoons[\;\overline{T}_{-}\;]{\overline{T}_{+}} &
\ti{exK6}\\
\ti{exK7}&\xrightleftharpoons[\;\overline{B}_{-}\;]{\overline{B}_{+}} &
\ti{exK8}
\end{array} &\hphantom{x}&\hphantom{x}
\begin{array}{rcl}
\ti{exK9}\text{$\!\!$} &\xleftharpoonup{\,r_{obs_K}\,}&
\ti{exK10}
\\
\\
\ti{exK11}\text{$\,$} &\xleftharpoonup{\,r_{obs_P}\,}&
 \ti{exK12}
\end{array}
\end{array} 
\end{equation}	
In order to simplify the graphical presentation of \KAP{} graphs and rewriting rules thereof in this model, we will from hereon utilize the graphical shorthand notation convention according to Remark~\ref{rem:KGNC}, and moreover omit the site-type labels (as in the present example site-types may be inferred via the positions of sites on agents in the shorthand notation). 

As a prototypical example of a \KAP{} stochastic rewriting system, consider a system based upon the rewriting rules $k_{\pm}$, $l_{\pm}$, $t_{\pm}$ and $b_{\pm}$, whose constraint-preserving completions $\bar{K}_{\pm}$, $\bar{L}_{\pm}$, $\bar{T}_{\pm}$ and $\bar{B}_{\pm}$ are depicted in the middle column of~\eqref{eq:kappaExample}. Here, we have employed a graphical notation\footnote{Coincidentally, owing to the fact that all of the rules presented are instances of \emph{safe} \KAP{} rules in the sense of Definition~\ref{def:SKR}, the only non-trivial application conditions encountered are those expressing non-boundedness of sites, which is why we are able to utilize a succinct graphical notation precisely as in the \KAP{} literature without loss of information in terms of the restricted rewriting semantics.} whereby rules with conditions are depicted as their ``plain'' rules, and with input interfaces annotated such as to indicate the structure of the constraint-preserving application conditions (with no annotation by convention in case of a trivial condition $\ac{true}$). We find that only the site-linking rule $\bar{L}_{+}$ requires a non-trivial constraint-preserving condition, namely one that ensures that the site of the $\mathsf{K}$-type agent and the left site of the $\mathsf{P}$-type agent must be \emph{free} (i.e., not linked to any other site) before binding.

Consider then for a concrete computational example the time-evolution of the average count of the pattern described in the identity rule $r_{obs_P}$. As typical in \KAP{} rule specifications, $r_{obs_P}$ as well as several of the other rules depicted only explicitly involve \emph{patterns}, but not necessarily \emph{states}, since e.g.\ in $r_{obs_P}$ the left site of the $\mathsf{P}$-type agent is not mentioned. In complete analogy to the computation presented in Example~\ref{ex:ugModel}, let us first compute the commutators of the observable  $O_{\mathsf{K}}=\bar{\rho}(\bar{\delta}(r_{obs_K};\ac{true}))$ with the operators $\hat{X}:=\bar{\rho}(\bar{\delta}(\bar{X}))$ (for $X\in \{K_{\pm}, L_{\pm}, T_{\pm},B_{\pm}\}$, and with $\bar{\rho}:=\bar{\rho}^{SqPO}_{\hat{\bS}_{\KAP{}}}$):
\begin{equation}
\begin{aligned}
	[O_{\mathsf{K}},\hat{K}_{\pm}]&=\pm \hat{K}_{\pm}\,,\; [O_{\mathsf{K}},\hat{L}_{\pm}]=[O_{\mathsf{K}},\hat{T}_{\pm}]=[O_{\mathsf{K}},\hat{B}_{\pm}]=0
\end{aligned}
\end{equation}
However, letting $O^{(\mathsf{x},\mathsf{y})}_{P}$, $O^{(\mathsf{x},\mathsf{y})}_{link}$ and $O^{(\mathsf{x},\mathsf{y})}_{free}$ denote the observables for the patterns
\[
\omega_P^{(\mathsf{x},\mathsf{y})}:=
\ti{commK1}\,,\quad \omega_{link}^{(\mathsf{x},\mathsf{y})}:=
\ti{commK2}\,,\quad \omega_{free}^{(\mathsf{x},\mathsf{y})}:=
\ti{commK3}
\]
one may easily demonstrate that even a comparatively simple observable such as $O^{(\mathsf{p},\mathsf{p})}_{P}$ already leads to an infinite cascade of  contributions to the ODEs for the averages of pattern counts. As typical in these sorts of computations, the discovery of a new pattern observable via applying restricted SqPO-type jump-closure (Corollary~\ref{cor:CTMCrrt}) to the commutator contributions to $\tfrac{d}{dt}\langle O^{(\mathsf{p},\mathsf{p})}_{P}\rangle(t)$ leads to the discovery of new pattern observables yet again, such as in
\[
	[O_{\mathsf{P}},\hat{T}_{+}]=\hat{T}_{+}^{(\mathsf{p})}\,,\;
	\jcOp{\hat{T}_{+}^{(\mathsf{p})}}=O^{(\mathsf{u},\mathsf{p})}_{link}\,,\;
	[O^{(\mathsf{u},\mathsf{p})}_{link},\hat{L}_{+}]=\hat{L}^{(\mathsf{u},\mathsf{p})}\,,\;
	\jcOp{\hat{L}^{(\mathsf{u},\mathsf{p})}}=O^{(\mathsf{u},\mathsf{p})}_{free}\,.
\]
In particular the last observable $O^{(\mathsf{u},\mathsf{p})}_{free}$ is found to lead to an infinite tower of other observables (i.e., ``ODE system non-closure''), starting from
\begin{equation*}
	[O^{(\mathsf{u},\mathsf{p})}_{free},\hat{L}_{+}]=-\hat{L}^{(\mathsf{u},\mathsf{p})}
	-\left(
	\ti{commK4}\leftharpoonup \ti{commK5}
	\right)-\left(\ti{commK6}\leftharpoonup \ti{commK7}\right)\,.
\end{equation*}
\end{example}

This exemplary and preliminary analysis reveals that while the rule-algebraic CTMC implementation is indeed fully faithfully applicable to the formulation and analysis  of biochemical reaction systems, further algorithmic and theoretical developments will be necessary (including possibly ideas of \emph{fragments} and \emph{refinements} as in~\cite{danos2008rule,Danos_2010,Harmer_2010}) in order to better understand the precise nature of differential semantics and in particular ODE systems for the moments of pattern-counting observables in our new rewriting-theoretic implementation of \KAP{}.

\section{Application scenario 2: organic chemistry with \MOD{}}\label{sec:ocgr}

From a purely rewriting-theoretical standpoint, it is fascinating to observe that reactions in organic chemistry are in fact nothing but certain types of rewriting rules.
Important contributions towards rendering this heuristic observation into a tractable algorithmic theory were one of the key early achievements of the \MOD{} framework~\cite{Andersen_2016} (which will be briefly reviewed in Section~\ref{sec:modBackground}). 
In order to establish a CTMC theory for organic chemistry as a suitable
specialization of our universal rule-algebraic CTMC theory, it is necessary to faithfully encode the defining structural properties of chemistry (including laws for the possible atoms, molecule structures and reactions thereof) within the semantics of restricted rewriting. 
Despite a rich history spanning almost 20 years of developments of the \MOD{} framework and its predecessors to date, the precise encoding of chemistry models as instances
of a rewriting theory had not been formulated at the time of writing of the 
present paper. While the \MOD{} framework provides a number of algorithmic capabilities that could aid the \emph{implementation} of
such a model, it was not primarily designed for the purposes of organo-chemical rewriting, so that in particular its underlying rewriting subsystem is one of certain forms of undirected typed simple graphs, and notably without the capability to explicitly enforce the structural constraints relevant to organic chemistry (cf.\ Section~\ref{sec:modBackground}). Indeed, in typical application scenarios of \MOD{} in the chemistry setting, one does not require a restricted rewriting semantics for its rewriting subsystem, as rewriting rules are typically employed exclusively for calculating direct derivations 
(i.e., in order to compute chemical spaces as hypergraphs).

Consequently, the material presented in Section~\ref{sec:modEncoding} in fact 
constitutes the first-of-its-kind full-fledged theoretical underpinning of organic chemistry via a rewriting-theoretical approach, in the form of  
 a restricted DPO-type rewriting theory over $\cM$-adhesive categories of undirected typed multigraphs,
and with chemical models defined via choices of structural constraints encoding the laws of organic chemistry.
This original result not only permits to provide a formalization of the key operation of 
rule composition for organic chemistry (simply as the one in compositional DPO-type restricted rewriting),
but in particular also to establish a CTMC theory for organic chemistry as a suitable specialization of our universal rule-algebraic CTMC theory,
constituting yet another key result of the present paper.
As a first step towards an algorithmic implementation of ODEs for moments of pattern-counting observables,
we assess in Section~\ref{sec:rcMOD} the current state of the \MOD{}-platform in view of the requisite operations of rule compositions
and the analysis of the resulting composite rules in terms of equivalence relations, identifying in the process some interesting directions for future work.

\subsection{Graph-based algorithmic cheminformatics}\label{sec:modBackground}

\href{https://cheminf.imada.sdu.dk/mod/}{\MOD{}}~\cite{Andersen_2016} is a generic framework for graph-based algorithmic cheminformatics.
It includes a system to generate chemical reaction networks (i.e., derivation graphs as directed multi-hypergraphs)
based on chemical reactions modeled as graph grammars in DPO-style rewriting.
A domain-specific programming language for specifying the strategy of how to apply transformation rules allows for a controlled expansion of the derivation graph \cite{strat:14}. %
In a prototypical application scenario of \MOD{}, derivation graphs are generated and subsequently analyzed using integer hyperflows %
in order to automatically infer and enumerate chemical transformation motifs (such as autocatalysis) or to compute optimal or near-optimal pathways,
e.g., for enzymatic design questions (see \cite{Andersen_2019} for details).
In general, \MOD{} follows \cite{ehrig:2006fund} in syntax, whence rules are written from left to right as $p = (L\leftarrow K\rightarrow R)$,
with $L$ denoting the input pattern (i.e., $L = I$ and $R = O$ in comparison to Def.~\ref{def:rwc}). The structural properties present in organic chemistry at a high level motivated to base the formulation of \MOD{} on DPO-type semantics.

\begin{example}
The rule depicted in Figure~\ref{fig:aldol} (denoted as $r_+$) illustrates the so-called \emph{aldol addition} reaction as a rewriting rule.
Aldol addition is an industrially important chemical reaction which merges two compounds to form so-called aldols
(the naming stems from the aldehyde that merges with an alcohol, which is a structural pattern seen in many of the products of an aldol addition).
\begin{figure}
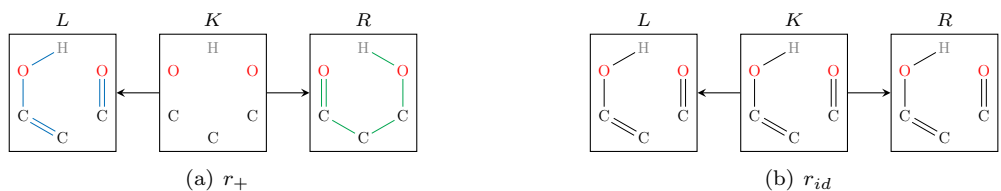

\centering
\subfigure[\label{fig:aldol}$r_+$]{%
	\mi{1}
}%
$\hspace{2cm}$
\subfigure[\label{fig:aldolid}$r_{id}$]{%
	\mi{2}
}%
\caption{Depiction of two chemical DPO rules: \subref{fig:aldol} aldol addition $r_{+}$ and \subref{fig:aldolid} an ``identity rule'' based upon the input motif $I_{+}$ of the aldol addition rule $r_{+}$ (here depicted as $L$).}
\label{fig:formoseRules}
\end{figure}
In order to illustrate to which extent \MOD{}~can be employed for the automatic computation of commutators,
we will employ a second rule (denoted as $r_{id}$, Figure~\ref{fig:aldolid}), a simple identity rule based on the left side of the aldol addition rule.
\end{example}

While the focus of the \MOD{} framework is on the generation and the analysis of chemical reaction networks, it was designed and implemented in a much more generic way,
using rule composition as the underlying algorithmic primitive for enumerating direct derivations and without a particular restriction to chemistry. %
More specifically, the interface of \MOD{} allows limited access to certain forms of rule composition algorithms through several different partial overlap enumeration operators
(see \cite{andersen2018rule} for details). %
In typical chemical application scenarios, the overlaps are generated to represent a subset of connected components of a rule side,
which under the assumption of the input rules being chemically consistent in turn provides certain limited guarantees that composed rules preserve chemical validity,
e.g., making it possible to use composition to trace atoms through sequences of (bio-)chemical reactions \cite{trace:14}.
However, the interface of \MOD{} also exposes a more general overlap enumeration operator,
where all partial overlaps corresponding to \emph{common vertex-induced subgraphs} of the rule sides are enumerated for composition.
While the composed rules are not necessarily chemical,
we will in an algorithmic case study presented in Section~\ref{sec:rcMOD} simply filter the results for chemical validity as an initial prototype implementation,
highlighting in particular the future developments that will be necessary to implement a correct encoding of the notion of chemical rule composition
that is required for CTMC semantics of organic chemistry (i.e., the one of restricted rewriting theory).

\subsection{Formalization of \MOD{} as a restricted rewriting theory}\label{sec:modEncoding}

From a theoretical perspective, the \MOD{} framework includes a rewriting subsystem over what one could call \emph{pre-chemical graphs},
namely undirected simple graphs that are typed in a form that closely mimics the structure of chemical formulas (i.e., vertices are labeled by atom types, edges come in varieties of chemical bond types etc.)~\cite{andersen2018rule}. %
The restriction to simple graphs is not directly implemented as a theoretical specification with conditions, but is enforced on an algorithmic level.
When modeling chemical systems, each graph is in practice a \emph{chemical graph},
which heuristically may be described as a pre-chemical graph with a suitable typing that in addition satisfies certain structural constraints dictated via the laws of chemistry (such as admissible bonding patterns, valency constraints etc.).
These constraints are however not implemented in the \MOD{} algorithms, %
but are to be enforced manually via ensuring that chemically correct input rules are provided to the algorithms, and by induction through direct derivations. %
This lack of a formal specification of the chemical constraints and related algorithmic details poses a considerable obstacle in view of faithfully encoding chemistry in terms of rewriting theory.

In view of the present paper, in order to firmly root the construction in standard DPO-type categorical rewriting theory, %
a necessary prerequisite consists in identifying an \emph{ambient category} that satisfies suitable ($\cM$-) adhesivity properties. %
A first attempt in this direction was made in~\cite{andersen2018rule}, where it was postulated that (pre-)chemical graphs could possibly be interpreted as objects of a certain typed and undirected variant of the category $\mathbf{PLG}$ of partially labeled directed graphs. 
While the latter category had been introduced in~\cite{Habel_2012} as a key example of an $\cM$-$\cN$-adhesive category,
with the motivation of permitting label-changes in rewriting rules, it was also demonstrated in loc cit.\ that $\mathbf{PLG}$ is \emph{not} $\cM$-adhesive.
Since moreover no concrete construction of a tentative variant $\mathbf{uPLG}$ of $\mathbf{PLG}$ for undirected graphs,
let alone results on the possible adhesivity properties of such a category are known in the literature, this attempt proved unsuccessful. %
While it might seem feasible to forego the semantic capability of partial relabeling in favor of utilizing some typed variant of undirected simple graphs as the basic data structure, it is in fact well-known~\cite{quasi-topos-2007} that the category of directed as well as the category of undirected \emph{simple} graphs (which may be formally encoded as the category $\mathbf{BRel}$ of binary relations for the directed and a certain symmetric restriction thereof for the undirected case) are both \emph{not} adhesive, but only form quasi-topoi, which thus in particular prohibits the use of undirected \emph{simple} graphs as a base category in chemical rewriting. In summary, due to the lack of a well-defined base category, it is strictly speaking not even clear whether or not the semantics of rule applications and compositions as implemented in \MOD{} are even instances of DPO-rewriting constructions.\\

In this section, we resolve this conundrum via introducing a fully consistent and faithful encoding of chemical rewriting within the formalism of restricted rewriting (Section~\ref{sec:rrt}). Upon a careful analysis of the algorithmic constructions implemented in the \MOD{} framework~\cite{Andersen_2016,andersen2018rule} with regards to organic chemistry, we base our new construction on an \emph{ambient category} of undirected \emph{multi-}graphs\footnote{Inspired by the \KAP{} constructions in the previous section, we opt to represent \emph{properties} (which may in a more general setting also include e.g.\ \emph{charges} on atoms)
as \emph{typed loop edges} on vertices representing atoms,
whence the change of a property (which was the main motivation in~\cite{andersen2018rule} for postulating the need for employing a variant of $\mathbf{PLG}$)
may be encoded in a rewriting rule simply via deletion/creation of property-encoding loops. 
} typed over a type graph whose \emph{vertex types} represent \emph{atom types}, and whose \emph{edge types} represent \emph{bond types}. It should be noted that this category in particular qualifies as an $\cM$-adhesive category suitable for DPO-type rewriting in the sense of Assumption~\ref{as:main}. 
In a second step, via the definition of certain negative and positive \emph{global structural constraints} (encoding the chemical constraints such as bond configurations etc.), we define a \emph{pattern category} (whose objects and monomorphisms are utilized to define chemical rewriting rules) as well as a \emph{state category} (whose objects are precisely the \emph{chemical graphs}). A specific model of organic chemistry then is defined in terms of certain pieces of data, such as atom types relevant to the model, bond configuration patterns specific to these atom types and possibly additional global constraints from the given practical application in chemistry. 

\begin{definition}\label{def:modRRT}
Let $\cA := \{A_1,\dotsc,A_N\}$ (for some $N\in \bZ_{>0}$) denote a set of \emph{atom types}, i.e., a subset of the atom types present in the periodic table of elements.
Let $\cB:=\{\mathsf{-},\mathsf{=},\mathsf{\equiv}\}$ denote the single-, double- and triple-bond types, respectively. 
Then \emph{a model of organic chemistry via restricted rewriting theory}, or a \emph{\CHEM{} model} for short, is specified via a tuple of data $\mathsf{M}=(\cA,\cB,\cN_{\cA},\cP_{\cA})$ as follows (where $\cN_{\cA}$ and $\cP_{\cA}$ are sets of data used to define the negative and positive sub-constraints, respectively, of a \emph{structural constraint} $\ac{c}_{\CHEM{}}:= \ac{c}_{-}\land \ac{c}_{+}$, see below):
\begin{enumerate}[label=(\roman*)]
\item We define the \emph{ambient category} $\bA_{\CHEM{}}$ of the \CHEM{} model%
\footnote{To simplify notations, we will mostly omit the explicit mentions of the data of a given \CHEM{} model,
	since this data is assumed to be kept fixed throughout a given set of computations.}
\begin{equation}
\bA_{\CHEM{}} := \mathbf{uGraph}\diagup T_{\cA}\,,\quad T_{\cA}:=
\ti{modTG}
\end{equation}
The type graph $T_{\cA}$ thus contains a universal vertex type, one edge type for each atom type (colored in {\color{h1color}dark blue}), and one edge type per bond type (colored in black). 
\item The \emph{negative constraint} $\ac{c}_{-}$,
\begin{equation}
\ac{c}_{-} := \bigwedge_{N\in \cN_0\cup \cN_{\cA}} \neg \exists (\mIO\hookrightarrow N)\,,
\end{equation}
is defined in terms of certain \emph{elementary structural constraints} (specified via a set $\cN_0$ of negative patterns expressing simplicity of the graphs and certain other consistency properties),
\begin{equation}
\cN_0:= \bigcup_{\varepsilon, \varepsilon' \in E_{T_{\cA}}}\left\{
\ti{modNzeroA}
\right\} \cup
\bigcup_{j=1}^N\left\{
\ti{modNzeroB}
\right\}\cup
\bigcup_{j,k=1}^N\left\{
\ti{modNzeroC}
\right\}
\cup \bigcup_{\beta\in \{\mathsf{-},\mathsf{=},\mathsf{\equiv}\}}\left\{
\ti{modNzeroD}
\right\}
\end{equation}
and a set $\cN_{\cA}=\cup_{A\in \cA} \cN_A$ of \emph{forbidden bond patterns}, with the latter constituting an additional piece of data necessary for specifying a \CHEM{} model. 
\item The \emph{positive constraint} $\ac{c}_{+}$ expresses that each vertex must carry an atom type loop, and in addition that each vertex carrying a given loop of  atom type $A\in \cA$ must extend into one of a set $\cP_A$ of \emph{permitted bond patterns} for this atom type (with $\cP_{\cA}:=\cup_{A\in \cA}\cP_A$ constituting part of the data specifying the \CHEM{} model):
\begin{equation}
\ac{c}_{+}:= \forall\left(\mIO \hookrightarrow \ti{modACplusA}, \bigvee_{A\in \cA} \exists\left(
\ti{modACplusB}\hookrightarrow 
\ti{modACplusC}\right)\right)\land
\bigwedge_{A\in \cA}
\forall\left(
\mIO\hookrightarrow
\ti{modACplusD}\,, \bigvee_{P\in \cP_A}\exists\left(
\ti{modACplusE}\hookrightarrow P
\right)
\right)
\end{equation}
\item The \emph{pattern category} $\bP_{\CHEM{}}$ and the \emph{state category} $\bS_{\CHEM{}}$ of the \CHEM{} model are defined (compare Section~\ref{sec:aps}) as the full subcategories of $\bA_{\CHEM{}}$ obtained via restriction of objects to those satisfying $\ac{c}_{-}$ and to those satisfying $\ac{c}_{\CHEM{}}:=\ac{c}_{-}\land \ac{c}_{+}$, respectively.
\end{enumerate}
\end{definition}

\begin{remark} It is worthwhile clarifying that while our present approach is easily extensible to capture more generic features of organic chemistry such as
charges, lone electron pairs, or aromaticity, we focus here on a more basic variant of the theory for concreteness (i.e., purely on the aforementioned subset of bond-types),
since this version of the theory is fully sufficient to introduce the key concepts and theoretical structures.
\end{remark}

\begin{example}\label{ex:modMod}
In a very rudimentary model of organic chemistry involving only three atom types $\cA =\{H,C,O\}$ (hydrogen, carbon and oxygen), one could impose the forbidden bond configurations $\cN_{\cA}$ and the permitted bond configurations $\cP_{\cA}$ in the following form:
\begin{equation}
\begin{aligned}
\cN_H&= \left\{\!\!\!
\ti{modNH1}\,,\;
\ti{modNH2}\,,\;
\ti{modNH3}
\right\} & 
\cN_O&= \left\{\!\!\!
\ti{modNO1}\,,\;
\ti{modNO2}\,,\;
\ti{modNO3}
\right\} \\
\cP_H&= \left\{\!\!\!
\ti{modPH1}
\right\} & 
\cP_O&= \left\{
\ti{modPO1}\,,\;
\ti{modPO2}
\right\} 
\end{aligned}
\end{equation}
An interesting and well-known (in a sense even quintessential) feature of organic chemistry is the complexity of the specifications $\cN_C$ and $\cP_C$ for the carbon atom type,
which is illustrated in Figure~\ref{fig:typeC} (where also all bond configurations are depicted which may occur in patterns involving a carbon atom,
i.e., configurations which do not violate the negative constraints). In order to permit suitably compact presentation, Figure~\ref{fig:typeC} utilizes a graphical shorthand notation akin to the one of the \MOD{} platform, wherein a vertex with an atom-type loop is simply presented by the name of the atom type, and where single-, double- and triple-bond types are typeset in chemical notation.
\begin{figure}
\centering
\ti{Cpatterns}
\caption{Specification of the hierarchy of bond patterns for carbon (C) with the forbidden bond patterns in {\color{red!50!white}red squares}.
The patterns in {\color{ForestGreen!50!white}green circles} represent those found in molecules, while the remaining patterns in {\color{blue!30!white}light blue circles} are molecule patterns.
To simplify the figure the atom and bond types are depicted as in chemistry.}
\label{fig:typeC}
\end{figure}%
\end{example}

With the encoding of organic chemistry as presented in Definition~\ref{def:modRRT} as a particular instance of a DPO-type restricted rewriting theory over an $\cM$-adhesive category that satisfies Assumption~\ref{as:main}, we may leverage our universal CTMC framework in order to arrive at a fully consistent and first-of-its-kind formulation of the CTMC semantics of organic chemistry as a direct application of Theorem~\ref{thm:CTMCs} (universal rewriting-based CTMC theory) and Corollary~\ref{cor:CTMCrrt} (specialization to restricted rewriting theories). The construction of a CTMC for a given reaction system in organic chemistry (specified typically in the form of ``plain'' chemical rules in the literature) in terms of a CTMC in restricted rewriting theory specified via a \CHEM{} model equivalent to the data of the chemical input data may be obtained as follows:
\begin{enumerate}[label=(\roman*)]
\item For each ``plain'' chemical rule $r$ of the reaction system, compute its constraint-preserving completion $\overline{R}:=[(r,\overline{\ac{c}_I})]_{\overline{\sim}}$ (i.e., starting from $R:=[(r,\ac{c}_I)]_{\sim}$ with trivial condition $\ac{c}_{I}:=\ac{true}$).
\item If $\kappa\in \bR_{>0}$ denotes the base rate of the ``plain'' chemical rule $r$ in the reaction system, the contribution to the infinitesimal CTMC generator $\cH$ resulting from the rule is (with $\overline{\rho}:=\overline{\rho}^{DPO}_{\bA_{\CHEM{}}}$)
\begin{equation*}
	\cH_r:= \kappa\left(\overline{\rho}\left(\overline{\delta}(\overline{R})\right) - \overline{\bO}\left(\overline{\delta}(\overline{R})\right)\right)\,.
\end{equation*}
\end{enumerate}
From hereon, the general CTMC theory for DPO-restricted rewriting as introduced in Section~\ref{sec:rrt} is applicable.

While a general discussion or even a classification of the resulting concrete instances of CTMCs for the enormously rich variety of organo-chemical reaction systems
encountered in practice is outside the scope of the present paper and thus left for future work,
suffice it here to mention a few interesting preliminary observations on the structure of \CHEM{}-based CTMCs.
The first observation concerns the typical structure of the ``plain'' rules specified in the language of standard organic chemistry in the form of reactions.
For most of the computations typically considered for analysis with \MOD{}, and motivated by practical applications,
these ``plain'' rules are either of the form $r_{M}^{+}:=(M\leftharpoonup \mIO)$ or $r_{M}^{-}:=(\mIO\leftharpoonup M)$
(the creation or deletion of a fully specified molecule pattern $M\in \obj{\bS_{\CHEM{}}}$),
or more generally ``plain'' rules of the form $r=(O\hookleftarrow K\hookrightarrow I)$ where $O,K,I\in \obj{\cP_{\CHEM{}}}$ are patterns,
and such that $r\vert_V=(V_O\xleftarrow{\cong}V_K\xrightarrow{\cong}V_I)$ (i.e., the restriction of $r$ to the vertex-part of the span of $\CHEM{}$-pattern) is a span of isomorphisms.
A preliminary analysis reveals (see Example~\ref{ex:modModB} below for the concrete case of $r_{+}$) that %
the constraint-preserving completions of such ``plain'' rules %
yield application conditions $\overline{\ac{c}_I}$ that satisfy $\overline{\ac{c}_I}\,\not{\!\!\dot{\equiv}} \,\ac{false}$, %
and whose only non-trivial contributions are of the form of atom-vertex non-linking negative constraints,
akin to the structure discovered in simple graph rewriting in Example~\ref{ex:sgraph-cpc} and Lemma~\ref{lem:Bridges}.

A second and more technical observation concerns a peculiarity related to the computation of the action of the restricted DPO-type jump-closure operator $\overline{\bO}$ on the aforementioned types of constraint-completed \CHEM{} rules. As one may easily verify from the concrete definition of pushouts and pushout complements in the category $\mathbf{uGraph}$, and via the definition of the action of a DPO-rule $r=(O\hookleftarrow K\hookrightarrow I)\in \LinAc{\bA_{\CHEM{}}}$ on some object $X\in \bS_{\CHEM{}}$, one finds the following result:
\begin{lemma}
Given a \CHEM{}-rule $R:=(r,\ac{c}_I)$ with the special property $r\vert_V=(V_O\xleftarrow{\cong}V_K\xrightarrow{\cong}V_I)$, then for $\overline{R}:=[(r,\overline{\ac{c}_I})]_{\overline{\sim}}$ and $\overline{R}_{id}:=[(r_{id},\overline{\ac{c}_I})]_{\overline{\sim}}$ (with $r_{id}:=(I\xleftarrow{id}I\xrightarrow{id}I)$) we find that
\begin{equation}
\bra{} \overline{\bO}\left(\overline{\delta}(\overline{R})\right)\ket{X} = \bra{} \overline{\bO}\left(\overline{\delta}(\overline{R}_{id})\right)\ket{X}\,.
\end{equation}
This equation follows in turn from the stronger property $\overline{\bO}\left(\overline{\delta}(\overline{R})\right)=\overline{\bO}\left(\overline{\delta}(\overline{R}_{id})\right)$ (in the sense of equality of linear operators acting on the vector space $\hat{\overline{\bS}}_{\CHEM{}}$ of \CHEM{}-states.
\end{lemma}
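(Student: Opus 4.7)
The plan is to prove the stronger claim first --- namely equality of the two linear operators on $\hat{\overline{\bS}}_{\CHEM{}}$ --- and then extract the bra--ket formula as an immediate corollary via $\braket{}{\overline{X}}=1_{\bR}$.

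First, I would unfold the defining formula of the restricted DPO-type jump-closure from Corollary~\ref{cor:CTMCrrt} on both sides, obtaining
\begin{equation*}
\overline{\bO}(\bar{\delta}(\overline{R})) = \bar{\delta}(I\xleftarrow{i}K\xrightarrow{i}I,\overline{\ac{c}_I})\,,\qquad
\overline{\bO}(\bar{\delta}(\overline{R}_{id})) = \bar{\delta}(I\xleftarrow{id}I\xrightarrow{id}I,\overline{\ac{c}_I})\,,
\end{equation*}
where $i:K\hookrightarrow I$ denotes the right leg of the plain rule $r=(O\xleftarrow{o}K\xrightarrow{i}I)$. It is worth emphasizing that these two rule-algebra elements are typically \emph{not} $\overline{\sim}$-equivalent, since $i$ need not be an isomorphism on the edge component even though the hypothesis on $r$ forces it to be a bijection on vertices; so the coincidence of their representations is genuine content rather than a formal consequence of the equivalence relation.

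Second, I would apply~\eqref{eq:defRRAR} to both sides on an arbitrary basis vector $\ket{\overline{X}}$ with $\overline{X}\in\obj{\bS_{\CHEM{}}}$. For the identity-span rule $(I\xleftarrow{id}I\xrightarrow{id}I,\overline{\ac{c}_I})$, every $\cM$-monomorphism $\bar{m}:I\hookrightarrow\overline{X}$ with $\bar{m}\vDash\overline{\ac{c}_I}$ is trivially DPO-admissible and the associated direct derivation returns $\overline{X}$, contributing $\ket{\overline{X}}$ to the sum. For the rule $(I\xleftarrow{i}K\xrightarrow{i}I,\overline{\ac{c}_I})$ I would invoke the standard gluing condition for DPO rewriting in $\mathbf{uGraph}/T_{\cA}$: the vertex-iso hypothesis on $i$ renders the dangling condition vacuous (no vertex of $\overline{X}$ is deleted), while the identification condition is automatic because $\bar{m}\in\cM$ is a monomorphism. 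Admissibility therefore again reduces to $\bar{m}\vDash\overline{\ac{c}_I}$. Moreover, because the two legs of the input span coincide, the subsequent pushout of $I\xleftarrow{i}K\xrightarrow{k}C$ (with $C$ the chosen pushout complement and $k:K\hookrightarrow C$ its induced leg) reconstructs $\overline{X}$ up to isomorphism by uniqueness of pushouts, since the defining pushout-complement square already exhibits $\overline{X}$ as a pushout of precisely this span.

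Third, I would conclude that both sums range over the same indexing set $\{\bar{m}\in\cM(I,\overline{X})\mid \bar{m}\vDash\overline{\ac{c}_I}\}$ and that every summand equals $\ket{\overline{X}}$. This yields operator equality on every basis vector, hence on all of $\hat{\overline{\bS}}_{\CHEM{}}$ by linearity; pairing with $\bra{}$ and using $\braket{}{\overline{X}}=1_{\bR}$ then delivers the first displayed equation of the lemma. The main obstacle, as I see it, is the careful combinatorial bookkeeping in step two: one must unpack the explicit construction of pushout complements in $\mathbf{uGraph}/T_{\cA}$ (tracking the edge-incidence map and the typing over $T_{\cA}$), verify that the vertex-iso assumption on $i$ really does trivialize every facet of the gluing condition for arbitrary matches $\bar{m}$, and confirm that the subsequent pushout against a span with coinciding legs reconstructs $\overline{X}$ up to isomorphism. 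All of these are standard consequences of the $\cM$-adhesive structure of $\mathbf{uGraph}/T_{\cA}$ established in Section~\ref{ssc:crtwc}, so I expect the argument to be essentially mechanical once the hypothesis on $r\vert_V$ is correctly propagated through the DPO diagram.
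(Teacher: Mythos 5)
Your proposal is correct and is essentially the verification the paper itself gestures at (the paper offers no written proof beyond remarking that the lemma follows "from the concrete definition of pushouts and pushout complements in the category $\mathbf{uGraph}$" and the definition of DPO direct derivations): the vertex-iso hypothesis on $r\vert_V$ means only edges are deleted, so the pushout complement along any $\cM$-match exists unconditionally and the subsequent pushout against the coinciding span reconstructs $\overline{X}$, whence both jump-closure operators act diagonally with the same eigenvalue $\lvert\{\bar{m}\in\cM(I,\overline{X})\mid \bar{m}\vDash\overline{\ac{c}_I}\}\rvert$. Your side remark that the two rule-algebra elements are generally \emph{not} $\overline{\sim}$-equivalent (since $i$ need not be edge-bijective), so the equality genuinely holds only at the level of represented operators, is an accurate and useful clarification of why the statement has content.
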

Via this lemma, when working with \CHEM{}-rules of the aforementioned structure, it is sufficient to consider \emph{pattern-counting observables} that are of the form $\overline{\bO}\left(\overline{\delta}(\overline{R}_{id})\right)$ (i.e., based upon ``identity rules'' $r_{id}$) throughout the CTMC calculus, which thus poses a convenient algorithmic simplification.

We illustrate some of the features alluded to above with the following example:

\begin{example}[Ex.~\ref{ex:modMod} continued]\label{ex:modModB}
Inspecting the complexity of the structural constraints even for the comparatively rudimentary \CHEM{} model introduced in Example~\ref{ex:modMod}, it might at first appear entirely infeasible to present any chemical rewriting rules with constraint-preserving completions of their application conditions explicitly. Quite remarkably, this is however not the case upon closer inspection, as most of the contributions to the structural constraints $\ac{c}_{\CHEM{}}$ typically do not contribute non-trivially to the constraint-preserving completions. Postponing a more comprehensive analysis of this phenomenon to future work, suffice it here to illustrate this effect via the example of the aldol addition rule $r_{+}$. Starting from $R_{+}:=(r_{+},\ac{true})$ (i.e., encoding the ``plain'' rule $r_{+}$ as a rule $R_{+}$ with the same underlying rule and a trivial application condition $\ac{c}_{I_{+}}:=\ac{true}$), we find the constraint-preserving completion $\overline{\ac{c}_{I_{+}}}$ of $\ac{c}_{I_{+}}:=\ac{true}$ to evaluate to
\begin{equation}\label{eq:acRplus}
\overline{\ac{c}_{I_{+}}}:=\bigwedge_{{\color{red}N \in \{ I_{+}^{\mathsf{-}}, I_{+}^{\mathsf{=}}\}}} \neg \exists (I_{+}\hookrightarrow {\color{red}N})\,.
\end{equation}
Here, $I_{+}^{\mathsf{-}}$ and $I_{+}^{\mathsf{=}}$ are variants of $I_{+}$ where the two carbons that are linked with a single bond via the rule $r_{+}$ (i.e., the ones marked $\langle0\rangle$ and $\langle5\rangle$ in the depiction of $r_{+}$ in the top right part of Figure~\ref{fig:modcomp}) are linked with a single and a double bond, respectively. The interested readers are invited to compare this result to the ones of Example~\ref{ex:sgraph-cpc} and Lemma~\ref{lem:Bridges}, which in particular facilitates to interpret the condition $\overline{\ac{c}_{I_{+}}}$ as the one that prevents a double-edge to be produced when applying $r_{+}$ to a chemically consistent graph. Finally, the pattern-counting observable associated to $\overline{R}_{+}:=[(r_{+},\overline{\ac{c}_{I_{+}}})]_{\overline{\sim}}$ (i.e., the diagonal operator whose Eigenvalues count the numbers of admissible matches of $\overline{R}_{+}$ into \CHEM{}-states) to evaluate to
\begin{equation}\label{eq:oplus}
\hat{\bar{O}}_{+}:=\overline{\rho}\left(\overline{\delta}\left(r_{id},\overline{\ac{c}_{I_{+}}}\right)\right)\,,
\end{equation}
with $r_{id}:=(I_{+}\xleftarrow{id}I_{+}\xrightarrow{id}I_{+})$ the ``identity (plain) rule'' on $I_{+}$.
\end{example}

\subsection{Case study: rule compositions in \MOD{}}\label{sec:rcMOD}

As a first exploration of the algorithmic capabilities of the current implementation of the \MOD{} platform in the context of CTMC semantics, let us contemplate a simple example for the computation of a pattern-count observable moment-evolution ODE in a chemical reaction system consisting purely of the aldol addition reaction. 

\subsubsection{Computational problem statement for the aldol addition example}

Unfortunately, at present \MOD{} does not permit to explicitly implement application conditions, which prohibits a priori to study the observable $\hat{\bar{O}}_{+}$ (which has a non-trivial application condition), so we will in effect have to consider alternatively the case of an observable $\hat{O}_{+}:=\overline{\rho}(\overline{\delta}(r_{id},\ac{true}))$ that is a variant of $\hat{\bar{O}}_{+}$ with a trivial application condition (and which thus counts patterns of the form $I_{+}$ regardless of whether or not the two connected components of $I_{+}$ are matched into different components in a given state). Specializing Theorem~\ref{thm:CTMCmev} to the case of computing the evolution equation for the averages of either $\hat{\bar{O}}_{+}$ or of $\hat{O}_{+}$ for a reaction system just consisting of the aldol addition rule (for simplicity at unit rate), i.e., for an infinitesimal CTMC generator
\begin{equation}
\overline{\cH} = \hat{\bar{H}}_{+} - \hat{\bar{O}}_{+}\,,\quad  \hat{\bar{H}}_{+}:= \bar{\rho}(\bar{\delta}(\overline{R}_{+}))\,,
\end{equation}
and choosing some input probability distribution $\ket{\Psi(0)}=\ket{\Psi_0}\in \Prob{\bS_{\CHEM{}}}$ at time $t=0$ (supported over \emph{states}, i.e., over basis vectors of $\hat{\bar{\bS}}_{\CHEM}$), we obtain the evolution equations:
\begin{subequations}
\begin{align}
\tfrac{d}{dt}\bra{}\hat{\bar{O}}_{+}\ket{\Psi(t)}
&= \bra{}[\hat{\bar{O}}_{+},\overline{\cH}] \ket{\Psi(t)}\,, 
& & \bra{}\hat{\bar{O}}_{+}\ket{\Psi(0)}=\bra{}\hat{\bar{O}}_{+}\ket{\Psi_0} \label{eq:MevoA}\\
\tfrac{d}{dt}\bra{}\hat{O}_{+}\ket{\Psi(t)}
&= \bra{}[\hat{O}_{+},\overline{\cH}] \ket{\Psi(t)}\,, 
& & \bra{}\hat{O}_{+}\ket{\Psi(0)}=\bra{}\hat{O}_{+}\ket{\Psi_0}\,. \label{eq:MevoB}
\end{align}
\end{subequations}
Here, we have employed the standard notation $[A,B] := AB-BA$ for the \emph{commutator} (for linear operators $A,B\in End_{\bR}(\hat{\bar{\bS}}_{\CHEM{}}$). 
Since diagonal operators such as $\hat{\bar{O}}_{+}$ and $\hat{O}_{+}$ \emph{commute} (i.e., $[\hat{\bar{O}}_{+},\hat{O}_{+}]=0$), we may simplify the evolution equations to
\begin{subequations}
\begin{align}
\tfrac{d}{dt}\bra{}\hat{\bar{O}}_{+}\ket{\Psi(t)}
&= \bra{}[\hat{\bar{O}}_{+},{\color{blue}\hat{\bar{H}}_{+}} ] \ket{\Psi(t)}\,, 
& & \bra{}\hat{\bar{O}}_{+}\ket{\Psi(0)}=\bra{}\hat{\bar{O}}_{+}\ket{\Psi_0} \label{eq:MevoA2}\\
\tfrac{d}{dt}\bra{}\hat{O}_{+}\ket{\Psi(t)}
&= \bra{}[\hat{O}_{+},{\color{blue}\hat{\bar{H}}_{+}} ] \ket{\Psi(t)}\,, 
& & \bra{}\hat{O}_{+}\ket{\Psi(0)}=\bra{}\hat{O}_{+}\ket{\Psi_0}\,. \label{eq:MevoB2}
\end{align}
\end{subequations}
To proceed, we must therefore compute the contributions to the two commutators $[\hat{\bar{O}}_{+},{\color{blue}\hat{\bar{H}}_{+}} ]$ and $[\hat{O}_{+},{\color{blue}\hat{\bar{H}}_{+}} ]$ via the restricted rewriting rule-algebraic calculus according to Theorem~\ref{thm:rrap} (utilizing the explicit definition of the rule-algebraic composition $\rrap{DPO}{}{}$ as introduced in~\eqref{eq:rrapdef}). With shorthand notations $\bar{\rho}=\bar{\rho}^{DPO}_{\bA_{\CHEM{}}}$ for the restricted DPO-type rule algebra representation on $\hat{\bar{\bS}}_{\CHEM{}}$ and $\rrap{}{}{}\equiv\rrap{DPO}{}{}$, and recalling that we defined $\bar{R}_{+}:=[(r_{+},\overline{\ac{c}_{I_{+}}})]_{\overline{\sim}}\,$, $\hat{\bar{O}}_{+}:=\overline{\rho}(\overline{\delta}(r_{id},\overline{\ac{c}_{I_{+}}}))$ and $\hat{O}_{+}:=\overline{\rho}(\overline{\delta}(r_{id},\ac{true}))$, we thus have to perform the following computations (utilizing in particular~\eqref{eq:rrapRhoH} in the steps marked $(*)$ and the definition of $\rrap{}{}{}$ according to~\eqref{eq:rrapdef} in the steps marked $(**)$):
\begin{subequations}
\begin{align}
\begin{aligned}
[\hat{\bar{O}}_{+},{\color{blue}\hat{\bar{H}}_{+}} ]&= \hat{\bar{O}}_{+}{\color{blue}\hat{\bar{H}}_{+}} -  {\color{blue}\hat{\bar{H}}_{+}} \hat{\bar{O}}_{+}\overset{(*)}{=}\bar{\rho}\left(\rrap{}{\overline{\delta}\left(r_{id},\overline{\ac{c}_{I_{+}}}\right)}{{\color{blue}\overline{\delta}\left(r_{+},\overline{\ac{c}_{I_{+}}}\right)}}\right) 
-  \bar{\rho}\left(\rrap{}{{\color{blue}\bar{\delta}\left(r_{+},\overline{\ac{c}_{I_{+}}}\right)}}{\overline{\delta}\left(r_{id},\overline{\ac{c}_{I_{+}}}\right)}\right)
 \\
 &\overset{(**)}{=} 
 \sum_{%
 	\bar{\mu}
	\in\overline{%
		\RMatchGT{DPO}{(r_{id},\overline{\ac{c}_{I_{+}}})}{{\color{blue}r_{+},\overline{\ac{c}_{I_{+}}}}}}} \bar{\delta}\left(
 \compGT{DPO}{(r_{id},\overline{\ac{c}_{I_{+}}})}{\bar{\mu}}{{\color{blue}\,(r_{+},\overline{\ac{c}_{I_{+}}})}}
 \right)\\
 &\qquad -
  \sum_{%
 	\bar{\mu}
	\in\overline{%
		\RMatchGT{DPO}{
			{\color{blue}(r_{+},\overline{\ac{c}_{I_{+}}})}
			}{%
			r_{id},\overline{\ac{c}_{I_{+}}}
			}
			}} 
	\bar{\delta}\left(
 \compGT{DPO}{{\color{blue}(r_{id},\overline{\ac{c}_{I_{+}}})}}{\bar{\mu}}{\,(r_{+},\overline{\ac{c}_{I_{+}}})}
 \right)
 \end{aligned}\label{eq:modCA}\\
 \begin{aligned}
[\bar{O}_{+},{\color{blue}\hat{\bar{H}}_{+}} ]
&= \bar{O}_{+}{\color{blue}\hat{\bar{H}}_{+}} -  {\color{blue}\hat{\bar{H}}_{+}} \bar{O}_{+}
\overset{(*)}{=}\bar{\rho}\left(\rrap{}{\overline{\delta}\left(r_{id},\ac{true}\right)}{{\color{blue}\overline{\delta}\left(r_{+},\overline{\ac{c}_{I_{+}}}\right)}}\right) 
-  \bar{\rho}\left(\rrap{}{{\color{blue}\bar{\delta}\left(r_{+},\overline{\ac{c}_{I_{+}}}\right)}}{\overline{\delta}\left(r_{id},\ac{true}\right)}\right)
 \\
 &\overset{(**)}{=} 
 \sum_{%
 	\bar{\mu}
	\in\overline{%
		\RMatchGT{DPO}{(r_{id},\ac{true})}{{\color{blue}r_{+},\overline{\ac{c}_{I_{+}}}}}}} \bar{\delta}\left(
 \compGT{DPO}{(r_{id},\ac{true})}{\bar{\mu}}{{\color{blue}\,(r_{+},\overline{\ac{c}_{I_{+}}})}}
 \right)\\
 &\qquad -
  \sum_{%
 	\bar{\mu}
	\in\overline{%
		\RMatchGT{DPO}{
			{\color{blue}(r_{+},\overline{\ac{c}_{I_{+}}})}
			}{%
			r_{id},\ac{true}
			}
			}} 
	\bar{\delta}\left(
 \compGT{DPO}{{\color{blue}(r_{id},\ac{true})}}{\bar{\mu}}{\,(r_{+},\overline{\ac{c}_{I_{+}}})}
 \right)
 \end{aligned}\label{eq:modCB}
\end{align}
\end{subequations}
As an intermediate summary, our task thus amounts to computing a number of restricted rule compositions for rules with constraint-preserving conditions. 

\subsubsection{Assessment of adequacy of \MOD{} for computing restricted rule compositions}

A careful analysis of the current version of the \MOD{} platform and of the available rule composition algorithms available therein~\cite{andersen2018rule} allowed us to compare these algorithms to the DPO-type rule composition operation (Definition~\ref{def:Rcomp}) in the ambient category $\bA_{\CHEM{}}$ of a given \CHEM{} model. The most generic of these operations, denoted $\bullet$ in \MOD{}, has an analogous interpretation within DPO-rewriting theory as follows:
\begin{definition}\label{def:obMOD}
Let $\bA_{\CHEM{}}$ denote the ambient category of a \CHEM{} model, and let $r_j=(O_j\hookleftarrow K_j\hookrightarrow I_j)\in \Lin{\bA_{\CHEM{}}}$ ($j=1,2$) be two linear rules (without conditions). Then we define\footnote{For notational succinctness, we employ here a slight abuse of the notations introduced in Definition~\ref{def:Rcomp}: since every linear rule $r\in \Lin{\bA_{\CHEM}}$ without a condition is semantically equivalent in restricted rewriting to a rule $R=(r,\ac{true})$ with the same ``plain'' rule $r$ and a trivial application condition $\ac{true}$, and since composing rules that have trivial application conditions results only in rules that also have trivial conditions, it is consistent to let $\RMatchGT{DPO}{r_2}{r_1}:=\RMatchGT{DPO}{R_2}{R_1}$ and $\compGT{DPO}{r_2}{\mu}{r_1}:=\compGT{DPO}{R_2}{\mu}{R_1}$ (for $R_j:=(r_j,\ac{true})$, $j=1,2$), for the latter in addition interpreting the composite rule again as a ``plain'' rule (i.e., dropping the application condition $\ac{true}$).}
\begin{equation}
r_1\bullet r_2 := \biguplus_{\mu \in \RMatchGT{\bullet\:DPO}{r_2}{r_1}} \{\compGT{DPO}{r_2}{\mu}{r_1}\}\,.
\end{equation}
Here, the symbol $\biguplus$ is used to indicate that the operation renders a \emph{multiset} of composite rules $\compGT{DPO}{r_2}{\mu}{r_1}$, where DPO-type admissible matches $\mu$ are taken from the restricted set of matches $\RMatchGT{\bullet\:DPO}{r_2}{r_1}$ defined as
\begin{equation*}
\RMatchGT{\bullet\:DPO}{r_2}{r_1}:= \left\{
\mu=(I_2\hookleftarrow M\hookrightarrow O_1)\in \RMatchGT{DPO}{r_2}{r_1}\mid
(M\hookrightarrow O_1), (M\hookrightarrow I_2)\in \cM_{\mathsf{er}}
\right\}\,.
\end{equation*} 
The restriction of matches is thus taken to only consider spans of \emph{edge-reflecting $\cM$-morphisms} (denoted above $\cM_{\mathsf{er}}$).
An $\cM$-morphism $(m:A\hookrightarrow B)\in \cM$ in the category $\bA_{\CHEM}$ is defined to be \emph{edge-reflecting} iff for every edge $e'$ in $B$ such that the endpoint vertices of $e'$ are all in the codomain of $m$, there exists an edge $e$ in $A$ with $m(e)=e'$ (which since $m$ is a monomorphism entails in particular that $e$ must have the same number of endpoints as $e'$). Using notations as in Definition~\ref{def:ugraph} for untyped undirected graphs, and with $m=(m_V:V_A\hookrightarrow V_B,m_E:E_A\hookrightarrow E_B)$, this statement may be expressed more formally as  $m\in \cM_{\mathsf{er}}$ iff $m\in \cM$ and
\begin{equation}
\begin{aligned}
&\forall e'\in E_B: \big(\forall v'\in i_B(e'): \exists v\in V_A: m_V(v)=v'\big)\\
&\qquad\qquad  \Rightarrow \big(\exists e\in E_A: m_E(e)=e' \land i_B(e')=\cP^{(1,2)}(m_V)\circ i_A(e)\big)\,.
\end{aligned}
\end{equation}
\end{definition}

\begin{remark}
The seemingly ad hoc nature of the above definition may be understood more transparently in light of its algorithmic counterpart in \MOD{},
where the rule composition algorithm for the operation $\bullet$ is implemented based upon \emph{McGregor's vertex-induced common subgraph algorithm}\footnote{\url{https://www.boost.org/doc/libs/1_74_0/libs/graph/doc/mcgregor_common_subgraphs.html}}
(in the form available via the \textsc{Boost Graph} library~\cite{10.5555/504206}). %
Incidentally, using this rule composition operator, the simplicity constraint of chemical graphs are in many practically relevant examples typically not violated.
To illustrate why spans of edge-reflecting $\cM$-morphisms provide such a feature, consider for illustration the following two pushout diagrams in $\mathbf{uGraph}$ (where colors are used to indicate the mapping of vertices):
\begin{equation}
\ti{poExA}\qquad\qquad
\ti{poExB}
\end{equation}
The span in the diagram on the left is found to \emph{not} consist of edge-reflecting monomorphisms, in contrast to the one in the diagram on the right.
More generally, as a feature akin to the well-known analogous feature in directed simple graphs,
the pushout of a span of edge-reflecting monomorphisms in $\mathbf{uGraph}$ with each of the three objects of the span being a simple graph is guaranteed to yield again a simple graph
(and analogously in the setting for typed undirected graphs such as $\bA_{\CHEM{}}$).
Consequently, the choice for the construction of $\bullet$ according to Definition~\ref{def:obMOD} implies that for any two rules $r_1,r_2\in \Lin{\bP_{\CHEM{}}}$
in the \emph{pattern category} $\bP_{\CHEM{}}$ (for which thus $O_j$, $K_j$ and $I_j$ ($j=1,2$) are in particular simple undirected typed graphs),
at least the pushout object of an admissible match $\mu\in \RMatchGT{\bullet\:DPO}{r_2}{r_1}$ is guaranteed to be a simple undirected typed graph as well.
However, this is not a guarantee that the composite rule $\compGT{DPO}{r_2}{\mu}{r_1}$ will be an element of $\Lin{\bP_{\CHEM{}}}$, %
which is why in \MOD{} any composite rule construction is aborted ``on-the-fly'' if it gives rise to non-simple graphs.
\end{remark}

By definition, the operation $\bullet$ evidently does not cover all contributions possible in the full DPO-type rule composition semantics, and it is moreover possible to verify via counter-examples that $\bullet$ is neither a unital nor an associative operation (in the sense of the associativity and concurrency theorems). Another evident problem in view of implementing faithfully the restricted rewriting operation of rule composition is the evident lack of taking into account the structural constraints and constraint-preserving completions of application conditions. Therefore, it is clear a priori that $\bullet$ will not be algorithmically sufficient in order to compute the commutators of the form relevant for differential pattern-counting moment-semantics according to Theorem~\ref{thm:CTMCmev}. In order to properly assess the necessary extensions to \MOD{} in order to resolve these issues, it is nevertheless of interest to work through a non-trivial example of a rule composition analysis via the operator $\bullet$, which we will present in the remainder of this section.

Upon closer inspection of the definition of the DPO-type rule composition (Definition~\ref{def:Rcomp}),
we find that a span $\mu=(I_2\hookleftarrow M\hookrightarrow O_1)$ of $\cM$-morphisms is a DPO-admissible match of rules $R_j = (r_j,\ac{c}_{I_j})$ ($j=1,2$)
if it is an admissible match of the ``plain'' rules (i.e., more explicitly, of $R^{\mathsf{plain}}_j := (r_j,\ac{true})$, $j=1,2$),
and if in addition the condition $\ac{c}_{I_{21}}$ of the composite rule satisfies $\ac{c}_{I_{21}}\, \not\!\!\dot{\equiv} \,\ac{false}$.
We may thus split the restricted DPO-type rule composition computations into a part where we pre-compute all ``plain'' rule compositions,
and a second part in which we compute for each composite rule its condition $\ac{c}_{I_{21}}$,
potentially in this step having to discard some of these compositions due to $\ac{c}_{I_{21}}\, \dot{\equiv}\, \ac{false}$.\\

\noindent\textbf{``Plain'' rule compositions via the $\bullet$ operator of \MOD{}.} According to Definition~\ref{def:obMOD}, we may utilize the \MOD{} algorithm for the rule composition operation $\bullet$ to compute at least the subset of all possible DPO-type ``plain'' rule compositions, namely those along vertex-induced common subgraph of the relevant rule interfaces.
We thus find for the compositions $r_{id} \bullet r_+$ (resp.\ $r_+ \bullet r_{id}$) an overall number of 18 (resp.~9) possible composed rules. In Figure~\ref{fig:modcomp} (which was auto-generated with \MOD{}), an explicit example of a rule composition of $r_{id} \bullet r_+$ is depicted for illustration. The dashed red lines indicate the match used for the construction of the composed overall rule (bottom of the figure).

Utilizing another core algorithmic functionality of \MOD{}, we further analyzed the sets of composite rules in terms of isomorphism classes of rules, and thus in particular convert the multisets rendered by the operation $\bullet$ into a set of non-isomorphic rules with a \emph{count} for the number of occurrences of each isomorphism class in $r_{id} \bullet r_+$ (resp.\ $r_+ \bullet r_{id}$). The result of this analysis is presented in Table~\ref{tab:res}. We find that in this particular example, each isomorphism class appears at most once in either set of composite rules, and that 6 of the isomorphism classes occur once in \emph{both} sets.

In \ref{app:modcode} we list a short Python code fragment in order to illustrate how rules are composed with the \MOD{} framework. %
The example code used for this section can be accessed from \url{https://cheminf.imada.sdu.dk/papers/tcs-2021/}. %
The source code repository for \MOD{} can be found at \url{https://github.com/jakobandersen/mod}.\\

We directly compared the results of the \MOD{}-based computation with a manual computation of the DPO-type compositions of $r_{id}$ with $r_+$ and of $r_+$ with $r_{id}$, finding that apart from the compositions along trivial overlaps, the \MOD{} computation as explained via the definition of $\bullet$ indeed missed an additional three possible ``plain'' rule compositions; these are labeled ${\color{red}r_A}$, ${\color{red}r_B}$ and ${\color{red}r_C}$ in Table~\ref{tab:res}. With $O_{+}$ denoting the output pattern of $r_{+}$ and $I_{+}$ the one of $r_{id}$, and taking the labeling scheme for atoms in Figure~\ref{fig:modcomp} as a convention for labeling individual atoms in these patterns, we may compactly encode the three relevant admissible matches ${\color{red}\mu_A},{\color{red}\mu_B},{\color{red}\mu_C}\in \MatchGT{DPO}{r_{id}}{r_{+}}$ of $r_{id}$ into $r_{+}$ as\footnote{For ${\color{red}\mu_C}$, the partial overlap also includes the double bond present between two of the pairs of overlapped vertices, which is not explicitly mentioned in the equation for brevity.}
\begin{equation}
{\color{red}\mu_A}=I_{+}\xleftharpoonup{\substack{%
	\langle0\rangle \leftarrow  \langle 0\rangle\\
	\langle5\rangle \leftarrow  \langle 5\rangle}}O_{+}\,,\quad
{\color{red}\mu_B}=I_{+}\xleftharpoonup{\substack{%
	\langle0\rangle \leftarrow  \langle 5\rangle\\
	\langle5\rangle \leftarrow  \langle 0\rangle}}O_{+}\,,\quad
{\color{red}\mu_C}=I_{+}\xleftharpoonup{\substack{%
	\langle0\rangle \leftarrow  \langle 0\rangle\\
	\langle5\rangle \leftarrow  \langle 1\rangle\\
	\langle4\rangle \leftarrow  \langle 2\rangle}}O_{+}\,. \label{eq:modMO}
\end{equation}

\begin{figure}
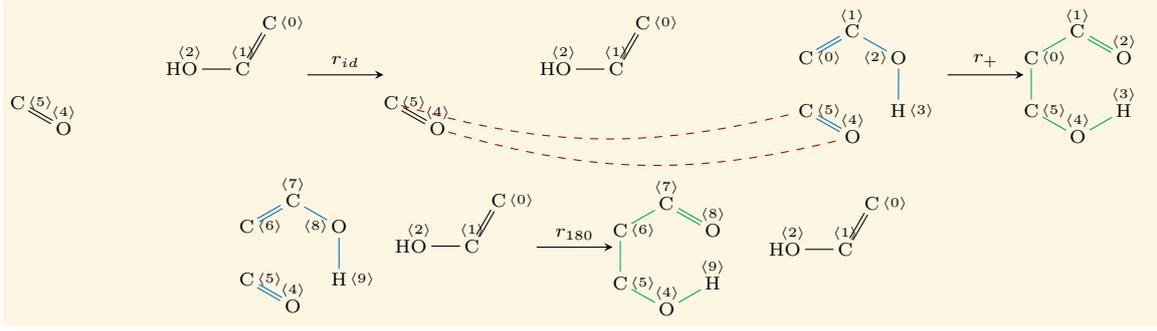

\centering
\mi{4}
\caption{One of 18 possible compositions resulting from $r_{id} \bullet r_+$;  top: DPO rules $r_{id}$ and $r_+$, the dashed red lines illustrate the match used for the inference of the composed rule $r_{180}$ (bottom); all non-isomorphic compositions of $r_{id} \bullet r_+$ and $r_+ \bullet r_{id}$ are listed in \ref{app:comp}}
\label{fig:modcomp}
\end{figure}%

\noindent\textbf{Application condition computations for composite rules. } Before diving into the details of this computation, it is worthwhile noting that due to the particularly simple structure of the condition $\overline{\ac{c}_{I_{+}}}$,
one may in fact determine the constraint-preserving application conditions $\overline{\ac{c}_{\bar{I}}}$ for each composite rule $(\bar{r}, \overline{\ac{c}_{\bar{I}}})$ that contributes to \eqref{eq:modCA} and~\eqref{eq:modCB}.
To this end, let us denote by $\langle 0\rangle$ and $\langle5\rangle$ the labels that mark carbon atoms on the input (i.e., left-hand side) of $r_{+}$ in the depiction of Figure~\ref{fig:modcomp} (top right part) over which the condition $\overline{\ac{c}_{I_{+}}}$ formulates the non-existence constraint of bonds (compare~\eqref{eq:acRplus}); again referring to Figure~\ref{fig:modcomp}, let the same labels mark the carbon atoms on the input (i.e., left-hand side) of $r_{id}$ (top left part of the figure) which carry the non-bonding condition for the rule $\bar{O}_{+}:=(r_{id}, \overline{\ac{c}_{I_{+}}})$ relevant to construct the observable $\hat{\bar{O}}_{+}$. As both $r_{+}$ and $r_{id}$ \emph{preserve vertices}, performing a DPO-type composition along an admissible match of these ``plain'' rules (such as the one depicted in Figure~\ref{fig:modcomp} as dashed lines from the output of $r_{id}$ to the input of $r_{+}$), it is possible to ``trace'' each vertex of the input of interfaces of the two ``plain'' rules to their images in the input interface $\bar{I}$ of the composite rule.

With these preparations, and applying (a typed version of) Lemma~\ref{lem:Bridges}, we find that the contribution of the condition $\overline{\ac{c}_{I_{+}}}$ of the rule $\bar{R}_{+}$ to the conditions $\overline{\ac{c}_{\bar{I}}}$ of the composite rules in either order of composition of $\bar{R}_{+}$ with $O_{+}:=(r_{id},\ac{true})$ or $\bar{O}_{+}:=(r_{id},\overline{\ac{c}_{I_{+}}})$ is simply a negative constraint expressing that the images in $\bar{I}$ (the input interface of the composite rule) of the carbon-type vertices of $I_{+}$ marked $\langle0\rangle$ and $\langle5\rangle$ must not be linked by any bond. Analogously, when considering compositions of $\bar{R}_{+}$ with $\bar{O}_{+}$ in either order, the contribution of the application condition of the rule $\bar{O}_{+}$ to the condition $\overline{\ac{c}_{\bar{I}}}$ of a composite rule is that the images in $\bar{I}$ of the carbon-type vertices marked $\langle0\rangle$ and $\langle5\rangle$ in the input interface of $r_{id}$ must not be linked. For instance, considering the rule composition depicted in Figure~\ref{fig:modcomp}, in which the ``plain'' rule labeled $r_{180}$ is obtained as a particular composition of the rule $r_{+}$ with the rule $r_{id}$, the contribution of $\overline{\ac{c}_{I_{+}}}$ as a condition on $r_{+}$ results in a constraint on the input interface $I_{180}$ of $r_{180}$ that the carbon vertices marked $\langle5\rangle$ and $\langle6\rangle$ in $I_{180}$ must not be linked by any bond, while for the case where we also have a condition $\overline{\ac{c}_{I_{+}}}$ on the input interface of $r_{id}$ (i.e., when computing the possible rule compositions of $\bar{R}_{+}$ with $\bar{O}_{+}$), we obtain an additional constraint that the carbon vertices marked $\langle0\rangle$ and $\langle5\rangle$ in $I_{180}$ must not be linked by any bond.

A tedious manual analysis of all possible rule compositions of $\bar{R}_{+}$ with $O_{+}$ either $\bar{O}_{+}$ in both possible orders reveals that
\begin{enumerate}[label=(\roman*)]
\item The matches resulting in the composite ``plain'' rules labeled ${\color{red}r_A}$, ${\color{red}r_B}$ and ${\color{red}r_C}$ in Table~\ref{tab:res} are all disqualified as admissible matches of $\bar{O}_{+}$ into $\bar{R}_{+}$ (i.e., due to the application conditions of the composite rules evaluating to $\ac{false}$), while they are in contrast admissible matches of $O_{+}$ into $\bar{R}_{+}$.
\item The case of composite rule $r_{180}$ in Table~\ref{tab:res} (depicted in Figure~\ref{fig:modcomp} to arise as a particular composite of ``plain'' rules $r_{+}$ with $r_{id}$) is particularly interesting, since while there exists one admissible match in each order of composition of $r_{+}$ and $r_{id}$ to yield a composite rule in the isomorphism class of $r_{180}$, the underlying admissible matches yield \emph{non-equivalent} composite rules when composing $\bar{R}_{+}$ and $\bar{O}_{+}$. In contrast, the respective composites of $\bar{R}_{+}$ and $O_{+}$ are in fact equivalent as rules with conditions.
\end{enumerate}

\subsubsection{Discussion}

In summary, as the analysis of the (constraint-preserving completions of) application conditions of the composite rules reveals, it is indeed not possible to utilize the \MOD{} algorithms in their present development state in order to compute commutators for the purpose of deriving pattern-counting observable moment-evolution equations, even though the ``plain'' rule compositions obtained via the \MOD{} operation $\bullet$ are in fact for the particular example considered almost exhaustive. The restriction of partial overlaps contributing when computing with $\bullet$ to spans of edge-reflecting $\cM$-morphism is for the example at hand found to partially emulate the semantics of the constraint-preserving application condition $\overline{\ac{c}_{I_{+}}}$ in the compositions of $\bar{R}_{+}$ with either $O_{+}$ or $\bar{O}_{+}$, yet it is in fact impossible to fully reproduce the correct rule composition semantics necessary. In particular, as the example of the rule $r_{180}$ in Table~\ref{tab:res} discussed in the previous section highlights, one indeed requires an implementation the full calculus of the restricted DPO-type rule composition in order to correctly classify the equivalence classes of composite rules that arise in either order of composition or two given rules with conditions (yielding ultimately the desired implementation of the computation of \emph{commutators} for differential semantics).

\subsubsection{Perspective: ``convenience constraints''}

Beyond the elementary necessity of a full-fledged implementation of restricted rewriting rule composition algorithms, our case study led to the discovery of a few additional avenues for future work worth exploring, all of which related to the possibilities offered by the ability to endow chemical rules with ``convenience constraints'' (i.e., constraints $\ac{c}_I$ on a chemical rule that are chosen on in addition to the constraint-preserving conditions necessary to ensure the chemical validity of the rule).  Returning once more to our case study, the aldol addition reaction is an interesting case as the specifics of how ``the'' chemistry will limit the number of possible compositions based on additional \emph{convenience constraints} (i.e., constraints in addition to the type graph that are based on the specifics of the chemical system to be analyzed). 
The rule as depicted in Figure~\ref{fig:aldol} is arguably too generic due to two reasons:
\begin{enumerate}[label=(\roman*)]
\item In the aldol addition reaction, any carbon atoms adjacent to an oxygen atom are usually constrained to have only a single oxygen atom neighbor.
Otherwise the rule would, e.g., also allow to match on carboxyl groups of molecules (i.e., carbon atoms with two oxygen atom neighbors).
\item Under realistic chemical conditions where an aldol addition takes place, it is very unlikely that a carbon atom is found to have two incident double bonds.
\end{enumerate}
Endowing the ``plain'' aldol addition rule $r_{+}$ in addition to the constraint-preserving application condition with a condition based upon these two convenience constraints, one finds (cf.\ Table~\ref{tab:res}) that this modified aldol addition rule possesses far fewer admissible matches into the rule $O_{+}$, i.e., only 6 (resp.\ 2) compositions for $r_{id} \bullet r_+$  (resp.\ $r_+ \bullet r_{id}$) remain valid as compositions of the aforementioned rules with conditions. Upon closer inspection of the data of Table~\ref{tab:res}, we find that each of the 2 aforementioned rules in $r_+ \bullet r_{id}$ are isomorphic to one of the 6 rules in $r_{id} \bullet r_+$, whence Table~\ref{tab:res} a tentative commutator computation would finally result in just 4 non-isomorphic rules (all with occurrence count 1), a remarkable reduction in complexity in view of the 30 non-trivial contributions in the unrestricted setting.

We are thus led to suspect that the ability offered by our novel framework of chemical rewriting as a restricted rewriting theory (and thus in particular the ability to freely endow rules with ``convenience constraints'') will not only pose an intriguing option to formally encode practical knowledge in organic chemistry, but might eventually prove quintessential in deriving meaningful differential semantics for organo-chemical reaction systems in the first place. 

\section{Conclusion and outlook}

Rewriting theories of DPO- and SqPO-type for rules with conditions over $\cM$-adhesive categories
are poised to provide a rich theoretical and algorithmic framework for modeling stochastic dynamical systems in the life sciences.
The main result of the present paper consists in the introduction of a \emph{rule algebra framework} that extends the pre-existing constructions~\cite{bp2018,nbSqPO2019,bdg2016}
precisely via incorporating the notion of conditions. %
The sophisticated \KAP{}~\cite{Boutillier:2018aa} and \MOD{}~\cite{Andersen_2016} bio-/organo-chemistry frameworks and related developments
have posed one of the main motivations for this work. %
We introduce in this paper the first-of-its-kind fully faithful encoding of bio- and organo-chemical rewriting systems in terms of our novel original universal theory of rewriting-based CTMC semantics. %
More specifically, we provide a formulation of chemistry as a restricted rewriting theory over certain $\cM$-adhesive categories of typed undirected multigraphs, thus in particular permitting to establish a rigorous and original CTMC theory for organic chemistry. The encoding as restricted rewriting theory will be beneficial also in the development of tracelet-based techniques~\cite{behr2019tracelets}, and is current work in progress.
In order to achieve a complete algorithmic implementation of ODEs for moments of pattern-counting observables, our fully-worked and non-trivial example based on the \MOD{} platform illuminates the missing technical ingredients, namely the implementation of compositions stemming from common subgraphs that are not induced and the implementation of no-edge constraints.

An intriguing perspective for future developments in categorical rewriting theory consists in developing a robust and versatile methodology
for the analysis of ODE systems of pattern-counting observables in stochastic rewriting systems.
While the results of this paper permit to formulate dynamical evolution equations for arbitrary higher moments of such observables,
in general cases (as illustrated in Section~\ref{sec:bcgr}) the non-closure of the resulting ODE systems remains a fundamental technical challenge.
In the \KAP{} literature, sophisticated conceptual and algorithmic approaches to tackle this problem have been developed such as refinements~\cite{danos2008rule,Danos2014},
model reduction techniques~\cite{Danos_2010} and stochastic fragments~\cite{ferethal00975861} (see also~\cite{bdg2019} for an extended discussion).
We envision that a detailed understanding of these approaches from within the setting of categorical rewriting and of rule algebra theory
could provide a very fruitful enrichment of the methodology of rewriting theory and of algorithmic cheminformatics alike.

\section*{Funding}
This work is supported by the Novo Nordisk Foundation grant NNF19OC0057834 and by the Independent Research Fund Denmark, Natural Sciences, grants DFF-0135-00420B and DFF-7014-00041.

\clearpage
\appendix

\section{Comparison with the ICGT 2020 conference paper version}\label{sec:confToExt} 

The present paper is an extended journal version of our ICGT 2020 conference paper~\cite{BK2020}, with additional materials implemented as follows:
\begin{itemize}
\item While~\cite{BK2020} already contained the general theory of rule algebras, their representations and the stochastic mechanics frameworks  for rewriting over $\cM$-adhesive categories with conditions in both DPO- and SqPO-semantics, the present paper provides in addition an important specialization of this theory to the setting of \emph{restricted rewriting} (Section~\ref{sec:rrt}). The specialization assumes a \emph{global structural constraint} on all objects over which the rewriting is to be performed, which is precisely the case in many important application examples (including in particular the bio- and organo-chemical rewriting theories featured in this paper). Building upon the notion of \emph{constraint-preserving conditions} as introduced by Habel and Pennemann~\cite{habel2009correctness}, we introduce a novel formalism of rule-algebraic calculus under the assumption of global constraints, which permit a critical improvement over the general variant in terms of the complexity of application constraints and rule compositions.
\item Another original result of~\cite{BK2020} was the encoding of the biochemistry platform language \KAP{}~\cite{danos2004computational,danos2004formal}  in terms of a rewriting theory with a certain type of structural constraints in SqPO-semantics. The present paper provides an extended review and comparison with the original \KAP{} encoding, highlighting several important technical points. Referring to Section~\ref{sec:bcgr} for further details, the key achievement of our rewriting-theoretic encoding is identified as a streamlining of the calculus of ODEs for the moments of \KAP{} pattern-count observables, which in the original formulation of \KAP{} required (despite the origins of \KAP{} as a stochastic rewriting formalism) a highly domain-specific set of constructions of a deeply intricate algorithmic nature. Via our novel theory, we demonstrate that \KAP{} is an instance of SqPO-type restricted rewriting theory, and thus the aforementioned ODEs may be derived in a transparent fashion from our uniform theory of CTMC semantics for rule-based systems (with restricted rewriting theory providing a succinct formulation of \KAP{} rules). %
\item In our conference paper~\cite{BK2020}, we had proposed that the algorithmic approach to organic chemistry as implemented in the \MOD{} cheminformatics platform~\cite{Andersen_2016} might give rise to yet another practically highly relevant instance of a restricted rewriting theory,thus promising to open novel possibilities in terms of CTMC calculus in this setting. %
The present paper achieves this vision (as a result of joint work with J.L.\ Andersen and D.\ Merkle, co-authors of this extended journal version),
providing the first-of-its-kind implementation of a theoretical framework for organo-chemical rewriting systems. %
Referring to Section~\ref{sec:ocgr} for further details, our novel encoding permits to specialize models for chemistry to be faithfully encoded as DPO-type restricted rewriting theories, %
and that the somewhat ad hoc definition of various rule composition operations implemented algorithmically in \MOD{} fall in fact
under the general umbrella of the standard DPO-type rule composition operation for rules with conditions. %
Crucially, our novel formulation of a specialized version of organic chemistry as a restricted rewriting theory over an $\cM$-adhesive category of typed undirected multigraphs permits to leverage our universal rule-algebraic CTMC theory in order to obtain a faithful encoding of \emph{stochastic dynamics} of organo-chemical reaction systems, which is an original result of this paper. %
As illustrated in Section~\ref{sec:ocgr} via a fully worked example, we are able to identify clearly which particular aspects of the \MOD{} framework will require some extensions (mostly in terms of rule composition and application condition algorithms) in future work
in order to realize an algorithmic implementation of \emph{differential semantics} for organo-chemical reaction systems. %
From a purely theoretical standpoint, the results of Section~\ref{sec:ocgr} may moreover be interpreted as a poster-example of a restricted rewriting theory with structural constraints of considerable complexity, yet for which (certain practically relevant classes of) rewriting rules with conditions may nevertheless be remarkably compactly presented in the constraint-preserving completion form that is used in restricted rule algebra calculus.
\end{itemize}

\section{Background material on adhesive categories and rewriting with conditions}\label{sec:appendixA}

As a reference for notational conventions and in order to recall some of the standard definitions necessary in the main text, we collect here some of the materials contained in our recent paper~\cite{behrRaSiR} for the readers' convenience.

\subsection{$\cM$-adhesive categories}\label{sec:MACapp}

\begin{definition}
An $\cM$-adhesive category~\cite{ehrig2010categorical} $(\bfC,\cM)$ is a category $\bfC$ together with a class of monomorphisms $\cM$ that satisfies the following properties:
\begin{enumerate}
\item $\bfC$ has pushouts and pullbacks along\footnotemark $\cM$-morphisms.
\item The class $\cM$ contains all isomorphisms and is stable under pushout, pullback and composition.
\item Pushouts along $\cM$-morphisms are $\cM$-van Kampen squares.
\end{enumerate}
\begin{minipage}[t]{0.6\linewidth}
The latter property entails that in a commutative diagram such as the one on the right where the bottom square is a pushout along an $\cM$-morphism, where the back and right faces pullbacks and where all vertical morphisms are in $\cM$, the bottom square is $\cM$-van Kampen if the following property holds: the top square is a pushout if and only if the front and left squares are pullbacks.
\end{minipage}%
\begin{minipage}[t]{0.4\linewidth}
\vspace{-1.5em}
\null\hfill\\[-\dimexpr\baselineskip+1.2em\relax]
\centering
\text{$\;$}\includegraphics{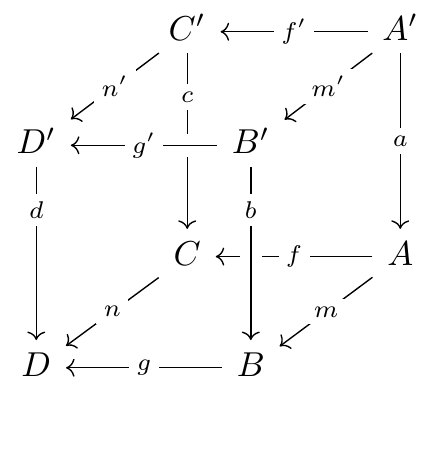}
\end{minipage}
\end{definition}
\footnotetext{Here, ``along'' entails that at least one of the two morphisms involved in the relevant (co-)span is in $\cM$.}%
Throughout the following definitions, let $(\bfC,\cM)$ be an $\cM$-adhesive category.
\begin{definition}
    $(\bfC,\cM)$ is said to be \emph{finitary}~\cite{GABRIEL_2014} if every object has only finitely many $\cM$-subobjects up to isomorphism.
\end{definition}
\begin{definition}
    $(\bfC,\cM)$ possesses an $\cM$-initial object $\mIO$~\cite{GABRIEL_2014} if for all objects $X\in \obj{\bfC}$ there exists a unique $\cM$-morphism $\iota_X:\mIO\hookrightarrow X$.
\end{definition}
\begin{definition}
    $(\bfC,\cM)$ possesses an \emph{epi-$\cM$-factorization}~\cite{habel2009correctness} if every morphism $f\in \mor{\bfC}$ factorizes as $f=m\circ e$ with $m\in \cM$ and with $e\in \epi{\bfC}$ an epimorphism, and such that this factorization is unique up to isomorphism.
\end{definition}
\begin{definition}
    $(\bfC,\cM)$ has \emph{$\cM$-effective unions} if for every cospan $(B\hookrightarrow D\hookleftarrow C)$ of $\cM$-morphisms that is the pushout of a span $(B\hookleftarrow A\hookrightarrow C)$, the following property holds: for every cospan $(B\hookrightarrow E\hookleftarrow C)$ whose pullback is given by $(B\hookleftarrow A\hookrightarrow C)$, the morphism $D\rightarrow E$ that exists by universal property of the pushout is in $\cM$.
\end{definition}

We next recall the notion of final pullback complements that is an important technical ingredient of the theory of SqPO-rewriting.

\begin{definition}\label{def:FPC}
    Let $(b,a)$ be a composable pair of morphisms in a category $\bfC$. Then a pair of morphisms $(c,d)$ is called a \emph{final pullback complement (FPC)}~\cite{Corradini_2006} if $(a,d)$ is the pullback of $(b,c)$, and if for every $(a\circ p,q)$ that is the pullback of $(b,r)$, there exists a morphism $s$ such that $r=c\circ s$ that is unique up to isomorphism.
    \[
        \includegraphics{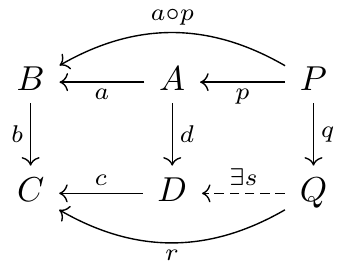}
    \]
\end{definition}

\begin{definition}
    The class of monomorphisms $\cM$ of $(\bfC,\cM)$ is said to be \emph{stable under FPCs}~\cite{behrRaSiR} if for every pair $(b,a)$ of composable $\cM$-morphisms the FPC $(c,d)$ (if it exists) is a pair of $\cM$-morphisms.
\end{definition}

\subsection{Concurrency and associativity theorems}\label{app:ACthms}

In the statements of the following two theorems, we always imply choosing concrete representatives of the relevant equivalence classes of rules with conditions in order to list the sets of admissible matches.

\begin{theorem}[Concurrency theorem~\cite{bp2018,nbSqPO2019,behrRaSiR}]\label{thm:concur}
    Let $\bfC$ be a category satisfying Assumption~\ref{as:main}, and let $\bT\in\{DPO,SqPO\}$. Then there exists a bijection $\varphi:A\xrightarrow{\cong}B$ on pairs of $\bT$-admissible matches between the sets $A$ and $B$,
    \begin{equation}
        \begin{aligned}
            A&=\{(m_2,m_1)\mid m_1\in \MatchGT{\bT}{R_1}{X_0}\,,; 
            m_2\in \MatchGT{\bT}{R_2}{X_1}\}\\
            \cong\quad 
            B&=\{(\mu_{21},m_{21})\mid \mu_{21}\in \MatchGT{\bT}{R_2}{R_1}\,,\; m_{21}\in \MatchGT{\bT}{R_{21}}{X_0}\}\,,
        \end{aligned}
        \end{equation}
    where $X_1=R_{1_{m_1}}(X_0)$ and $R_{21}=\compGT{\bT}{R_2}{\mu_{21}}{R_1}$ such that for each corresponding pair $(m_2,m_1)\in A$ and $(\mu_{21},m_{21})\in B$, it holds that
        \begin{equation}
            R_{21_{m_{21}}}(X_0) \cong
            R_{2_{m_2}}(R_{1_{m_1}}(X_0))\,.
        \end{equation}
\end{theorem}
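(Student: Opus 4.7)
The plan is to prove the statement in two layers: first at the level of ``plain'' rules (ignoring conditions), then lift to the setting with conditions. The plain-rule version is essentially classical for DPO~\cite{ehrig:2006fund} and was established for SqPO in~\cite{nbSqPO2019}, so the key original content here is verifying that admissibility of matches with respect to application conditions is preserved by $\varphi$ in both directions.

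For the forward map, starting from $(m_1,m_2)\in A$, I would first form the direct derivation $X_1 = R_{1_{m_1}}(X_0)$, obtaining the comatch $m_1^{*}:O_1\hookrightarrow X_1$, and then take the pullback of $(m_1^{*}, m_2)$ in $\bfC$ to produce the overlap $M_{21}$ together with $\cM$-morphisms $M_{21}\hookrightarrow O_1$ and $M_{21}\hookrightarrow I_2$. Standard pushout/pullback decomposition in $\cM$-adhesive categories (plus stability of $\cM$-morphisms under pullback and FPC) then yields the diagram~\eqref{eq:defRcomp} defining $R_{21}$ on plain rules, together with a canonical match $m_{21}:I_{21}\hookrightarrow X_0$ obtained by composing the induced span down to $X_0$. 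Conversely, given $(\mu_{21},m_{21})\in B$, the direct derivation $(R_{21})_{m_{21}}(X_0)$ decomposes into two stages by restricting along $I_1\hookrightarrow I_{21}$ and $I_2\hookrightarrow N_{21}$, and this decomposition recovers $m_1$ and $m_2$. That these two constructions are mutually inverse up to isomorphism is a diagram chase using $\cM$-van Kampen squares (in the DPO case) and uniqueness of FPCs (in the SqPO case).

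The main obstacle, and the genuinely new step, is the equivalence of admissibility with respect to application conditions. I need to show that $(m_2,m_1)\in A$, i.e.\ $m_1\vDash\ac{c}_{I_1}$ and $m_2\vDash\ac{c}_{I_2}$, holds if and only if $m_{21}\vDash\ac{c}_{I_{21}}$, where by Definition~\ref{def:Rcomp} the composite condition reads
\begin{equation*}
\ac{c}_{I_{21}} = \Shift(I_1\hookrightarrow I_{21},\ac{c}_{I_1}) \,\land\, \Trans(N_{21}\leftharpoonup I_{21}, \Shift(I_2\hookrightarrow N_{21},\ac{c}_{I_2}))\,.
\end{equation*}
The first conjunct will be handled by the defining property of $\Shift$: since $m_{21}$ factors through $I_{21}$ as $m_1 = m_{21}\circ (I_1\hookrightarrow I_{21})$ in the overall pushout/pullback diagram, $m_{21}\vDash \Shift(I_1\hookrightarrow I_{21},\ac{c}_{I_1}) \Leftrightarrow m_1\vDash\ac{c}_{I_1}$. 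For the second conjunct, I would use the compatibility property~\eqref{eq:ShiftTransCompatibility} of Theorem~\ref{thm:STcompComp}, applied to the sub-derivation along $r_1$ from $I_{21}$: this identifies $\Trans(N_{21}\leftharpoonup I_{21}, \Shift(I_2\hookrightarrow N_{21},\ac{c}_{I_2}))$ with $\Shift$ of $\ac{c}_{I_2}$ along the composite $I_2\hookrightarrow X_1$ induced by $m_2$, modulo admissibility, whence satisfaction at $m_{21}$ is equivalent to $m_2\vDash\ac{c}_{I_2}$.

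The delicate technical point will be ensuring the ``modulo admissibility'' qualifier $\dot{\equiv}$ in~\eqref{eq:ShiftTransCompatibility} does no damage: this is precisely fine because we only need the equivalence to hold at matches $m_{21}$ that are already $\bT$-admissible for the plain composite rule, which is guaranteed by the plain-rule bijection established in the first step. Once both layers align, the isomorphism $R_{21_{m_{21}}}(X_0)\cong R_{2_{m_2}}(R_{1_{m_1}}(X_0))$ follows from the plain-rule concurrency statement, since adding conditions only restricts the set of admissible matches without altering the outputs of direct derivations along admissible matches.
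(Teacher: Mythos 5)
Your two-layer strategy is sound, and it is in fact the route taken by the sources the paper defers to: the paper states Theorem~\ref{thm:concur} without proof, recalling it from~\cite{behrRaSiR} (the plain-rule DPO case being classical~\cite{ehrig:2006fund}, the SqPO case from~\cite{nbSqPO2019}), and that proof proceeds exactly as you propose --- plain-rule concurrency via the pullback overlap $M_{21}$ and decomposition of derivation diagrams, followed by an equivalence of condition-satisfaction under $\varphi$. Two remarks on the execution. First, in the forward direction the ingredient that makes the pushout $N_{21}=\pO{I_2\hookleftarrow M_{21}\hookrightarrow O_1}$ embed into $X_1$ via an $\cM$-morphism is $\cM$-effective unions, not stability of $\cM$ under pullbacks and FPCs; since you delegate this layer to the cited results this is only a matter of naming the right hypothesis. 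Second, and more substantively, your appeal to the compatibility property~\eqref{eq:ShiftTransCompatibility} for the second conjunct of $\ac{c}_{I_{21}}$ does not typecheck: instantiated at the sub-derivation of $r_1$ from $I_{21}$, its right-hand side is $\Trans(N_{21}\leftharpoonup I_{21},\Shift(N_{21}\hookleftarrow O_1,\ac{c}_{O_1}))$, a condition shifted from $O_1$, whereas your conjunct shifts $\ac{c}_{I_2}$ along $I_2\hookrightarrow N_{21}$, which does not factor through $O_1$; moreover ``$\Shift$ of $\ac{c}_{I_2}$ along $I_2\hookrightarrow X_1$'' would be a condition over $X_1$ and so cannot literally be identified with a condition over $I_{21}$. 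The step you want is nevertheless true and follows from the defining properties alone (Definition~\ref{def:shiftTrans}): the diagram~\eqref{eq:defRcomp} exhibits $m_{21}$ as an admissible match of the plain rule $(N_{21}\leftharpoonup I_{21})$ with derivate $X_1$ and comatch $(X_1\hookleftarrow N_{21})$, so~\eqref{eq:TransProps} gives $m_{21}\vDash\Trans(N_{21}\leftharpoonup I_{21},\Shift(I_2\hookrightarrow N_{21},\ac{c}_{I_2}))$ iff $(X_1\hookleftarrow N_{21})\vDash\Shift(I_2\hookrightarrow N_{21},\ac{c}_{I_2})$, which by the definition of $\Shift$ and the factorization $m_2=(X_1\hookleftarrow N_{21})\circ(N_{21}\hookleftarrow I_2)$ holds iff $m_2\vDash\ac{c}_{I_2}$. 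With that substitution your argument goes through; in particular your observation that the $\dot{\equiv}$ caveat is harmless because all equivalences are only ever evaluated at admissible matches is exactly right, and the same computation supplies the check, left implicit in your sketch, that the constructed overlap satisfies $\ac{c}_{I_{21}}\not{\!\!\dot{\equiv}}\,\ac{false}$ and hence genuinely lies in $\RMatchGT{\bT}{R_2}{R_1}$.
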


\begin{theorem}[Associativity of rule compositions~\cite{bp2018,nbSqPO2019,behrRaSiR}]\label{thm:assocR}
    Let $\bfC$ be a category satisfying Assumption~\ref{as:main}. let $R_1,R_2,R_3\in \LinAc{\bfC}$ be linear rules with conditions, and let $\bT\in\{DPO,SqPO\}$. Then there exists a bijection $\varphi:A\xrightarrow{\cong} B$ of sets of pairs of $\bT$-admissible matches $A$ and $B$, defined as
    \begin{equation}
        \begin{aligned}
            A&:=\{(\mu_{21},\mu_{3(21)})\mid \mu_{21}\in 
            \MatchGT{\bT}{R_2}{R_1}\,,\; \mu_{3(21)}\in \MatchGT{\bT}{R_3}{R_{21}}\}\\
            B&:=\{(\mu_{32},\mu_{(32)1})\mid 
            \mu_{32}\in\MatchGT{\bT}{R_3}{R_2}\,,\;
             \mu_{(32)1}\in \MatchGT{\bT}{R_{32}}{R_1}\}\,,
        \end{aligned}
        \end{equation}
        where $R_{21}=\compGT{\bT}{R_2}{\mu_{21}}{R_1}$ and $R_{32}=\compGT{\bT}{R_3}{\mu_{32}}{R_2}$, such that for each corresponding pair $(\mu_{21},\mu_{3(21)})\in A$ and %
        $\varphi(\mu_{21},\mu_{3(21)})=(\mu_{32}',\mu_{(32)1}')\in B$, 
        \begin{equation}
            \compGT{\bT}{R_3}{\mu_{3(21)}}{\left(\compGT{\bT}{R_2}{\mu_{21}}{R_1}\right)}\cong
            \compGT{\bT}{\left(\compGT{\bT}{R_3}{\mu_{32}'}{R_2}\right)}{\mu_{(32)1}'}{R_1}\,.
        \end{equation}
    In this particular sense, the composition operations $\compGT{\bT}{.}{.}{.}$ are \textbf{associative}.
\end{theorem}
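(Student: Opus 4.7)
The plan is to derive the three assertions as direct corollaries of Theorem~\ref{thm:rcct} combined with the general (unrestricted) rule-algebraic results (Theorems~\ref{thm:RAmain}, \ref{thm:canrep}, \ref{thm:concur}, \ref{thm:assocR}). The overarching strategy is that Theorem~\ref{thm:rcct}(i) provides the ``translation'' between conditions of rules obtained by composing constraint-guaranteeing completions and constraint-preserving completions of composite rules, while (ii)--(iv) provide the compatible bijections of admissible matches; together these let every restricted computation be re-expressed in unrestricted form on the sub-vector space $\hat{\bar{\bfC}}\subseteq\hat{\bfC}$.

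For \emph{(i)}, unitality is a direct check: a $\bT$-admissible match of $R_{\mIO}=(\mIO\hookleftarrow\mIO\hookrightarrow\mIO,\ac{true})$ into any $R\in\LinAc{\bfC}$ (and vice versa) is the trivial $\cM$-span with pushout the identity, so the composite rule is $\overline{\sim}$-equivalent to $R$ and the shifted global constraint $\Shift(I\hookleftarrow\mIO,\ac{c}_{\mIO})$ is unchanged. For associativity, I would start from the bijection of match-pair sets in Theorem~\ref{thm:assocR} (applied to representative rules) and show it restricts to a bijection between the subsets of pairs where the restricted matches $\overline{\RMatchGT{\bT}{\cdot}{\cdot}}$ are used at each stage. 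Concretely, Theorem~\ref{thm:rcct}(i) expresses the constraint-preserving completion of a composite condition in terms of the constraint-preserving completions of its factors, so the characterising predicate ``$\Shift(I_{\bullet}\hookleftarrow\mIO,\ac{c}_{\mIO})\not\equiv\ac{false}$'' depends only on the input interface of the composite rule, which is invariant under the associativity isomorphism. Passing to $\overline{\sim}$-classes via $\bar{\delta}(\cdot)$ then yields the associativity of $\rrap{\bT}{}{}$.

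For \emph{(ii)}, the representation-homomorphism property $(a)$ is obtained by evaluating $\bar{\rho}^{\bT}_{\bfC}(\bar{\delta}(R_2))\bar{\rho}^{\bT}_{\bfC}(\bar{\delta}(R_1))\ket{\overline{X}}$ as a double sum over admissible matches into $\overline{X}$ and into $\overline{R}_{1_{\bar{m}_1}}(\overline{X})$, and then applying Theorem~\ref{thm:rcct}(iv) to biject this set with the pairs $(\bar{\mu},\bar{m}_{21})$ indexing the right-hand side. The crucial point is that the isomorphism $\gamma$ of Theorem~\ref{thm:rcct}(iv) guarantees isomorphic outcome objects, so the two sums produce the same basis-vector coefficients. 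Property $(b)$ follows because $\overline{R}_{\mIO}=R_{\mIO}$ admits a unique trivial match into every $\overline{X}\in\obj{\bar{\bfC}}$ producing $\overline{X}$ itself.

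For \emph{(iii)}, the first equality is an instance of the unrestricted representation property (Theorem~\ref{thm:canrep}) applied to $\widetilde{R}_1,\widetilde{R}_2$. The third equality is \emph{(ii)} of the current theorem. The nontrivial middle equality is where Theorem~\ref{thm:rcct}(ii) and (iii) are used: (ii) guarantees that two-step rewritings of $\overline{X}$ using $\widetilde{R}_1$ then $\widetilde{R}_2$ are in outcome-preserving bijection with two-step rewritings using $\overline{R}_1$ then $\overline{R}_2$ (so the $\widetilde{}$-computations on $\hat{\bar{\bfC}}$ can be faithfully transcribed into $\overline{\cdot}$-computations), while (iii) relates the composite-rule representation evaluated on $\overline{X}$ with the restricted composite sum. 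The main obstacle will be carefully tracking that no ``spurious'' contributions appear or vanish in the passage from unrestricted to restricted sums; this is precisely why the restriction $\overline{\RMatchGT{\bT}{R_2}{R_1}}$ (excluding compositions whose composite input interface violates $\ac{c}_{\mIO}$) is imposed in Definition~\ref{def:RRA}, and the argument is that such excluded compositions contribute $0$ to $\bar{\rho}^{\bT}_{\bfC}(\cdot)\ket{\overline{X}}$ because no admissible match into a constraint-satisfying $\overline{X}$ exists for them, so they can be silently dropped without affecting equality in $End_{\bK}(\hat{\bar{\bfC}})$.
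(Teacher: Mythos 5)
Your proposal does not prove the statement at hand. The target is Theorem~\ref{thm:assocR}: the construction of a bijection $\varphi:A\to B$ between pairs of $\bT$-admissible matches for triple compositions of rules \emph{with conditions}, together with the isomorphism $\compGT{\bT}{R_3}{\mu_{3(21)}}{(\compGT{\bT}{R_2}{\mu_{21}}{R_1})}\cong\compGT{\bT}{(\compGT{\bT}{R_3}{\mu_{32}'}{R_2})}{\mu_{(32)1}'}{R_1}$. What you have written instead is a proof sketch for the three items of Theorem~\ref{thm:rrap} (unitality and associativity of the \emph{restricted} product $\rrap{\bT}{}{}$, the restricted representation property, and the restricted/unrestricted comparison on $\hat{\bar{\bfC}}$) --- and indeed your sketch closely matches the paper's stated proof of \emph{that} theorem, which combines Theorem~\ref{thm:rcct} with Theorems~\ref{thm:RAmain}, \ref{thm:canrep}, \ref{thm:concur} and~\ref{thm:assocR}. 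Worse, read as an argument for Theorem~\ref{thm:assocR}, your proposal is circular: you explicitly list Theorem~\ref{thm:assocR} among the ``general rule-algebraic results'' you invoke, so the statement to be proved appears as a hypothesis of its own proof.

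For the record, the paper does not reprove Theorem~\ref{thm:assocR} either; it is recalled from~\cite{behrRaSiR} (with the plain-rule cases from~\cite{bp2018,nbSqPO2019}). A genuine proof has two components that are entirely absent from your text: (a) the bijection between triple-composition match pairs at the level of the underlying plain rules, which requires an intricate diagram chase in the $\cM$-adhesive setting (pushouts, pushout complements, FPCs for the SqPO case, stability properties of $\cM$, and van-Kampen-type arguments to show the two bracketings assemble into the same large commutative diagram); and (b) the verification that the application conditions computed via~\eqref{eq:acRcomp} under the two bracketings are $\dot{\equiv}$-equivalent, which rests on the compositionality of $\Shift$~\eqref{eq:compShift}, the compositionality of $\Trans$~\eqref{eq:compTrans}, and the $\Shift$/$\Trans$ compatibility~\eqref{eq:ShiftTransCompatibility} of Theorem~\ref{thm:STcompComp}; in particular one must also check that admissibility ($\ac{c}_{I_{21}}\not{\!\!\dot{\equiv}}\,\ac{false}$ at each stage) corresponds correctly across the bijection. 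Neither component can be obtained as a corollary of Theorem~\ref{thm:rcct}, whose restricted concurrency statements are themselves downstream of the concurrency and associativity theorems, so the direction of your derivation would have to be reversed.
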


\section{Proofs}
\subsection{Proof of Theorem~\ref{thm:canrep}}\label{app:proofCanrep}

The statement of the theorem is equivalent to the following two properties:
    \begin{equation*}
        \begin{aligned}
        (i)\;&  &      \canRep{\bT}{\delta(R_{\mIO})}&=Id_{End_{\bR}(\hat{\bfC})}\\
        (ii)\;& &\forall R_1,R_2\in\LinEq{\bfC}:\quad
        \canRep{\bT}{\delta(R_{2})}\canRep{\bT}{\delta(R_{1})}&=
        \canRep{\bT}{\rap{\bT}{\delta(R_{2})}{\delta(R_{1})}}\,.
        \end{aligned}
    \end{equation*}
    By linearity, it suffices to verify these properties on an arbitrary basis vector $\ket{X}\in\hat{\bfC}$. For $(i)$, it suffices to verify that
    \[
        \canRep{\bT}{\delta(R_{\mIO})}\ket{X}=\sum_{m\in\MatchGT{\bT}{R_{\mIO}}{X}}\ket{R_{\mIO_m}(X)}=\ket{X}\,.
    \]
    Property $(ii)$ is a consequence of Theorem~\ref{thm:concur} (the Concurrency Theorem):
    \begin{align*}
        \canRep{\bT}{\delta(R_{2})}\canRep{\bT}{\delta(R_{1})}\ket{X}&=
        \sum_{m_1\in\MatchGT{\bT}{R_{1}}{X}}
        \sum_{m_2\in\MatchGT{\bT}{R_{2}}{R_{1_{m_1}}(X)}}\ket{R_{2_{m_2}}(R_{1_{m_1}}(X))}\\
        &=
        \sum_{\mu\in\MatchGT{\bT}{R_2}{R_1}}
        \sum_{m_{21}\in\MatchGT{\bT}{R_{2_{\mu}1}}{X}}
        \ket{R_{2_{\mu}1_{m_{21}}}(X)}\,.
    \end{align*}

\subsection{Proof of Theorem~\ref{thm:CTMCs}}
\label{sec:CTMCproofsApp}

\paragraph{Ad 1.:} It suffices to verify that direct derivations along a rule $R$ of the relevant form occurring in the two types of observables from any object $X$ satisfy $R_m(X)\cong X$. But this follows directly from the respective definitions of direct derivations.

\paragraph{Ad 2. \& 3.:} It again suffices to verify these properties on basis elements $\ket{X}$ of $\hat{\bfC}$, and for generic $R\in\LinEq{\bfC}$. By definition,
\begin{equation}
\bra{}\canRep{\bT}{\delta(R)}\ket{X}=\sum_{m\in \MatchGT{\bT}{R}{X}}\underbrace{\braket{}{R_m(X)}}_{=1_{\bR}}
=\vert \MatchGT{\bT}{R}{X}\vert\,.
\end{equation}
In both cases of semantics, a candidate match of $R$ into $X$ must satisfy the application condition. In the DPO case, in addition the relevant pushout complement must exist. Combining these facts allows to verify the formulae for $\jcOp{.}$.

\paragraph{Ad~4.:} The proof is straightforward generalization of the corresponding statement for the case of rewriting rules without conditions~\cite{bp2019-ext,nbSqPO2019}. Following standard continuous-time Markov chain (CTMC) theory~\cite{norris}, one may verify that the linear operator $\cH$ has a strictly negative coefficient diagonal contribution $\jcOp{H}$, a non-negative coefficient off-diagonal contribution $H$, thus $\cH$ satisfies $\bra{}\cH=0$. Since in addition a given $X\in\obj{\bfC}_{\cong}$ may be rewritten via direct derivations along the rules of the transition set only in finitely many ways, in summary $\cH$ fulfills all requirements to qualify as a conservative and stable $Q$-matrix (i.e., an infinitesimal generator) of a CTMC (cf.\ \cite{nbSqPO2019} for further  details).

\subsection{Proof of Lemma~\ref{lem:Bridges}}\label{app:proof-lem-bridges}

Consider first the specialization of the algorithmic definition of the $\Shift$ operation as provided in Theorem~\ref{thm:STdefns} the case of a (non-nested) negative application condition, i.e., $\Shift(X{\color{blue}\hookrightarrow Y}, \neg\exists (X{\color{red}\hookrightarrow N}))$, utilizing~\eqref{eq:ShiftAlgo-iter}, \eqref{eq:ShiftAlgo-triv} and~\eqref{eq:ShiftAlgo-neg}:
\begin{equation}
\vcenter{\hbox{\ti{lemBRa}}}\qquad 
\vcenter{\hbox{$\begin{array}{rl}
\Shift(X{\color{blue}\hookrightarrow Y}, \neg\exists (X{\color{red}\hookrightarrow N}))
&:= 
\bigwedge\limits_{\substack{(n',x',y')\in \cM^{\times\:3}\\ n'\circ x'= n\land y'\circ x'= y}}
\neg \exists({\color{blue}Y}{\color{red}\xhookrightarrow{\bar{n}} N_{X'}})\\
&\quad\text{with }{\color{red}N_{X'}}:= \pO{{\color{red}N}{\color{h1color}\xhookleftarrow{n'}X'\xhookrightarrow{y'}}{\color{blue}Y}}
\end{array}$}}
\end{equation}
Specializing this formula further to the setting of the statement of the lemma, i.e., for $X=A_1+A_2$, $N=A$ (with the additional special property that $A$ possesses an edge $e\in E_A$ such that $A\setminus e = A_1+A_2$), and for $Y=B_1+B_2$, one finds
\begin{equation}
\vcenter{\hbox{\ti{lemBRb}}}\qquad 
\vcenter{\hbox{$\begin{array}{rl}
&\Shift(A_1+A_2{\color{blue}\hookrightarrow B_1+B_2}, \neg\exists (A_1+A_2{\color{red}\hookrightarrow A}))\\
&\quad:= 
\bigwedge\limits_{\substack{(a',x',b')\in \cM^{\times\:3}\\ a'\circ x'= a\land b'\circ x'= b}}
\neg \exists({\color{blue}B_1+B_2}{\color{red}\xhookrightarrow{\bar{a}} N_{X'}})\\
&\qquad\text{with }{\color{red}N_{X'}}:= \pO{{\color{red}A}{\color{h1color}\xhookleftarrow{a'}X'\xhookrightarrow{b'}}{\color{blue}B_1+B_2}}
\end{array}$}}
\end{equation}
But since $A_1+A_2$ and $A$ only differ by a single edge $e$ (i.e., by a ``bridge''), it is straightforward to verify that the (up to span-isomorphisms) unique triple of $\cM$-morphisms $(a',x',b')\in \cM^{\times\:3}$ that satisfies $a'\circ x'= a$ and $b'\circ x'= b$ is $(a,id_{A_1+A_2},b)$, from which the claim follows.

\section{Details on the symbolic solution to the observable average counts in Example~\ref{ex:ugModel}}\label{app:se}

The ODE system of Example~\ref{ex:ugModel} may be solved in closed form as follows:
\begin{equation}
\begin{aligned}
\langle O_{\bullet}\rangle(t)&=\tfrac{\nu_{+}}{\nu_{-}} \left(1-e^{-t \nu_{-}}\right)\\
\langle O_{\bullet\vert\bullet}\rangle(t)&=\tfrac{\nu_{+}^2 e^{-\alpha  t} }{2 \alpha  \beta  \lambda  \nu_{-}^2}\left(\alpha  \beta  \varepsilon_{-} e^{\lambda  t}+2 \varepsilon_{+} \nu_{-}^2-2 \alpha  \kappa  \lambda  e^{\beta  t}+\beta  \lambda  \omega  e^{\alpha  t}\right)\\
\langle O_{\bullet\!-\!\bullet}\rangle(t)&=\tfrac{\varepsilon_{+} \nu_{+}^2 e^{-\alpha  t}}{2 \alpha  \beta  \lambda  \nu_{-}^2}\left(\alpha  \beta  \
e^{\lambda  t}-2 \alpha  \lambda  e^{\beta  t}+\beta  \lambda  \
e^{\alpha  t}-2 \nu_{-}^2\right)\\
\alpha&=\varepsilon_{-}+\varepsilon_{+}+2 \nu_ {-}\,,\;
 \beta=\varepsilon_{-}+\varepsilon_{+}+\nu_ {-}\\
\kappa&=\varepsilon_{-}+\nu_ {-}\,,\;
\lambda=\varepsilon_{-}+\varepsilon_{+}\,,\;
\omega=\varepsilon_{-}+2 \nu_ {-}\,.
\end{aligned}
\end{equation}
In particular, one may provide asymptotic formulae for $t\to\infty$:
\begin{equation}
\begin{aligned}
\langle O_{\bullet}\rangle(t)&\xrightarrow{t\to\infty}\tfrac{\nu_{+}}{\nu_{-}}\\
\langle O_{\bullet\vert\bullet}\rangle(t)&\xrightarrow{t\to\infty}
\tfrac{\nu_ {+}^2 (\varepsilon_{-}+2 \nu_{-})}{2 \nu_ {-}^2 (\varepsilon_{-}+\varepsilon_{+}+2 \nu_{-})}\\
\langle O_{\bullet\!-\!\bullet}\rangle(t)&\xrightarrow{t\to\infty}\tfrac{\varepsilon_{+} \nu_ {+}^2}{2 \nu_ {-}^2 (\varepsilon_{-}+\varepsilon_{+}+2 \nu_{-})}\,.
\end{aligned}
\end{equation}

\section{Technical details of typesetting \MOD{} rules}\label{app:ME}

For the interested readers, the following code may be used in either a standalone instance
or via the \href{https://cheminf.imada.sdu.dk/mod/}{live playground} of \MOD{}~\cite{Andersen_2016} in order to reproduce the graphics in Figure~\ref{fig:formoseRules}.
Note that since \MOD{} employs the traditional ``left-to-right'' convention for rules, the input and output patterns are given as ``left'' and ``right'', respectively.
\begin{python}
aldolAdd = ruleGMLString("""
rule [
        ruleID "Aldol Addition ->"
        left [
                edge [ source 1 target 2 label "=" ]
                edge [ source 2 target 3 label "-" ]
                edge [ source 3 target 4 label "-" ]
                edge [ source 5 target 6 label "=" ]
        ]
        context [
                node [ id 1 label "C" ]
                node [ id 2 label "C" ]
                node [ id 3 label "O" ]
                node [ id 4 label "H" ]
                node [ id 5 label "O" ]
                node [ id 6 label "C" ]
        ]
        right [
                edge [ source 1 target 2 label "-" ]
                edge [ source 2 target 3 label "=" ]
                edge [ source 5 target 6 label "-" ]

                edge [ source 4 target 5 label "-" ]
                edge [ source 6 target 1 label "-" ]
        ]
]
""")
# Printing of the rule:
aldolAdd.print()
\end{python}

\section{Composition Counts}
\label{app:count}

We present in Table~\ref{tab:res} a collection of data generated via performing rule composition operations of the aldol addition rule $r_{+}$ with the rule $r_{id}$ as defined in Figure~\ref{fig:formoseRules}, both in automated fashion via the composition operation $\bullet$ of the \MOD{} framework (cf.\ \ref{app:modcode} for further details) and via a manual computation, the latter employing the restricted DPO-type rule-algebraic composition operation of Definition~\ref{def:RRA}. 

\begin{table}[h!]
\centering
\begin{tabular}{@{}lcccc@{}}
\toprule
Rule & $|r_+ \bullet r_{id}|$ & $|r_{id} \bullet r_+|$ & C1 & C2 \\
\midrule
$r_{+}$       &                      &        1              &   &\\
$r_{2}$       &         1             &       1               &  &\\
$r_{7}$       &         1             &       1               & X &\\
$r_{12}$      &                      &        1              &  & X\\
$r_{18}$      &                      &        1              &  &\\
$r_{29}$      &                      &        1              & X & X\\
$r_{42}$      &                      &        1              &  & X\\
$r_{47}$      &                      &        1              & X & X\\
$r_{68}$      &                      &        1              & X &\\
$r_{73}$      &                      &        1              & X &\\
$r_{79}$      &                      &        1             &  &\\
$r_{89}$      &                      &        1              &  &\\
$r_{94}$      &                      &        1              & X & X\\
$r_{107}$     &        1              &       1               & X &\\
$r_{120}$     &                      &        1              & X &\\
$r_{180}$     &        1              &       1               &  &\\
$r_{185}$     &        1              &       1               & X &\\
$r_{191}$     &        1              &       1               & X & X\\
$r_{234}$     &        1              &                      & X & \\
$r_{252}$     &        1              &                      & X & \\
$r_{258}$     &        1              &                      & X & \\
\arrayrulecolor{red}
  \midrule
\arrayrulecolor{black}
${\color{red}r_{A}}$     &           1           &                      & X &\\
${\color{red}r_{B}}$     &           1           &                     & X &\\
${\color{red}r_{C}}$     &           1          &                      & X &\\
\bottomrule
\end{tabular}
\caption{Overview of rules resulting from composition of $r_{id}$ with $r_+$  (as in Figure~\ref{fig:formoseRules}) and vice versa, and annotated by isomorphism classes of composite rules. Each row represents a unique composed rule up to isomorphism, with the first column being the name of the rule.
The second and third column indicates the number of $\cM$-spans that result in the corresponding rule (with empty entries encoding 0 occurrences). The last three rules listed (highlighted in {\color{red}red}) are rules not found by the \MOD{} operator $\bullet$, yet which were computed manually as contributions to the full DPO-type rule composition operation. An X in the the fourth column (C1) indicates that the composed rule violates convenience constraint 1 that disallows carbons with tow incident double bonds. The fifth column (C2) similarly indicates violation of convenience constraint 2, that the rule contains carbons with more than one adjacent oxygen.
Auto-generated depictions of all the compositions can be found in \ref{app:comp}.}
\label{tab:res}
\end{table}

\section{Automatically Inferred Compositions}
\label{app:comp}
\subsection{Compositions, $r_{id} \bullet r_+$}

\gdef\aligntemp{}%
\foreach \i in {5,6,...,22}{%
\ifnum \i<22
	\xappto\aligntemp{\vcenter{\hbox{\miScaled{\i}{0.7}}}  \noexpand\\ }%
\else
	\xappto\aligntemp{\vcenter{\hbox{\miScaled{\i}{0.7}}}  \noexpand }%
\fi
}%

{\allowdisplaybreaks
\begin{subequations}
\begin{gather}
\aligntemp
\end{gather}
\end{subequations}
}

\subsection{Compositions, $r_+ \bullet r_{id}$}

\gdef\aligntemp{}%
\foreach \i in {23,24,...,31}{%
\ifnum \i<31
	\xappto\aligntemp{\vcenter{\hbox{\miScaled{\i}{0.7}}}  \noexpand\\ }%
\else
	\xappto\aligntemp{\vcenter{\hbox{\miScaled{\i}{0.7}}}  \noexpand }%
\fi
}%

{\allowdisplaybreaks
\begin{subequations}
\begin{gather}
\aligntemp
\end{gather}
\end{subequations}
}

\newpage

\section{\MOD{} Python Code Example}
\label{app:modcode}

In Figure~\ref{fig:code} we present a Python code fragment that illustrates how the composed rules $r_+ \bullet r_{id}$ and $r_{id} \bullet r_+$ were calculated with the \MOD{} framework.
We note that a range of composition operators are supported in \MOD{}.
The operator chosen for the example in this paper is the most generic one currently available.
As empty overlaps are excluded with this operator for purely technical reasons, we explicitly add the rules stemming from the parallel composition to the result containers.
Note, that \MOD{} contains a function to check for isomorphic rules.
Furthermore, \MOD{} supports several other convenience methods, e.g., to print DPO diagrams or rule composition diagrams (cf.\ Figure~\ref{fig:modcomp} and \ref{app:comp}).
\begin{figure}
\begin{python}
[...]

def compose(r1: Rule, r2: Rule, rc: RCEvaluator) -> Tuple[CompRes, CompRes]:
        comp = rcCommon(connected=False, maximum=False)
        res12 = checkRules(
                rc.eval(rcExp([
                        r1 *comp* r2,
                        r1 *rcParallel* r2 
                ])))
        res21 = checkRules(
                rc.eval(rcExp([
                        r2 *comp* r1,
                        r2 *rcParallel* r1
                ])))
        return res12, res21
        
r1 = aldolAdd_F_id
r2 = aldolAdd_F

res = compose(r1, r2, rc)

[...]
\end{python}
\caption{Python code to compute the composed rules with \MOD{}.
\texttt{comp}: defined to be the most general composition operator with non-empty overlap.
\texttt{checkRules}: post-processing to filter our non-chemical rules based on valence constraints.
Note, that \MOD{} has the feature to easily check for rules being isomorphic.
Furthermore, \MOD{} allows to easily print rules, DPO diagrams, as well as rule-composition diagrams (cf.\ Figure~\ref{fig:modcomp} and \ref{app:comp}).
} 
\label{fig:code}   
\end{figure}

\end{document}